\newcommand{\ignore}[1]{}
\newcommand{\SetN}{\mathbb{N}}
\newcommand{\SetZ}{\mathbb{Z}}
\newcommand{\SetR}{\mathbb{R}}
\newcommand{\card}[1]{\mthempty{\argint{\vert}{#1}{\vert}}}
\newcommand{\LTL}{\mthfun{LTL}\xspace}
\newcommand{\GRone}{\mthfun{GR(1)}\xspace}
\def\TEMPORAL#1{\mbox{\small\boldmath$\mathbf{#1}$}}
\def\ltlnext{\TEMPORAL{X}}
\def\sometime{\TEMPORAL{F}} 
\def\always{\TEMPORAL{G}}
\def\until{\,\TEMPORAL{U}\,}
\def\Nat{\mathbb{N}}
\newcommand{\Ag}{\mthset{N}}
\newcommand{\Ac}{\mthset{Ac}}
\newcommand{\AcProf}{\vec{\Ac}}
\newcommand{\St}{\mthset{St}}
\newcommand{\AP}{\mthset{AP}}
\renewcommand{\Game}{\mthname{G}}
\newcommand{\pun}{\mthsym{pun}}
\newcommand{\labFun}{\lambda}
\newcommand{\trnFun}{\mthfun{tr}}
\newcommand{\act}{\mthsym{a}}
\newcommand{\jact}{\vec{\mthsym{a}}}
\newcommand{\StrSet}{\mthset{Str}}
\newcommand{\strElm}{\sigma}
\newcommand{\strpElm}{\mthelm{\vec{\sigma}}}
\newcommand{\NE}{\mthset{NE}}
\newcommand{\winsym}{\mthset{Win}}
\newcommandx{\Win}[3][1=, 2=, 3=]
{\mthset{\winsym#3}[#1][#2]}
\newcommand{\presym}{\mthfun{Pre}}
\newcommandx{\Pre}[3][1=, 2=, 3=]
{\mthset{\presym#3}[#1][#2]}
\newcommand{\eqsym}{\mthfun{Eq}}
\newcommandx{\Eq}[3][1=, 2=, 3=]
{\mthset{\eqsym#3}[#1][#2]}
\newcommand{\WI}{\mthset{WI}}
\newcommand{\SI}{\mthset{SI}}
\newcommand{\K}{\mathcal{K}}
\newcommandx{\AFW}[5][1=, 2=, 3=, 4=, 5=]
{\txtargname{AFW#5{\small\argint{$[$}{#1}{$]$}}}[#2][#3]{#4}\xspace}
\newcommand{\MP}[1][]{%
	\ifthenelse{\equal{#1}{}}{{\small{\sf mp}}}{{\small{\sf mp}}(#1)}%
\xspace}
\providecommand{\strFun}[1][]{\mthfun{\sigma}}
\providecommand{\pstrFun}[1][]{\mthfun{\sSym}}
\newcommand{\weak}{{\normalfont\textsc{Weak Implementation}}\xspace}
\newcommand{\weakc}{{\normalfont\textsc{Weak Implementation Complement}}\xspace}
\newcommand{\strong}{{\normalfont \textsc{Strong Implementation}}\xspace}
\newcommand{\strongc}{{\normalfont \textsc{Strong Implementation Complement}}\xspace}
\newcommand{\optwi}{{\normalfont \textsc{Opt-WI}}\xspace}
\newcommand{\optsi}{{\normalfont \textsc{Opt-SI}}\xspace}
\newcommand{\uoptwi}{{\normalfont \textsc{UOpt-WI}}\xspace}
\newcommand{\uoptsi}{{\normalfont \textsc{UOpt-SI}}\xspace}
\newcommand{\exactwi}{{\normalfont \textsc{Exact-WI}}\xspace}
\newcommand{\exactsi}{{\normalfont \textsc{Exact-SI}}\xspace}
\newcommand{\QSAT}{{\normalfont \textsc{QSAT}\xspace}}
\newcommand{\MQSAT}{{\normalfont \textsc{MinQSAT}\xspace}}
\newcommand{\MWQSAT}{{\normalfont \textsc{Weighted MinQSAT}\xspace}}
\newcommand{\wFun}{\mthfun{w}}
\newcommand{\cFun}{\mthfun{c}}
\newcommand{\cost}{\mthfun{cost}}
\newcommand{\LTLlim}{\mthfun{LTL^{Lim}}\xspace}
\def\avg{{\sf avg}}
\def\src{{\sf src}}
\def\trg{{\sf trg}}
\def\IN{{\sf in}}
\def\OUT{{\sf out}}
\newcommand{\pay}{\mthfun{pay}}
\newcommand{\LP}{\mthfun{LP}}
\def\pspace{\mthfun{PSPACE}\xspace}
\def\fpspace{\mthfun{FPSPACE}\xspace}
\def\np{\mthfun{NP}\xspace}
\def\FP{\mthfun{FP}\xspace}
\def\DP{\mthfun{D^P}\xspace}
\def\DPTwo{\mthfun{D^P_{\normalfont \textrm{2}}}\xspace}
\newcommand{\sink}{\mthsym{sink}}
\newcommand{\source}{\mthsym{source}}
\newcommand{\setx}{\mathbf{x}}
\newcommand{\sety}{\mathbf{y}}
\newcommand{\vecx}{\vec{\mathbf{x}}}
\newcommand{\vecy}{\vec{\mathbf{y}}}
\newcommand{\term}{\mathbf{t}}
\newcommand{\T}{\mathbf{T}}
\newcommand{\SigmaPTwo}{\Sigma^{\mthfun{P}}_2}
\newcommand{\DeltaPTwo}{\Delta^{\mthfun{P}}_2}
\newcommand{\DeltaPThree}{\Delta^{\mthfun{P}}_3}
\newcommand{\argemp}[2]{\if&#1&\else#2\fi}
\newcommand{\argdef}[2]{\if&#1&#2\else#1\fi}
\newcommand{\argint}[3]{\if&#2&\else#1#2#3\fi}
\newcommand{\argext}[3]{\if&#1&#3\else#1\if&#3&\else#2#3\fi\fi}
\newcommandx{\mthfnt}[3][1=, 2=0]{{
	\IfStrEqCase{#1}
	{%
		{}%
		{#3}%
		{Name}%
		{%
			\IfStrEqCase{#2}
			{%
				{0}{\mathcal{#3}}%
				{1}{\mathscr{#3}}%
				{2}{\mathfrak{#3}}%
				{3}{\mathbb{#3}}%
			}
			[\ensuremath{\clubsuit}]%
		}%
		{Set}%
		{%
			\IfStrEqCase{#2}
			{%
				{0}{\mathrm{#3}}%
				{1}{\mathsf{#3}}%
				{2}{\mathbb{#3}}%
				{3}{\mathbf{#3}}%
			}
			[\ensuremath{\clubsuit}]%
		}%
		{Fun}%
		{%
			\IfStrEqCase{#2}
			{%
				{0}{\mathsf{#3}}%
				{1}{\mathrm{#3}}%
			}
			[\ensuremath{\clubsuit}]%
		}%
		{Rel}%
		{%
			\IfStrEqCase{#2}
			{%
				{0}{\mathit{#3}}%
				{1}{\mathtt{#3}}%
			}
			[\ensuremath{\clubsuit}]%
		}%
		{Sym}%
		{%
			\IfStrEqCase{#2}
			{%
				{0}{\mathtt{#3}}%
				{1}{\mathbf{#3}}%
			}
			[\ensuremath{\clubsuit}]%
		}%
		{Elm}%
		{\mathnormal{#3}}
	}
[\ensuremath{\clubsuit}]%
}}
\newcommand{\mthsub}[1]{\argemp{#1}{\ensuremath{_{\mathnormal{#1}}}}}
\newcommand{\mthsup}[1]{\argemp{#1}{\ensuremath{^{\mathnormal{#1}}}}}
\newcommandx{\mth}[5][1=, 2=0, 4=, 5=]{{\ensuremath{\mthfnt[#1][#2]{#3}\mthsub{#4}\mthsup{#5}}}}
\newcommandx{\mtharg}[6][1=, 2=0, 4=, 5=]{{\mth[#1][#2]{#3}[#4][#5]\ensuremath{\argint{(}{#6}{)}}}}
\newcommand{\mthempty}{\mth[][]}
\newcommand{\mthstyname}{0}
\newcommand{\mthname}[1][]{\mth[Name][\argdef{#1}{\mthstyname}]}
\newcommand{\mthstyset}{0}
\newcommand{\mthset}[1][]{\mth[Set][\argdef{#1}{\mthstyset}]}
\newcommand{\mthstyfun}{0}
\newcommand{\mthfun}[1][]{\mth[Fun][\argdef{#1}{\mthstyfun}]}
\newcommand{\mthstysym}{0}
\newcommand{\mthsym}[1][]{\mth[Sym][\argdef{#1}{\mthstysym}]}
\newcommand{\mthstyelm}{0}
\newcommand{\mthelm}[1][]{\mth[Elm][\argdef{#1}{\mthstyelm}]}
\newcommand{\tuple}[1]
{\ensuremath{\!\argint{\langle}{#1}{\rangle}}}
\title{Equilibrium Design for Concurrent Games} %TODO Please add
\author{Julian Gutierrez}{Department of Computer Science, University of Oxford}{julian.gutierrez@cs.ox.ac.uk}{[orcid]}{}
\author{Muhammad Najib}{Department of Computer Science, University of Oxford}{mnajib@cs.ox.ac.uk}{[orcid]}{}
\author{Giuseppe Perelli}{Department of Computer Science, University of G\"{o}teborg}{giuseppe.perelli@gu.se}{[orcid]}{}
\author{Michael Wooldridge}{Department of Computer Science, University of Oxford}{michael.wooldridge@cs.ox.ac.uk}{[orcid]}{}
\authorrunning{J. Guiterrez, M. Najib, G. Perelli, M. Wooldridge}%TODO mandatory. First: Use abbreviated first/middle names. Second (only in severe cases): Use first author plus 'et al.'
\keywords{Games, Temporal logic, Synthesis, Model checking, Nash equilibrium.}%TODO mandatory; please add comma-separated list of keywords
\begin{document}

\maketitle

%TODO mandatory: add short abstract of the document
\begin{abstract}
In game theory, {\em mechanism design} is concerned with the design of
incentives so that a desired outcome of the game can be achieved. In this
paper, we study the design of incentives so that a desirable equilibrium is
obtained, for instance, an equilibrium satisfying a given temporal logic
property---a problem that we call {\em equilibrium design}. We base our study
on a framework where system specifications are represented as temporal logic
formulae, games as quantitative concurrent game structures, and players' goals
as mean-payoff objectives. In particular, we consider system specifications
given by LTL and GR(1) formulae, and show that implementing a mechanism to
ensure that a given temporal logic property is satisfied on some/every Nash
equilibrium of the game, whenever such a mechanism exists, can be done in
PSPACE for LTL properties and in NP/$\SigmaPTwo$ for GR(1) specifications. We
also study the complexity of various related decision and optimisation
problems, such as optimality and uniqueness of solutions, and show that the
complexities of all such problems lie within the polynomial hierarchy. As an
application, equilibrium design can be used as an alternative solution to the
rational synthesis and verification problems for concurrent games with
mean-payoff objectives whenever no solution exists, or as a technique to
repair, whenever possible, concurrent games with undesirable rational outcomes
(Nash equilibria) in an optimal way. 
\end{abstract}

\section{Introduction}
Over the past decade, there has been increasing interest in the use of
game-theoretic equilibrium concepts such as Nash equilibrium in the
analysis of concurrent and multi-agent systems (see,
{\em e.g.},~\cite{AKP18,AminofMMR16,BouyerBMU15,FismanKL10,GHPW17,GutierrezHW17-aij,KupfermanPV16}). 
This work views a concurrent system as a game, with system components
(agents) corresponding to players in the game, which are assumed to be
acting rationally in pursuit of their individual
preferences. Preferences may be specified by associating with each
player a temporal logic goal formula, which the player desires to see
satisfied, or by assuming that players receive rewards in each state
the system visits, and seek to maximise the average reward they
receive (the \emph{mean payoff}). A further possibility is to combine
goals and rewards: players primarily seek the satisfaction of their
goal, and only secondarily seek to maximise their mean payoff. The key
decision problems in such settings relate to what temporal logic
properties hold on computations of the system that may be generated by
players choosing strategies that form a game-theoretic (Nash)
equilibrium. These problems are typically computationally complex,
since they subsume temporal logic synthesis~\cite{PnueliR89}. If
players have \LTL goals, for example, then checking whether an \LTL
formula holds on some Nash equilibrium path in a concurrent game is
2EXPTIME-complete~\cite{FismanKL10,GutierrezHW15,GutierrezHW17-aij}, 
rather than only PSPACE-complete as it is the case for model checking, 
certainly a computational barrier for the practical analysis and automated 
verification of reactive, concurrent, and multi-agent systems modelled as 
multi-player games. 

Within this game-theoretic reasoning framework, a key issue 
is that individually rational choices can
cause outcomes that are highly undesirable, and concurrent games also
fall prey to this problem. This has motivated the development of
techniques for modifying games, in order to avoid bad equilibria, or
to facilitate good equilibria. \emph{Mechanism design} is the problem of
designing a game such that, if players behave rationally, then a
desired outcome will be obtained~\cite{OR94}. Taxation and subsidy
schemes are probably the most important class of techniques used in
mechanism design. They work by levying taxes on certain actions (or
providing subsidies), thereby incentivising
players away from some outcomes towards others. The present paper
studies the design of subsidy schemes (incentives) for concurrent games, 
so that a desired outcome (a Nash equilibrium in the game) can be obtained---a 
problem that we call {\em Equilibrium design}. 
We model agents as synchronously executing concurrent processes, with 
each agent receiving an integer payoff for every state the overall system visits;
the overall payoff an agent receives over an infinite computation path
is then defined to be the mean payoff over this path. While agents (naturally) seek to maximise their individual mean payoff, the designer of the subsidy scheme wishes to see some temporal logic formula satisfied, either 
on some or on every Nash equilibrium of the game. 

With this model, we assume that the designer -- an external principal -- has a 
finite budget that is available for making subsidies, and this budget can be
allocated across agent/state pairs. By allocating this budget
appropriately, the principal can incentivise players away from some
states and towards others. Since the principal has some temporal
logic goal formula, it desires to allocate subsidies so that players
are rationally incentivised to choose strategies so that the
principal's temporal logic goal formula is satisfied in the path that would result from
executing the strategies.  For this general problem,
following~\cite{WEKL13}, we identify two variants of the principal's
mechanism design problem, which we refer to as \textsc{Weak
	Implementation} and \textsc{Strong Implementation}. In the
\textsc{Weak} variant, we ask whether the principal can allocate the
budget so that the goal is achieved on \emph{some} computation path that would be
generated by Nash equilibrium strategies in the resulting system; in
the \textsc{Strong} variation, we ask whether the principal can
allocate the budget so that the resulting system has at least one Nash
equilibrium, and moreover the temporal logic goal is satisfied on {\em all} paths that
could be generated by Nash equilibrium strategies. For these two
problems, we consider goals specified by \LTL formulae or
\GRone~formulae \cite{BJPPS12}, give algorithms for each case, and
classify the complexity of the problem. 
While \LTL is a natural language for the specification of properties of concurrent 
and multi-agent systems, \GRone is an \LTL fragment that can be used to 
easily express several prefix-independent properties of computation paths of reactive systems, 
such as $\omega$-regular properties often used in automated formal verification. 
We then go on to study
variations of these two problems, for example considering 
{\em optimality} and {\em uniqueness} of solutions, and show that the
complexities of all such problems lie within the polynomial hierarchy, thus
making them potentially amenable to efficient practical implementations.  
Table~\ref{tab:results} summarises the main computational complexity results in the paper. 
%Due to lack of space, most proofs are omitted. 

%\subsection*{Related Work}
%Incentive engineering via taxation schemes for Boolean games has been studied in \cite{WEKL13}. In \cite{HTW15} the authors consider redistribution schemes for resources....

\begin{table*}[!ht]
	\begin{center}
		\def\arraystretch{1.5}
		\begin{tabular}{c c c}
			\toprule
			& \LTL Spec. & \GRone Spec. \\
			\hline
			\weak & \pspace-complete (Thm. \ref{thm:weak-ltl}) & \np-complete (Thm. \ref{thm:weak-gr1}) \\
			\strong & \pspace-complete (Cor. \ref{cor:strong-ltl}) & $ \SigmaPTwo $-complete (Thm. \ref{thm:strong-gr1}) \\
			\optwi & \fpspace-complete (Thm. \ref{thm:optwi-ltl}) & $ \FP^{\np} $-complete (Thm. \ref{thm:optwi-gr1}) \\
			\optsi & \fpspace-complete (Thm. \ref{thm:optsi-ltl}) & $ \FP^{\SigmaPTwo} $-complete (Thm. \ref{thm:optsi-gr1}) \\
			\exactwi & \pspace-complete (Cor. \ref{cor:exactwi-ltl}) & $ \DP $-complete (Cor. \ref{cor:exactwi-gr1}) \\
			\exactsi & \pspace-complete (Cor. \ref{cor:exactsi-ltl}) & $ \DPTwo $-complete (Cor. \ref{cor:exactsi-gr1}) \\
			\uoptwi & \pspace-complete (Cor. \ref{cor:uoptwi-ltl}) & $ \DeltaPTwo $-complete (Cor. \ref{cor:uoptwi-gr1}) \\
			\uoptsi & \pspace-complete (Cor. \ref{cor:uoptsi-ltl}) & $ \DeltaPThree $-complete (Cor. \ref{cor:uoptsi-gr1}) \\
			\bottomrule
		\end{tabular}
	\end{center}
	\caption{Summary of main complexity results.}
	\label{tab:results}
\end{table*}

\section{Preliminaries}\label{sec:prelims}
\noindent \textbf{Linear Temporal Logic.}
\LTL~\cite{pnueli:77a} extends classical propositional logic with two
operators, $\ltlnext$ (``next'') and $\until$ (``until''), that can be used to express properties of paths.  The syntax of \LTL is defined
with respect to a set $\AP$ of atomic propositions as follows:
$$ \phi ::= 
\mathop\top \mid
p \mid
\neg \phi \mid\phi \vee \phi \mid
\ltlnext \phi \mid
\phi \until \phi
$$
where $p \in \AP$.
As commonly found in the \LTL literature, we  use of the following abbreviations: $\phi_1 \wedge \phi_2 \equiv \neg (\neg \phi_1 \vee \neg \phi_2)$, $\phi_1 \to \phi_2 \equiv \neg \phi_1 \vee \phi_2$, $\sometime \phi \equiv \mathop\top \until \phi$, and $\always \phi \equiv \neg \sometime \neg \phi$.\

We interpret formulae of \LTL with respect to pairs $(\alpha,t)$, where $\alpha \in (2^{\AP})^\omega$ is an infinite sequence of atomic proposition evaluations that indicates which propositional variables are true in every time point and $t \in \Nat$ is a
temporal index into $\alpha$.
Formally, the semantics of \LTL formulae is given by the following rules:
$$
\begin{array}{lcl}
(\alpha,t)\models\mathop\top	\\
(\alpha,t)\models p 				&\text{ iff }&	p\in \alpha_t\\
(\alpha,t)\models\neg \phi			&\text{ iff }&   \text{it is not the case that $(\alpha,t) \models \phi$}\\
(\alpha,t)\models\phi \vee \psi		&\text{ iff }&	\text{$(\alpha,t) \models \phi$  or $(\alpha,t) \models \psi$}\\
(\alpha,t)\models\ltlnext\phi			&\text{ iff }&	\text{$(\alpha,t+1) \models \phi$}\\
(\alpha,t)\models\phi\until\psi	&\text{ iff }&   \text{for some $t' \geq t: \ \big((\alpha,t') \models \psi$  and }\\
&&\quad\text{for all $t \leq t'' < t': \ (\alpha,t'') \models \phi \big)$.}\\
\end{array}
$$
If $(\alpha,0)\models\phi$, we write $\alpha\models\phi$ and say that
\emph{$\alpha$ satisfies~$\phi$}.

\vspace*{4pt} 
\noindent \textbf{General Reactivity of rank 1.}
The language of \emph{General Reactivity of rank 1}, denoted $\GRone$, is the fragment of \LTL given by  formulae written in the following form~\cite{BJPPS12}:
$$
(\always \sometime \psi_1 \wedge \ldots \wedge \always \sometime \psi_m) \to (\always \sometime \phi_1 \wedge \ldots \wedge \always \sometime \phi_n)
\text{,}
$$
where each subformula $\psi_i$ and $\phi_i$ is a Boolean combination of atomic propositions.

\vspace{4pt}
\noindent \textbf{Mean-Payoff.}
For a sequence $r \in \mathbb{R}^\omega$, let $\MP(r)$ be
the \emph{mean-payoff} value of $r$, that is, 
$$ \MP(r) = \lim \inf_{n \to \infty} \avg_n(r) $$
where, for $n \in \mathbb{N}\setminus\{0\}$, we define
$\avg_n(r) = \frac{1}{n}\sum_{j=0}^{n-1} r_j$, with $r_j$ the $(j\!+\!1)$th element of $r$. 

\vspace*{4pt} \noindent \textbf{Arenas.}
An \emph{arena} is a tuple
%$
$ A = \tuple{\Ag,  \Ac, \St, s_0, \trnFun, \labFun} $ 
%$
where $\Ag$, $\Ac$, and $\St$ are finite non-empty sets of \emph{players} (write $N = \card{\Ag}$), \emph{actions}, and \emph{states}, respectively; if needed, we write $ \Ac_i(s) $, to denote the set of actions available to player $ i $ at $ s $; $s_0 \in \St$ is the \emph{initial state}; $\trnFun : \St \times \AcProf \rightarrow \St$ is a \emph{transition function} mapping each pair consisting of a state $s \in \St$ and an \emph{action profile} $\jact \in \AcProf = \Ac^{\Ag}$, one for each player, to a successor state; and $\labFun: \St \to 2^{\AP}$ is a labelling function, mapping every state to a subset of \emph{atomic propositions}.

We sometimes call an action profile $\jact = (\act_{1}, \dots, \act_{n}) \in \AcProf$ a \emph{decision}, and denote $\act_i$ the action taken by player $i$.
We also consider \emph{partial} decisions.
For a set of players $C \subseteq \Ag$ and action profile $\jact$, we let $\jact_{C}$ and $\jact_{-C}$ be two tuples of actions, respectively, one for all players in $C$ and one for all players in $\Ag \setminus C$.
We also write $\jact_{i}$ for $\jact_{\{i\}}$ and $ \jact_{-i} $ for $ \jact_{\Ag \setminus \{i\}} $.
For two decisions $\jact$ and $\jact'$, we write $(\jact_{C}, \jact_{-C}')$ to denote the decision where the actions for players in $ C $ are taken from $\jact$ and the actions for players in $ \Ag \setminus C $ are taken from $\jact'$.

A \emph{path} $\pi = (s_0, \jact^0), (s_1, \jact^1) \cdots$ is an infinite sequence in $(\St \times \AcProf)^{\omega}$ such that $\trnFun(s_k, \jact^k) = s_{k + 1}$ for all $k$.
%In particular, $\jact_k(i)$ is the action of player $i$ in step $k$.
%Sometimes, we call the single iteration $(s_k, \jact^k)$ a \emph{configuration}.
%
Paths are generated in the arena by each player~$i$ selecting a {\em
	strategy} $\strElm_i$ that will define how to make choices over
time.  We model strategies as finite state machines with output.
Formally, for arena $A$, a strategy
$\strElm_{i} = (Q_{i}, q_{i}^{0}, \delta_i, \tau_i) $ for player $i$
is a finite state machine with output (a transducer), where $Q_{i}$ is
a finite and non-empty set of \emph{internal states}, $ q_{i}^{0} $ is
the \emph{initial state},
$\delta_i: Q_{i} \times \AcProf \rightarrow Q_{i} $ is a deterministic
\emph{internal transition function}, andlet me
$\tau_i: Q_{i} \rightarrow \Ac_i$ an \emph{action function}. Let $\StrSet_i$ be the set of strategies for player $i$. Note that this definition implies that strategies have \textit{perfect information}\footnote{Mean-payoff games with imperfect information are generally undecidable \cite{DDGRT2010}.} and finite memory (although we impose no bounds on memory size).

A \emph{strategy profile} $\strpElm = (\strElm_1, \dots, \strElm_n)$ is a vector of strategies, one for each player.
As with actions, $\strpElm_{i}$ denotes the strategy assigned to player $i$ in profile $\strpElm$.
Moreover, by $(\strpElm_{B}, \strpElm'_{C})$ we denote the combination
of profiles where players in disjoint $B$ and $C$ are assigned their corresponding strategies in $\strpElm$ and $\strpElm'$, respectively.
Once a state $s$ and profile $\strpElm$ are fixed, the game has an \emph{outcome}, a path in $A$, denoted by $\pi(\strpElm, s)$. 
Because strategies are deterministic, $\pi(\strpElm, s)$ is the unique path induced by $\strpElm$, that is, the sequence $s_0, s_1, s_2, \ldots$ such that 
\begin{itemize}
	\item $s_{k + 1} = \trnFun (s_k, (\tau_1(q^k_1), \ldots, \tau_n(q^k_n)))$, and 
	\item $q^{k + 1}_i = \delta_i(s^k_i, (\tau_1(q^k_1), \ldots, \tau_n(q^k_n)))$, for all $k \geq 0$. 
\end{itemize}
Furthermore, we simply write $ \pi(\strpElm) $ for $ \pi(\strpElm,s_0) $.

Arenas define the dynamic structure of games, but lack a central aspect of a game: preferences, which give games their strategic structure.
A \emph{multi-player game} is obtained from an arena $A$ by
associating each player with a goal.
We consider multi-player games with $\MP$ goals.
%
%A multi-player \GRone game is a tuple
%$\Game_{\GRone} = \tuple{A, (\gamma_i)_{i \in \Ag}}$ where $A$ is an
%arena and $ \gamma_i $ is the \GRone goal for player $i$.  
A multi-player \MP game is a tuple
$\Game = \tuple{A, (\wFun_{i})_{i \in \Ag}}$, where $A$ is an
arena and $\wFun_{i}: \St \to \SetZ$ is a function mapping, for every player~$i$, every state
of the arena into an integer number.  
%When it is clear from the
%context, we refer to a multi-player \GRone or \MP game as a {\em game}
%and denote it by $\Game$.
%
In any game with arena $A$, a path $\pi$ in $A$ induces a sequence $\lambda(\pi) = \lambda(s_0) \lambda(s_1) \cdots$ of sets of atomic propositions; if, in addition, $A$ is the arena of an \MP game, then, for each player~$i$, the sequence $\wFun_i(\pi) = \wFun_i(s_0) \wFun_i(s_1) \cdots$ of weights is also induced. 
Unless stated otherwise, for a game $ \Game $ and a path $\pi$ in it, the payoff of player~$i$ is $\pay_i(\pi) = \MP(\wFun_{i}(\pi))$.

\vspace*{4pt} \noindent \textbf{Nash equilibrium.}
Using payoff functions, we can define the game-theoretic concept of Nash equilibrium~\cite{OR94}. 
For a multi-player game $\Game$, a strategy profile
$\strpElm$ is a \emph{Nash equilibrium} of~$\Game$ if, for every player~$i$ and strategy $\strElm'_i$ for player~$i$, we have
$$
\pay_i(\pi(\strpElm))	\geq	\pay_i(\pi((\strpElm_{-i},\strElm'_i))) \ . 
$$
%where $(\strpElm_{-i},\strElm'_i)$ denotes 
%$(\strElm_1, \dots, \strElm_{i - 1}, \strElm'_i, \strElm_{i + 1}, \dots, 
%\strElm_n)$, 
%the strategy profile where the strategy of player~$i$ in $\strpElm$ is
%replaced by $\strElm'_i$. 
Let $\NE(\Game)$ be the set of Nash equilibria of~$\Game$.

\section{From Mechanism Design to Equilibrium Design}
\label{sec:eqdesign}
We now describe the two main problems that are our focus of study. As discussed in the introduction, such problems are closely related to the well-known problem of {\em mechanism design} in game theory. 
Consider a system populated by agents \Ag,
where each agent $ i \in \Ag $ wants to maximise its payoff $ \pay_i(\cdot) $.
As in a mechanism design problem, we assume there is an external \textit{principal} who has a goal $ \phi $ that it
wants the system to satisfy, and to this end, wants to incentivise the agents to act collectively and rationally so as to bring about $ \phi $. In our
model, incentives are given by \textit{subsidy schemes} and goals by temporal logic formulae. 

\vspace{4pt}
\noindent \textbf{Subsidy Schemes:} A subsidy scheme defines additional
imposed rewards over those given by the weight function $ \wFun $. 
While the weight function $ \wFun$ is fixed for any
given game, the principal is assumed to be at liberty to define a subsidy 
scheme as they see fit. Since agents will seek to maximise their overall rewards,
the principal can incentivise agents away from performing visiting some states and
towards visiting others; if the principal designs the subsidy scheme
correctly, the agents are incentivised to choose a strategy profile $ \vec{\sigma} $
such that $ \pi(\strpElm) \models \phi $.
Formally, we model a subsidy scheme as a function $ \kappa: \Ag \to \St \to \mathbb{N} $, where the intended 
interpretation is that $ \kappa(i)(s) $ is the subsidy in the form of a natural number $ k \in \mathbb{N}$ that would be imposed on player $ i $ if such a player visits state $ s \in \St $. For instance, if we have $ \wFun_{i}(s) = 1 $ and $ \kappa(i)(s) = 2 $, then player~$i$ gets $1 + 2 = 3 $ for visiting such a state. For simplicity, hereafter we write $ \kappa_i(s) $ instead of $ \kappa(i)(s)$ for the subsidy for player~$i$. 
%\mnnote{need to change this definition into state-based subsidy schemes...}
%The total subsidy received by a player $ i $ in choosing a strategy $ \strElm_{i} $ will be $ \Sigma_{\act_{i} \in \strElm_{i}} \kappa(\act_{i}) $.

Notice that having an unlimited fund for a subsidy scheme would make some problems trivial, as the principal can always incentivise players to satisfy $ \phi $ (provided that there is a path in $ A $ satisfying $ \phi $). A natural and more interesting setting is that the principal is given a constraint in the form of \textit{budget} $\beta\in\mathbb{N}$. The principal then can only spend within the budget limit. To make this clearer, we first define the \textit{cost} of a subsidy scheme $ \kappa $ as follows.

\begin{definition}\label{def:cost}
	Given a game $ \Game $ and subsidy scheme $ \kappa $, we let $ \cost(\kappa) = \sum_{i \in \Ag} \sum_{s \in \St} \kappa_i(s) $.
\end{definition}

We say that a subsidy scheme $ \kappa $ is \textit{admissible} if it does not exceed the budget~$\beta$,  that is, if $ \cost(\kappa) \leq \beta $. Let $ \K(\Game,\beta) $ denote the set of admissible subsidy schemes over $ \Game $ given budget $ \beta \in \mathbb{N} $. Thus we know that for each $ \kappa \in \K(\Game,\beta)$ we have $\cost(\kappa) \leq \beta $. We write $ (\Game,\kappa) $ to denote the resulting game after the application of subsidy scheme $ \kappa $ on game $ \Game $. Formally, we define the application of some subsidy scheme on a game as follows.

\begin{definition}\label{def:apply}
	Given a game $ \Game = \tuple{A,(\wFun_{i})_{i \in \Ag}} $ and an admissible subsidy scheme $ \kappa $, we define $ (\Game,\kappa) = \tuple{A, (\wFun'_{i})_{i \in \Ag}} $, where $ \wFun'_{i}(s) = \wFun_{i}(s) + \kappa_i(s)$, for each $ i \in \Ag $ and $ s \in \St $.
\end{definition}

We now come to the main question(s) that we consider in the remainder of the paper. We ask whether the principal can find a subsidy scheme that will incentivise players to collectively choose a rational outcome (a Nash equilibrium) that satisfies its temporal logic goal $ \phi $. We call this problem {\em equilibrium design}. Following~\cite{WEKL13}, we define two variants of this problem, a {\em weak} and a {\em strong} implementation of the equilibrium design problem. The formal definition of the problems and the analysis of their respective computational complexity are presented in the next sections.

%\begin{itemize}
%	\item In the \weak, we are given a game $ \Game $, a formula $ \phi $, and a budget $ \beta $, and we are asked whether there exists any subsidy scheme $ \kappa $ such that $ (\Game,\kappa) $ has at least one Nash equilibrium that satisfies $ \phi $;
%	\item In the \strong, we are given a game $ \Game $, a formula $ \phi $, and a budget $ \beta $, and we are asked whether there exists any subsidy scheme $ \kappa $ such that
%	\begin{enumerate}
%		\item $ (\Game,\kappa) $ has at least one Nash equilibrium, and
%		\item all Nash equilibria of $ (\Game,\kappa) $ satisfy $ \phi $.
%	\end{enumerate}
%\end{itemize}

%These two problems together form the core of what we call {\em equilibrium design}. 

\section{Equilibrium Design: Weak Implementation}
In this section, we study the weak implementation of the equilibrium design problem, a logic-based computational variant of the principal's mechanism design problem in game theory. We assume that the principal has full knowledge of the game $ \Game $ under consideration, that is, the principal uses all the information available of $ \Game $ to find the appropriate subsidy scheme, if such a scheme exists. We now formally define the weak variant of the implementation problem, and study its respective computational complexity, first with respect to goals (specifications) given by \LTL formulae and then with respect to \GRone formulae. 

%Furthermore, we also explore and study some related problems surrounding them.

Let $ \WI(\Game,\phi,\beta) $ denote the set of subsidy schemes over $ \Game $ given budget $ \beta $ that satisfy a formula $ \phi $ in at least one path $ \pi $ generated by $ \strpElm \in \NE(\Game) $. Formally
$$ \WI(\Game,\phi,\beta) = \{ \kappa \in \K(\Game,\beta) : \exists \strpElm \in \NE(\Game,\kappa)~\textrm{s.t.}~\pi(\strpElm) \models \phi \}. $$

\begin{definition}[\weak]
	Given a game $\Game$, formula $\varphi$, and budget $ \beta $:
	\begin{center}
		Is it the case that $ \WI(\Game,\phi,\beta) \neq \varnothing $?
	\end{center}
\end{definition}

In order to solve \weak, we first characterise the Nash equilibria of a multi-player concurrent game in terms of punishment strategies. To do this in our setting, we recall the notion of secure values for mean-payoff games~\cite{UW11}.

For a player $i$ and a state $s \in \St$, by $\pun_i(s)$ we denote the
punishment value of $i$ over $s$, that is, the maximum payoff that $i$
can achieve from $s$, when all other players behave adversarially.
Such a value can be computed by considering the corresponding
two-player zero-sum mean-payoff game~\cite{ZP96}.  Thus, it is in
$\np \cap \mathsf{co}\np$, and note that both player $i$ and coalition
$\Ag \setminus \{i\}$ can achieve the optimal value of the game using
{\em memoryless} strategies.
Then, for a player $i$ and a value $z \in \SetR$, a pair $(s, \jact)$ is $z$-secure for player~$i$ if $\pun_i(\trnFun(s, (\jact_{-i}, \act'_i))) \leq z$ for every $\act'_i \in \Ac$. Write $\pun_{i}(\Game)$ for the set of punishment values for player~$i$ in $\Game$.

\begin{theorem}
	\label{thm:pthfinding}
	For every \MP game $\Game$ and ultimately periodic path $\pi = (s_0, \jact_{0}), (s_1, \jact^{1}), \ldots $, the following are equivalent:
	
	\begin{enumerate}
		\item 
		There is $\strpElm \in \NE(\Game)$ such that $\pi = \pi(\strpElm, s_0)$;
		
		\item 
		There exists $ {z} \in \SetR^{\Ag}$, where $z_{i} \in \pun_{i}(\Game)$ such that, for every $i \in \Ag$
		
		\begin{enumerate}
			\item 
			for all $k \in \SetN$, the pair $(s_k, \jact^k)$ is $z_i$-secure for $i$, and 
			
			\item 
			$z_i \leq \pay_i(\pi)$.
		\end{enumerate}
	\end{enumerate}
	
\end{theorem}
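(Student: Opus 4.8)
The plan is to prove the two implications separately. Both directions hinge on \emph{punishment} (threat) strategies and on the fact recalled just above that, in the two-player zero-sum mean-payoff game defining $\pun_i$, player $i$ and the coalition $\Ag \setminus \{i\}$ both have \emph{memoryless} optimal strategies; in particular, from any state $s$ player $i$ can secure a mean-payoff of at least $\pun_i(s)$ against every behaviour of the other players, while the coalition can hold $i$ to at most $\pun_i(s)$.

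\textbf{$(2) \Rightarrow (1)$: the folk-theorem construction.} Since $\pi$ is ultimately periodic, it can be followed by a finite-state machine (a lasso). I would define a strategy profile $\strpElm$ in which every player plays along $\pi$ while no deviation is observed; if, at the first step $k$ where the observed action profile differs from the one prescribed by $\pi$, there is a unique deviator $j$, then from step $k+1$ on every player in $\Ag \setminus \{j\}$ switches to the memoryless strategy holding $j$ to at most $\pun_j(s')$ from the state $s' = \trnFun(s_k, (\jact^k_{-j}, \act'_j))$ actually reached (the continuation is arbitrary on all other histories). This profile uses finite memory (the lasso, a flag recording the deviator, and finitely many memoryless punishment strategies) and its outcome from $s_0$ is $\pi$. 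To check it is a Nash equilibrium, observe that by not deviating player $i$ obtains $\pay_i(\pi)$, whereas if $i$ deviates, first at step $k$, then from $s'$ onwards its mean-payoff is at most $\pun_i(s') \le z_i$, the last inequality being $z_i$-security of $(s_k, \jact^k)$, i.e.\ condition~(a); as a finite prefix does not affect the $\liminf$ of the running averages, the deviation nets at most $z_i \le \pay_i(\pi)$ by condition~(b), hence is not profitable.

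\textbf{$(1) \Rightarrow (2)$: reading the threats off a Nash equilibrium.} Let $\strpElm \in \NE(\Game)$ with $\pi = \pi(\strpElm, s_0)$, and for each $i$ set $z_i := \max\{\pun_i(\trnFun(s_k, (\jact^k_{-i}, \act'_i))) : k \in \SetN,\ \act'_i \in \Ac\}$. The set is finite because $\St$ and $\Ac$ are, so the maximum is attained and equals $\pun_i(s)$ for some $s \in \St$; hence $z_i \in \pun_i(\Game)$, and condition~(a) is immediate from the definition. For condition~(b), fix $k$ and $\act'_i$, put $s' = \trnFun(s_k, (\jact^k_{-i}, \act'_i))$, and consider the one-shot deviation $\strElm'_i$ of $i$ that imitates $\strpElm_i$ for the first $k$ steps, plays $\act'_i$ at step $k$, and thereafter follows $i$'s memoryless strategy guaranteeing mean-payoff at least $\pun_i(s')$ from $s'$. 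Since $\strpElm_{-i}$ is deterministic, the play $\pi(\strpElm_{-i}, \strElm'_i)$ coincides with $\pi$ up to $s_k$ and reaches $s'$ at time $k+1$, so (again a finite prefix being irrelevant) $\pay_i(\pi(\strpElm_{-i}, \strElm'_i)) \ge \pun_i(s')$. Because $\strpElm$ is a Nash equilibrium, $\pay_i(\pi) \ge \pay_i(\pi(\strpElm_{-i}, \strElm'_i)) \ge \pun_i(s')$; taking the maximum over $k$ and $\act'_i$ gives $z_i \le \pay_i(\pi)$.

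The fiddly-but-routine parts are the bookkeeping that keeps both constructions finite-state (following the lasso, detecting and attributing a unilateral deviation from the observed concurrent action profile, and plugging in memoryless strategies) together with the elementary remark that prepending a finite prefix does not change the $\liminf$ of the running averages. The one genuinely load-bearing observation, used in $(1) \Rightarrow (2)$, is that the Nash property must be applied to tailor-made one-shot deviations: from each state $s'$ reachable by a single deviation, player $i$ can guarantee $\pun_i(s')$ \emph{against the fixed profile} $\strpElm_{-i}$ precisely because $\pun_i(s')$ is the value of the zero-sum game and is therefore secured against \emph{every} opponent behaviour, adversarial or not --- and this is exactly what forces $\pun_i(s') \le \pay_i(\pi)$.
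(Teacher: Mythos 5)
Your proposal is correct and follows essentially the same route as the paper: for $(2)\Rightarrow(1)$ the standard grim-trigger construction that follows the lasso and switches the coalition $\Ag\setminus\{j\}$ to a memoryless punishment strategy against the first unilateral deviator, and for $(1)\Rightarrow(2)$ taking $z_i$ to be the largest punishment value reachable by a one-step deviation and using the Nash property against the tailor-made deviation that plays $i$'s optimal memoryless strategy from the reached state. Your write-up is in fact more explicit than the paper's sketch about why the deviator can secure $\pun_i(s')$ against the fixed profile $\strpElm_{-i}$ and about the finite-memory bookkeeping, but the underlying argument is the same.
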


%Proof in the appendix
%\begin{proof}[Proof sketch]
%	%	\gpnote{This is similar to the proof given for the lexicographic paper. I think we can only sketch it in the submission.}\\
%	%	\mnnote{Added this proof sketch.}\\
%	%	\gpnote{Small improvement for the (2) implies (1).}\\
%	For (1) implies (2): Let $ z_i $ be the largest value player $ i $ can get by deviating from $ \pi $. Let $ k \in \SetN $ be such that $ z_i = \pun_i(\trnFun(s_k,(\jact_{-i},\act'_i))) $. Suppose further that $ \pay_i(\pi) < z_i $. Thus, player $ i $ would deviate at $ s_k $, which is a contradiction to $ \pi $ being a path induced by a Nash equilibrium.
%	
%	For (2) imples (1): Define strategy profile $ \strpElm $ that follows $ \pi $ as long as no-one has deviated from $ \pi $.
%	In such a case where player $ i $ deviates on the $k$-th iteration, the strategy profile $\strpElm_{-i}$ starts playing the $z_i$-secure strategy for player $i$ that guarantees the payoff of player $i$ to be less than $z_i$. Therefore, we have $\pay_i(\pi(\strpElm_{-i},\strElm'_{i})) \leq z_i \leq \pay_i(\pi)$, for every possible strategy $\strElm'_{i}$ of player $i$ (the second inequality is due to condition $2(b)$). Thus, there is no beneficial deviation for player $ i $ and $ \pi $ is a path induced by a Nash equilibrium.
%\end{proof}

The characterisation of Nash Equilibria provided in Theorem~\ref{thm:pthfinding} will allow us to turn the \weak problem into a {\em path finding} problem over $(\Game,\kappa)$. 
On the other hand, with respect to the budget $\beta$ that the principal has at its disposal, the definition of subsidy scheme function $\kappa$ implies that the size of $ \K(\Game,\beta) $ is bounded, and particularly, it is bounded by $\beta$ and the number of agents and states in the game $\Game$, in the following way. 

\begin{proposition}\label{prop:kbound}
	Given a game $ \Game $ with $\card{N}$ players and $ \card{\St} $ states and budget $\beta$, it holds that 
	$$ \card{\K(\Game,\beta)} = \frac{\beta + 1}{m} \binom{\beta + m}{\beta + 1}\text{,}$$
	with $m = \card{N \times \St}$ being the number of pairs of possible agents and states.
\end{proposition}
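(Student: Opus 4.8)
The plan is to reduce the count to a standard stars-and-bars argument and then simplify the resulting binomial coefficient. First I would observe that, by currying, a subsidy scheme $\kappa : \Ag \to \St \to \mathbb{N}$ is nothing more than an assignment of a natural number to each pair in $\Ag \times \St$; enumerating these $m = \card{N \times \St}$ pairs as $p_1, \dots, p_m$ and writing $x_j$ for the value $\kappa$ assigns to $p_j$, a subsidy scheme corresponds bijectively to a tuple $(x_1, \dots, x_m) \in \mathbb{N}^m$, and under this correspondence $\cost(\kappa) = \sum_{j=1}^m x_j$. Hence $\card{\K(\Game,\beta)}$ is exactly the number of tuples $(x_1,\dots,x_m)\in\mathbb{N}^m$ with $\sum_{j=1}^m x_j \le \beta$.

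Next I would count these tuples. Introducing a slack coordinate $x_{m+1} := \beta - \sum_{j=1}^m x_j$, the inequality $\sum_{j=1}^m x_j \le \beta$ is equivalent to the equation $\sum_{j=1}^{m+1} x_j = \beta$ with all $x_j \in \mathbb{N}$, and this passage is again a bijection (the slack coordinate is always non-negative and uniquely determined). The number of non-negative integer solutions of an equation $y_1 + \dots + y_k = n$ is the classical stars-and-bars quantity $\binom{n+k-1}{k-1}$; applying this with $k = m+1$ and $n = \beta$ gives $\card{\K(\Game,\beta)} = \binom{\beta+m}{m}$.

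Finally I would rewrite $\binom{\beta+m}{m}$ in the form claimed by the proposition. Expanding with factorials,
$$\frac{\beta+1}{m}\binom{\beta+m}{\beta+1} = \frac{\beta+1}{m}\cdot\frac{(\beta+m)!}{(\beta+1)!\,(m-1)!} = \frac{(\beta+m)!}{\beta!\,m!} = \binom{\beta+m}{m},$$
which is exactly the desired identity.

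I do not expect a real obstacle here: the statement is a routine enumeration, so the "hard part" is only to present the two bijections cleanly and to note that the formula implicitly assumes $m \ge 1$, which holds because $\Ag$ and $\St$ are non-empty by definition of an arena. I would also double-check the degenerate cases $\beta = 0$ and $m = 1$, for which both sides of the claimed equality still evaluate correctly.
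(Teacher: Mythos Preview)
Your argument is correct. It differs from the paper's proof in a pleasant way: the paper first counts schemes of cost \emph{exactly} $b$ as weak compositions, obtaining $\binom{b+m-1}{b}$, then sums $\sum_{b=0}^{\beta}\binom{b+m-1}{b}$ and establishes the closed form $\frac{\beta+1}{m}\binom{\beta+m}{\beta+1}$ by an explicit induction on $\beta$. You instead absorb the inequality $\sum_j x_j \le \beta$ via a slack coordinate, apply stars-and-bars once to get $\binom{\beta+m}{m}$, and verify the stated expression by a one-line factorial identity. Your route is shorter and bypasses the inductive bookkeeping; the paper's route, on the other hand, makes the layer-by-layer decomposition explicit (and in effect re-proves the hockey-stick identity). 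Both arrive at the same closed form, and your remark that $m\ge 1$ is guaranteed by the arena axioms is a useful sanity check that the paper leaves implicit.
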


From Proposition~\ref{prop:kbound} we derive that the number of possible subsidy schemes is {\em polynomial} in the budget $\beta$ and singly {\em exponential} in both the number of agents and states in the game.
At this point, solving \weak can be done with the following procedure:

\begin{enumerate}
	\item \label{proc:guesses}
	Guess: 
	\begin{itemize}
	\item a subsidy scheme $ \kappa \in \K(\Game,\beta) $,
	\item a state $ s \in \St $ for every player $ i \in \Ag $, and
	\item punishment memoryless strategies $ (\strpElm_{-1},\dots,\strpElm_{-n}) $ for all players $ i \in \Ag $;	
	\end{itemize}

	\item \label{proc:compute-gk}
	Compute $ (\Game,\kappa) $;
	
	\item \label{proc:compute-z}
	Compute $ z \in \SetR^{\Ag} $;
	
	\item \label{proc:compute-gz}
	Compute the game $(\Game,\kappa){[z]}$ by removing the states $s$ such that $\pun_i(s) \leq z_i$ for some player $i$ and the transitions $(s, \jact_{-i})$ that are not $z_i$ secure for player $i$;
	
	\item \label{proc:find-path}
	Check whether there exists an ultimately periodic path $ \pi $ in $ (\Game,\kappa){[z]}$ such that $ \pi \models \phi $ and $ z_i \leq \pay_i(\pi) $ for every player $ i \in \Ag $.
\end{enumerate}

Since the set $ \K(\Game,\beta) $ is finitely bounded (Proposition \ref{prop:kbound}), and punishment strategies only need to be memoryless, thus also finitely bounded, clearly step~\ref{proc:guesses} can be guessed nondeterministically. Moreover, each of the guessed elements is of polynomial size, thus this step can be done (deterministically) in polynomial space. 
Step~\ref{proc:compute-gk} clearly can be done in polynomial time.
Step~\ref{proc:compute-z} can also be done in polynomial time since, given $ (\strpElm_{-1},\dots,\strpElm_{-n}) $, we can compute $ z $ solving $|\Ag|$ one-player mean-payoff games, one for each player $ i $~\cite[Thm.~6]{ZP96}. For step~\ref{proc:find-path}, we will use Theorem~\ref{thm:pthfinding} and consider two cases, one for \LTL specifications and one for \GRone specifications. Firstly, for \LTL specifications, consider the formula
$$
\phi_{\WI} := \phi \wedge \bigwedge_{i \in \Ag} (\MP(i) \geq z_i)
$$
written in $\LTLlim$~\cite{BCHK14}, an extension of \LTL where statements about mean-payoff values over a given weighted arena can be made.%
\footnote{The formal semantics of $\LTLlim$ can be found in ~\cite{BCHK14}. We prefer to give only an informal description here.}
%Where $ \MP'(i) $ is the mean-payoff value received by player $ i $ in $ (\Game,\kappa) $, i.e., after applying the subsidy scheme $ \kappa $. 
%
The semantics of the temporal operators of $\LTLlim$ is just like the one for \LTL over infinite computation paths $\pi = s_0,s_1,s_3.\ldots$. On the other hand, the meaning of $\MP(i) \geq z_i$ is simply that such an atomic formula is true if, and only if, the mean-payoff value of $\pi$ with respect to player~$i$ is greater or equal to $z_i$, a constant real value; that is, $\MP(i) \geq z_i$ is true in $\pi$ if and only if $\pay_i(\pi) = \MP(\wFun_{i}(\pi))$ is greater or equal than constant value $z_i$. 
Formula $ \phi_{\WI} $ corresponds exactly to $ 2(b) $ in Theorem \ref{thm:pthfinding}. Furthermore, since every path in $ (\Game,\kappa){[z]} $ satisfies condition $ 2(a) $ of Theorem \ref{thm:pthfinding}, every computation path of $ (\Game,\kappa){[z]} $ that satisfies $ \phi_{\WI} $ is a witness to the \weak problem.

\begin{theorem}\label{thm:weak-ltl}
	\weak with \LTL specifications is \pspace-complete.
\end{theorem}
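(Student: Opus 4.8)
The plan is to establish membership and hardness separately. For the upper bound, I would argue that the five-step procedure described above runs in polynomial space. Steps~\ref{proc:guesses}--\ref{proc:compute-z} have already been justified in the text: the guessed objects (an admissible subsidy scheme, one state per player, and memoryless punishment strategies) are each of polynomial size by Proposition~\ref{prop:kbound} and the memoryless determinacy of two-player zero-sum mean-payoff games, and computing $(\Game,\kappa)$, the vector $z$, and the pruned game $(\Game,\kappa)[z]$ is polynomial-time work. So the only remaining obligation is step~\ref{proc:find-path}: deciding whether $(\Game,\kappa)[z]$ admits an ultimately periodic path $\pi$ with $\pi \models \phi_{\WI}$, where $\phi_{\WI} := \phi \wedge \bigwedge_{i \in \Ag}(\MP(i) \geq z_i)$ is the $\LTLlim$ formula introduced above. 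I would invoke the known result of~\cite{BCHK14} that non-emptiness of $\LTLlim$ over a weighted arena (equivalently, model checking the existential fragment) is decidable in \pspace; this gives step~\ref{proc:find-path} in polynomial space. By Theorem~\ref{thm:pthfinding}, any such $\pi$ is exactly the outcome of a Nash equilibrium of $(\Game,\kappa)$ satisfying $\phi$, and conversely every witness to \weak arises this way, so the procedure is correct. Since the nondeterministic guesses of step~\ref{proc:guesses} are polynomially bounded, the whole thing is an \np-computation with a \pspace oracle, hence in \pspace by Savitch/closure of \pspace under such combinations; more directly, one can enumerate the (polynomially many bits of) guesses deterministically within \pspace.

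For the lower bound, I would reduce from \LTL satisfiability (equivalently, \LTL realisability over a trivial arena, or from a canonical \pspace-complete problem such as the \LTL model-checking/satisfiability problem). The cleanest route: given an \LTL formula $\phi$ over $\AP$, build a single-player game $\Game$ whose arena is the "universal" arena that in each step lets the one player choose an arbitrary subset of $\AP$ to label the next state (so paths correspond bijectively to words in $(2^{\AP})^\omega$), set all weights $\wFun_1 \equiv 0$, and take budget $\beta = 0$. Then $\NE(\Game,\kappa) = \NE(\Game)$ is the set of all strategy profiles (a single player facing no opponents is always at equilibrium, and with $\beta=0$ the only subsidy scheme is the zero one), so $\WI(\Game,\phi,0) \neq \varnothing$ iff some path of $\Game$ satisfies $\phi$ iff $\phi$ is satisfiable. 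This is a logspace reduction and gives \pspace-hardness.

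The main obstacle I anticipate is the upper bound's step~\ref{proc:find-path}: one must be careful that the mean-payoff side constraints $\MP(i) \geq z_i$ together with the \LTL formula $\phi$ can be checked for an ultimately periodic witness within \pspace, rather than, say, needing the full machinery of~\cite{BCHK14} at a higher cost, and that ultimately periodic paths suffice (which is guaranteed by Theorem~\ref{thm:pthfinding}'s hypothesis and by the fact that $\LTLlim$ non-emptiness witnesses can be taken lasso-shaped). A secondary subtlety is confirming that pruning to $(\Game,\kappa)[z]$ does not destroy or spuriously create witnesses — i.e., that condition~$2(a)$ of Theorem~\ref{thm:pthfinding} is exactly captured by the removal of non-$z_i$-secure transitions and dominated states — but this is essentially bookkeeping given the theorem. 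The hardness direction is routine.
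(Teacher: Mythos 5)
Your proposal is correct and follows essentially the same route as the paper: membership via the five-step procedure with step~5 discharged by \pspace model checking of the $\LTLlim$ formula $\phi_{\WI}$ (using Theorem~\ref{thm:pthfinding} for correctness), and hardness by setting all weights to zero and the budget to zero so that every strategy profile is a Nash equilibrium and the problem collapses to existential \LTL model checking/satisfiability, which is \pspace-hard. Your write-up is somewhat more explicit about the nondeterministic guesses being absorbed into \pspace and about the universal single-player arena, but these are elaborations of the paper's argument rather than a different proof.
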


\begin{proof}
Membership follows from the procedure above and the fact that model checking for $\LTLlim$ is \pspace-complete~\cite{BCHK14}. Hardness follows from the fact that \LTL model checking is a special case of \weak. For instance, consider the case in which all weights for all players are set to the same value, say 0, and the principal has budget $ \beta = 0 $.
\end{proof}

\noindent\textbf{Case with $ \GRone $ specifications.} 
One of the main bottlenecks of our procedure to solve \weak lies in step~\ref{proc:find-path}, where we solve an $ \LTLlim $ model checking problem. To reduce the complexity of our decision procedure, we consider \weak with the specification $ \phi $ expressed in the $ \GRone $ sublanguage of \LTL. With this specification language, the path finding problem can be solved without model-checking the $ \LTLlim $ formula given before. In order to do this, we can define a linear program (\LP) such that the \LP~has a solution if and only if~$ \WI(\Game,\phi,\beta) \neq \varnothing $. 
%This approach is similar to the technique used in \cite[Theorem 2]{GMPRW17}. 
From our previous procedure, observe that step~\ref{proc:guesses} can be done nondeterministically in polynomial time, and steps~\ref{proc:compute-gk}--\ref{proc:compute-gz} can be done (deterministically) in polynomial time. Furthermore, using \LP, we also can check step~\ref{proc:find-path} deterministically in polynomial time.
%\footnote{Due to lack of space the \LP~program is not presented here. It can be found in the appendix.}
For the lower-bound, we use \cite{UW11} and note that if $\phi=\top$ and $ \beta = 0 $, then the problem reduces to checking whether the underlying \MP game has a Nash equilibrium. Based on the above observations, we have the following result. 

\begin{theorem}\label{thm:weak-gr1}
	\weak with $ \GRone $ specifications is \np-complete.
\end{theorem}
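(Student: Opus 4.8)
The plan is to prove membership in \np and \np-hardness separately. For the upper bound I would reuse the five-step procedure preceding the statement almost verbatim, changing only the expensive step~\ref{proc:find-path}: instead of an $\LTLlim$ model-checking call, I would run a polynomial-time feasibility test for a linear program. This replacement works precisely because \GRone goals are \emph{prefix-independent}: whether an ultimately periodic path satisfies $\phi = (\bigwedge_{j=1}^m \always\sometime\psi_j)\to(\bigwedge_{k=1}^n\always\sometime\phi_k)$ depends only on which of the Boolean state-formulae $\psi_j,\phi_k$ are true along its eventual cycle, and hence $\phi$ holds on the path iff either (i) there is a $j$ such that no state of the cycle satisfies $\psi_j$, or (ii) for every $k$ some state of the cycle satisfies $\phi_k$.

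First I would run steps~\ref{proc:guesses}--\ref{proc:compute-gz} unchanged: nondeterministically guess the admissible scheme $\kappa\in\K(\Game,\beta)$ (of polynomial size by Proposition~\ref{prop:kbound}), the states $s_i$ and memoryless punishment strategies, then deterministically build $(\Game,\kappa)$, compute $z$ by solving $N$ one-player mean-payoff games~\cite{ZP96}, and prune to $(\Game,\kappa)[z]$. In place of step~\ref{proc:find-path} I would additionally guess (a) either an index $j\in\{1,\dots,m\}$ or a flag $\ast$, and (b) a subgraph $H$ of the arena obtained from $(\Game,\kappa)[z]$ by deleting, in case $j$, every state $s$ with $\labFun(s)\models\psi_j$ (evaluating the Boolean formula $\psi_j$ at the single valuation $\labFun(s)$ is immediate). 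I then check deterministically that $H$ is strongly connected, reachable from $s_0$ inside $(\Game,\kappa)[z]$, and --- in case $\ast$ --- contains for each $k$ a state satisfying $\phi_k$; and I solve the linear program over variables $\{x_e\ge 1 : e\in E(H)\}$ with a flow-conservation constraint at each vertex of $H$ and, for every player $i$, the constraint $\sum_{e\in E(H)} x_e\,\wFun'_i(\mathrm{src}(e)) \ge z_i\sum_{e\in E(H)} x_e$, accepting iff it is feasible. All guessed data have polynomial size, and all deterministic steps --- including linear-program feasibility --- run in polynomial time, so the procedure is an \np one.

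The hard part will be checking that this linear-programming test is sound and complete. For soundness, scaling a feasible rational solution to positive integers turns $H$, with $x_e$ parallel copies of each edge $e$, into a connected Eulerian directed multigraph, hence it has an Eulerian closed walk $C$; prefixing a path from $s_0$ to a vertex of $H$ inside $(\Game,\kappa)[z]$ to $C^\omega$ gives an ultimately periodic path $\pi$ whose states are eventually confined to $H$. By the chosen case $\pi\models\phi$; since $\pi$ stays inside $(\Game,\kappa)[z]$ it satisfies clause~$2(a)$ of Theorem~\ref{thm:pthfinding}; and $\pay_i(\pi)=\big(\sum_e x_e\,\wFun'_i(\mathrm{src}(e))\big)\big/\big(\sum_e x_e\big)\ge z_i$ gives clause~$2(b)$. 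Theorem~\ref{thm:pthfinding} applied to $(\Game,\kappa)$ then yields $\strpElm\in\NE(\Game,\kappa)$ with $\pi(\strpElm)=\pi$, so $\kappa\in\WI(\Game,\phi,\beta)$. For completeness, if $\kappa\in\WI(\Game,\phi,\beta)$ then Theorem~\ref{thm:pthfinding} furnishes, for a suitable $z$, an ultimately periodic path in $(\Game,\kappa)[z]$ satisfying $\phi$ with $\pay_i\ge z_i$ for all $i$; the edges of its eventual cycle induce a strongly connected reachable subgraph $H$ for which the appropriate case $j$ or $\ast$ holds, and setting $x_e$ to the multiplicity of $e$ in one period of the cycle witnesses feasibility of the program. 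Thus the procedure accepts for some choice of guesses iff $\WI(\Game,\phi,\beta)\neq\varnothing$.

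For \np-hardness I would follow the hint in the text and reduce from the existence of a Nash equilibrium in a multi-player mean-payoff game, which is \np-hard by~\cite{UW11}. Given such a game $\Game$, I output the \weak instance $(\Game,\top,0)$, noting that $\top$ lies in \GRone, for instance as $(\always\sometime\top)\to(\always\sometime\top)$. With budget $0$ the only admissible subsidy scheme is the constant-zero one, so $(\Game,\kappa)=\Game$, and every path trivially satisfies $\top$; hence $\WI(\Game,\top,0)\neq\varnothing$ if and only if $\NE(\Game)\neq\varnothing$. This is a polynomial-time many-one reduction, so \weak with \GRone specifications is \np-hard, which together with the membership argument gives \np-completeness.
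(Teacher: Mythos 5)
Your proof follows essentially the same route as the paper's: both replace the $\LTLlim$ model-checking step by a polynomial-size linear-program feasibility test built on the case split ``some $\always\sometime\psi_j$ fails'' versus ``all $\always\sometime\theta_r$ hold'', and both obtain hardness by reducing from Nash-equilibrium existence in mean-payoff games with $\phi=\top$ and $\beta=0$ via \cite{UW11}. If anything, your version is slightly more careful than the paper's sketch: by guessing the strongly connected support $H$ with $x_e\ge 1$ and invoking the Eulerian-circuit argument you exclude LP solutions supported on disconnected cycles, and your constraint $\sum_e x_e\,\wFun_i(\src(e))\ge z_i\sum_e x_e$ makes the threshold $z_i$ explicit where the paper's Eq3 only asks for a non-negative weighted sum.
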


\begin{proof}[Proof sketch]
For the upper bound, we define an \LP~of size polynomial in $(\Game,\kappa)$ having a solution if and only if there is an ultimately periodic path $ \pi $ such that $z_i \leq \pay_i(\pi) $ and satisfies the \GRone specification. 
	Recall that $\phi$ has the following form
	$$
	\phi = \bigwedge_{l = 1}^{m} \always \sometime \psi_{l} \to \bigwedge_{r = 1}^{n} \always \sometime \theta_{r}\text{,}
	$$
	and let $V(\psi_{l})$ and $V(\theta_r)$ be the subset of states in $(\Game,\kappa)$ that satisfy the Boolean combinations $\psi_{l}$ and $\theta_{r}$, respectively. Property $\phi$ is satisfied on $\pi$ if, and only if, either $\pi$ visits every state in $V(\theta_r)$ infinitely often or some of the states in $V(\psi_{l})$ only a finite number of times.
	For the game $(\Game,\kappa){[z]}$, let $ W = (V, E, (\wFun_a)_{a \in \Ag})$ be the underlying multi-weighted graph, and for every edge $e\in E$ introduce a variable $x_e$. Informally, the value of $x_e$ is the number of times that $e$ is used on a cycle. Formally, let $\src(e) = \{v \in V : \exists w\, e = (v,w) \in E\}$; $\trg(e) = \{v \in V : \exists w\, e = (w,v) \in E\}$; $\OUT(v) = \{e \in E : \src(e) = v\}$; and $\IN(v) = \{e \in E : \trg(e) = v\}$.
	Now, consider $\psi_{l}$ for some $1 \leq l \leq m$, and define the following linear program $\LP(\psi_{l})$: 
	\begin{enumerate}
		\item[Eq1:]
		$x_e \geq 0$ for each edge $e$ %--- this is 
		--- a basic consistency criterion;
		
		\item[Eq2:]
		$\Sigma_{e \in E} x_e \geq 1$ %--- this 
		--- at least one edge is chosen;
		
		\item[Eq3:]
		for each $a \in \Ag$, $\Sigma_{e \in E} \wFun_a(\src(e)) x_e \geq 0$ --- total sum of any solution is non-negative;
		
		\item[Eq4:]
		$\Sigma_{\src(e) \cap V(\psi_{l}) \neq \emptyset} x_e = 0$ --- no state in $V(\psi_{l})$ is in the cycle associated with the solution;
		
		\item[Eq5:]
		for each $v \in V$, $\Sigma_{e \in \OUT(v)} x_e = \Sigma_{e \in \IN(v)} x_e$  --- this condition says that the number of times one enters a vertex is equal to the number of times one leaves that vertex.	
	\end{enumerate}
	$\LP(\psi_{l})$ has a solution if and only if there is a path $\pi$ in $\Game$ such that $z_i \leq \pay_i(\pi) $ for every player $i$ and visits $V(\psi_{l})$ only {\em finitely many times}.
	Consider now the linear program $\LP(\theta_{1}, \ldots, \theta_{n})$ defined as follows. Eq1--Eq3 as well as Eq5 are as in $\LP(\psi_{l})$, and:
	
	\begin{enumerate}		
	
		\item[Eq4:]
		for all $1 \leq r \leq n$, $\Sigma_{\src(e) \cap V(\theta_{r}) \neq \emptyset} x_e \geq 1$ ---  this condition says that, for every $V(\theta_{r})$, at least one state in $V(\theta_{r})$ is in the cycle associated with the solution of the linear program. 
		
	\end{enumerate}
	In this case, $\LP(\theta_{1}, \ldots, \theta_{n})$  has a solution if and only if there exists a path $\pi$ such that $z_i \leq \pay_i(\pi) $ for every player $i$ and visits every $V(\theta_{r})$ {\em infinitely many times}.
	Since the constructions above are polynomial in the size of both $(\Game,\kappa)$ and $\phi$, we can conclude it is possible to check in \np the statement that there is a path $\pi$ satisfying $\varphi$ such that $z_i \leq \pay_i(\pi) $ for every player~$i$ in the game if and only if one of the two linear programs defined above has a solution.
	For the lower-bound, we use \cite{UW11} as discussed before. 
\end{proof}

We now turn our attention to the strong implementation of the equilibrium design problem. As in this section, we first consider \LTL specifications and then \GRone specifications. 

\section{Equilibrium Design: Strong Implementation}
Although the principal may find $ \WI(\Game,\phi,\beta) \neq \varnothing $ to be good news, it might not be good enough. It could be that even though there is a desirable Nash equilibrium, the others might be undesirable. This motivates us to consider the \textit{strong implementation} variant of equilibrium design. Intuitively, in a strong implementation, we require that \textit{every} Nash equilibrium outcome satisfies the specification~$ \phi $, for a {\em non-empty} set of outcomes. Then, let $ \SI(\Game,\phi,\beta) $ denote the set of subsidy schemes $ \kappa $ given budget $ \beta $ over $ \Game $ such that:
\begin{enumerate}
	\item $ (\Game,\kappa) $ has at least one Nash equilibrium outcome,
	\item every Nash equilibrium outcome of $ (\Game,\kappa) $ satisfies $ \phi $.
\end{enumerate}

Formally we define it as follows:
$$
\SI(\Game,\phi,\beta) = \{ \kappa \in \mathcal{K}(\Game,\beta) : \NE(\Game,\kappa) \neq \varnothing \wedge \forall \strpElm \in \NE(\Game,\kappa)~\textrm{s.t.}~\pi(\strpElm) \models \phi \}.
$$
This gives us the following decision problem:
\begin{definition}[\strong]
	Given a game $\Game$, formula $\varphi$, and budget $ \beta $:
	\begin{center}
		Is it the case that $ \SI(\Game,\phi,\beta) \neq \varnothing $?
	\end{center}
\end{definition}

\strong can be solved with a 5-step procedure where the first four steps are as in \weak, and the last step (step~5) is as follows:%
%\footnote{The (deterministic) algorithm implementing this procedure can be found in the appendix, Section \ref{apd:algos}.}:
%\begin{enumerate}
%	\item \label{proc:guesses-s}
%	Guess subsidy scheme $ \kappa \in \K(\Game,\beta) $, a state $ s \in \St $ for every player $ i \in \Ag $, and punishment strategies $ (\strpElm_{-1},\dots,\strpElm_{-n}) $;
%	
%	%	\item \label{proc:scheme-check}
%	%	Check if $ \kappa $ is admissible;
%	
%	\item \label{proc:compute-gk-s}
%	Compute $ (\Game,\kappa) $;
%	
%	\item \label{proc:compute-z-s}
%	Compute $ z \in \SetR^{\Ag} $;
%	
%	\item \label{proc:compute-gz-s}
%	Compute the game $(\Game,\kappa){[z]}$ by removing the states $s$ such that $\pun_i(s) \leq z_i$ for some player $i$ and the transitions $(s, \jact)$ that are not $z_i$ secure for some player $i$;
%	
%	\item \label{proc:find-path-s}
%	Check whether
%	\begin{enumerate}
%		\item there exists no ultimately periodic path $ \pi $ in $ (\Game,\kappa){[z]}$ such that $ z_i \leq \pay_i(\pi) $ for every player $ i \in \Ag $;
%		\item there exists an ultimately periodic path $ \pi $ in $ (\Game,\kappa){[z]}$ such that $ \pi \models \lnot\phi $ and $ z_i \leq \pay_i(\pi) $ for every player $ i \in \Ag $.
%	\end{enumerate}
%\end{enumerate}
%
%
%
\begin{enumerate}
	\item[{\sf 5}] \label{proc:find-path-s}
	Check whether:
	\begin{enumerate}
		\item there is no ultimately periodic path $ \pi $ in $ (\Game,\kappa){[z]}$ such that $ z_i \leq \pay_i(\pi) $ for each $ i \in \Ag $;
		\item there is an ultimately periodic path $ \pi $ in $ (\Game,\kappa){[z]}$ such that $ \pi \models \lnot\phi $ and $ z_i \leq \pay_i(\pi) $, for each $ i \in \Ag $.
	\end{enumerate}
\end{enumerate}

 For step~\ref{proc:find-path-s}, observe that a positive answer to \ref{proc:find-path-s}(a) or \ref{proc:find-path-s}(b) is a counterexample to $ \kappa \in \SI(\Game,\phi,\beta) $. Then, to carry out this procedure for the \strong problem with \LTL specifications, consider the following \LTLlim formulae:

\begin{ceqn}
	\begin{align*}
	\phi_{\exists} &= \bigwedge_{i \in \Ag} (\MP(i) \geq z_i);\\
	\phi_{\forall} &= \phi_{\exists} \rightarrow \phi.
	\end{align*}
\end{ceqn}

Notice that the expression $ \NE(\Game,\kappa) \neq \varnothing $ can be expressed as ``there exists a path $ \pi $ in $ \Game $ that satisfies formula $ \phi_{\exists} $''. On the other hand, the expression $ \forall \strpElm \in \NE(\Game,\kappa)~\textrm{such that}~\pi(\strpElm) \models \phi $ can be expressed as ``for every path $ \pi $ in $ \Game $, if $\pi$ satisfies formula $ \phi_{\exists} $, then $\pi$ also satisfies formula $ \phi $''. Thus, using these two formulae, we obtain the following result. 

\begin{corollary}\label{cor:strong-ltl}
	\strong with \LTL specifications is \pspace-complete.
\end{corollary}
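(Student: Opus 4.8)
The plan is to follow the blueprint of the proof of Theorem~\ref{thm:weak-ltl}, now driving it by the five-step procedure for \strong: steps~1--4 are exactly those of \weak, and the last step is handled by the two $\LTLlim$ queries induced by the formulae $\phi_{\exists}$ and $\phi_{\forall}$.

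\textbf{Membership.} Steps~1--4 are verbatim those of the \weak procedure, hence already analysed: by Proposition~\ref{prop:kbound} each $\kappa\in\K(\Game,\beta)$ has a polynomial-size encoding, memoryless punishment profiles are polynomial objects, and building $(\Game,\kappa)$, computing $z$ from the guessed punishment profiles by solving $|\Ag|$ one-player mean-payoff games~\cite{ZP96}, and trimming to $(\Game,\kappa)[z]$ (whose computation paths are exactly those meeting condition~2(a) of Theorem~\ref{thm:pthfinding}) all take polynomial time. The new ingredient is the recasting of step~5. By Theorem~\ref{thm:pthfinding}, ``$\kappa\in\SI(\Game,\phi,\beta)$'' is equivalent to the conjunction of (i) \emph{some} computation path of $(\Game,\kappa)[z]$ satisfies $\phi_{\exists}$, i.e.\ $(\Game,\kappa)[z]\not\models\lnot\phi_{\exists}$ (this witnesses $\NE(\Game,\kappa)\neq\varnothing$), and (ii) \emph{every} computation path of $(\Game,\kappa)[z]$ satisfies $\phi_{\forall}=\phi_{\exists}\to\phi$, i.e.\ $(\Game,\kappa)[z]\models\phi_{\forall}$ (this witnesses that all Nash equilibrium outcomes satisfy $\phi$); a positive answer to~5(a), respectively~5(b), is precisely the failure of (i), respectively (ii). Both (i) and (ii) are $\LTLlim$ model-checking instances, \pspace-complete by~\cite{BCHK14}, and \pspace is closed under complementation. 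Since the data guessed in step~1 has polynomial size, it can be enumerated deterministically in polynomial space, and for each candidate steps~2--5 run in polynomial space, the workspace being reused across iterations. Hence the whole procedure runs in \pspace (equivalently, it is an $\mathsf{NPSPACE}$ procedure and $\mathsf{NPSPACE}=\pspace$ by Savitch's theorem).

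\textbf{Hardness.} I would reduce from (universal) \LTL model checking, exactly as in Theorem~\ref{thm:weak-ltl}. Given a Kripke structure and an \LTL formula $\phi$, encode the Kripke structure as a one-player arena in the standard way (the single player's actions select the successor state), put $\wFun_i\equiv 0$ for every player, and set $\beta=0$. Then $\K(\Game,0)$ is the singleton consisting of the identically-zero subsidy scheme, $(\Game,\kappa)=\Game$, \emph{every} strategy profile is a Nash equilibrium (all payoffs equal $0$, so no deviation is ever profitable, and in particular $\NE(\Game)\neq\varnothing$), and the Nash equilibrium outcomes are exactly the ultimately periodic paths of the arena. Therefore $\SI(\Game,\phi,0)\neq\varnothing$ holds iff every ultimately periodic path of the arena satisfies $\phi$, which --- since a violated \LTL property always admits an ultimately periodic counterexample --- is equivalent to \LTL model checking, a \pspace-complete problem. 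The reduction is polynomial, so \strong with \LTL specifications is \pspace-hard.

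\textbf{Expected main obstacle.} The delicate point is the precise justification of the recasting of step~5: one must verify, in both directions and uniformly over \emph{all} computation paths, that the trimmed game $(\Game,\kappa)[z]$ together with $\phi_{\exists}$ captures exactly the Nash equilibrium outcomes of $(\Game,\kappa)$, and that letting the guess range over all memoryless punishment profiles (hence over all admissible punishment vectors $z$) loses nothing; getting the quantifier alternation right --- existential for non-emptiness of $\NE$, universal for the satisfaction of $\phi$ --- is where care is needed. Everything else (the budget bound of Proposition~\ref{prop:kbound}, the polynomial-time sub-steps, the $\mathsf{NPSPACE}=\pspace$ collapse) is routine and inherited from the analysis of \weak.
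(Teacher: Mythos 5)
Your proposal is correct and follows essentially the same route as the paper: membership via the five-step procedure with step~5(a) handled by existential and step~5(b) by universal $\LTLlim$ model checking (both in \pspace, with Savitch's theorem closing the nondeterminism), and hardness by the same zero-weight, zero-budget reduction from \LTL model checking used for Theorem~\ref{thm:weak-ltl}. The one subtlety you rightly flag --- that the universal condition ``every Nash equilibrium outcome satisfies $\phi$'' must be checked against \emph{all} admissible punishment vectors $z$, not only the guessed one --- is handled in the paper by deterministically iterating over all such $z$ (as in its Algorithm for \strong), which stays within \pspace.
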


\begin{proof}
	Membership follows from the fact that step~\ref{proc:find-path-s}(a) can be solved by existential \LTLlim model checking, whereas step~\ref{proc:find-path-s}(b) by universal \LTLlim model checking---both clearly in \pspace by Savitch's theorem. Hardness is similar to the construction in Theorem \ref{thm:weak-ltl}.
\end{proof}

\noindent\textbf{Case with $ \GRone $ specifications.}
Notice that the first part, {\em i.e.}, $ \NE(\Game,\kappa) \neq \varnothing $ can be solved in \np \cite{UW11}. For the second part, observe that $$ \forall \strpElm \in \NE(\Game,\kappa)~\textrm{such that}~\pi(\strpElm) \models \phi $$ is equivalent to $$ \neg\exists \strpElm \in \NE(\Game,\kappa)~\text{such that}~\pi(\strpElm) \models \lnot \phi. $$

Thus we have
$$
\lnot\varphi = \bigwedge_{l = 1}^{m} \always \sometime \psi_{l} \wedge \lnot \big( \bigwedge_{r = 1}^{n} \always \sometime \theta_{r} \big)\text{.}
$$
To check this, we modify the \LP~in Theorem \ref{thm:weak-gr1}. Specifically, we modify Eq4 in $\LP(\theta_{1}, \ldots, \theta_{n})$ to encode the $ \theta $-part of $ \lnot \phi $. Thus, we have the following equation in $\LP'(\theta_{1}, \ldots, \theta_{n})$:

\begin{enumerate}
	
	\item[Eq4:]
	there exists $ r $, $1 \leq r \leq n$, $\Sigma_{\src(e) \cap V(\theta_{r}) \neq \emptyset} x_e = 0$ --- this condition ensures that at least one set $V(\theta_{r})$ does not have any state in the cycle associated with the solution. 
	
\end{enumerate}

In this case, $ \LP'(\theta_{1}, \ldots, \theta_{n}) $ has a solution if and only if there is a path $ \pi $ such that $ z_i \leq \pay_i(\pi) $ for every player $ i $ and, for at least one $ V(\theta_{r}) $, its states are visited only \textit{finitely many times}. Thus, we have a procedure that checks if there is a path $ \pi $ that satisfies $ \lnot \phi $ such that $ z_i \leq \pay_i(\pi) $ for every player $ i $, if and only if both linear programs have a solution. Using this new construction, we can now prove the following result. 

\begin{theorem}\label{thm:strong-gr1}
	 \strong with \GRone specifications  is $ \SigmaPTwo $-complete.
\end{theorem}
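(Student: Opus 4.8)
The plan is to establish $\SigmaPTwo$-membership by reading off the five-step procedure, and $\SigmaPTwo$-hardness by a reduction from the $\SigmaPTwo$-complete fragment of \QSAT, namely deciding whether $\exists\vec{x}\,\forall\vec{y}\,\chi(\vec{x},\vec{y})$ holds for a quantifier-free matrix $\chi$ in conjunctive normal form. For membership, observe that ``$\kappa\in\SI(\Game,\phi,\beta)$'' is the conjunction of an $\np$ property and a $\conp$ property of the fixed game $(\Game,\kappa)$. First, ``$\NE(\Game,\kappa)\neq\varnothing$'' is in $\np$: by Theorem~\ref{thm:pthfinding} it suffices to guess, for each player $i$, a state together with memoryless punishment strategies $\strpElm_{-i}$ (all of polynomial size), to compute in polynomial time the induced threat vector $z$ and the pruned game $(\Game,\kappa)[z]$, and then to solve a polynomially sized linear program (as in the proof of Theorem~\ref{thm:weak-gr1}, taking $\phi$ to be $\top$) witnessing an ultimately periodic path $\pi$ in $(\Game,\kappa)[z]$ with $\pay_i(\pi)\geq z_i$ for all $i$; by Theorem~\ref{thm:pthfinding} every such $\pi$ is the outcome of some $\strpElm\in\NE(\Game,\kappa)$. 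Second, the negation of ``every Nash equilibrium outcome of $(\Game,\kappa)$ satisfies $\phi$'' is the existence of a Nash equilibrium outcome satisfying $\lnot\phi$, which is again in $\np$ by the same guess followed by the linear programs $\LP'(\theta_1,\dots,\theta_n)$ and $\LP(\psi_1),\dots,\LP(\psi_m)$ constructed just before this statement. Writing the first property as $\exists u\,R_1$ and the second as $\forall v\,R_2$ for polynomial-time predicates $R_1,R_2$, we obtain
$$\SI(\Game,\phi,\beta)\neq\varnothing\iff\exists(\kappa,u)\,\forall v\;\big(R_1(\Game,\phi,\beta,\kappa,u)\wedge R_2(\Game,\phi,\beta,\kappa,v)\big)\text{,}$$
which is a $\SigmaPTwo$ predicate.

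For hardness, I would build a game $\Game$ containing, for each existential variable $x_j$, a \emph{commitment gadget}: a two-player, simultaneous-move ``matching pennies'' component played by fresh players $X_j$ and $\bar{X}_j$, whose successor states carry the atomic propositions that record the value of $x_j$. On its own this gadget admits no pure, finite-memory Nash equilibrium --- using determinacy of the associated zero-sum repeated game and the fact that pure strategies cannot realise the value of matching pennies --- but placing a fixed amount $c$ of subsidy on the appropriate successor state makes one action strictly dominant for $X_j$, yielding a unique equilibrium in which $X_j$ commits to a value of $x_j$ forever. In addition, for each universal variable $y_k$ there is a one-player gadget controlled by a player $Y_k$ with null weights, so that both values of $y_k$ occur in equilibria. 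Setting the budget to $\beta=cp$ forces every admissible $\kappa$ with $\NE(\Game,\kappa)\neq\varnothing$ to spend exactly $c$ on each commitment gadget --- hence to encode a full assignment $\vec{x}(\kappa)$ to the existential variables, with nothing left to perturb the $Y_k$'s --- so the Nash equilibrium outcomes of $(\Game,\kappa)$ are precisely the paths encoding $(\vec{x}(\kappa),\vec{y})$ for $\vec{y}\in\{0,1\}^{q}$. Taking $\phi=\bigwedge_{\text{clause }C}\always\sometime\big(\bigvee_{\ell\in C}\mathrm{prop}(\ell)\big)$, which is a \GRone formula, we then have that $(\Game,\kappa)$ has a Nash equilibrium and all its equilibria satisfy $\phi$ iff $\forall\vec{y}\,\chi(\vec{x}(\kappa),\vec{y})$, so $\SI(\Game,\phi,\beta)\neq\varnothing$ iff the input \QSAT instance is true. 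This refines the $\np$-hardness argument behind Theorem~\ref{thm:weak-gr1} by one quantifier layer, in the spirit of~\cite{UW11}.

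The step I expect to be the main obstacle is the hardness construction, and specifically the analysis of the commitment gadget \emph{inside} the full, infinitely repeated, concurrent game. One must verify (i) that without the committing subsidy the gadget really destroys all Nash equilibria of $\Game$; (ii) that with the subsidy the only equilibria are those in which $X_j$ commits to $x_j$ in every round, so that the $\always\sometime$ operator correctly reads off the assignment; and (iii) that the global budget $\beta=cp$ leaves no slack for $\kappa$ to commit, or otherwise tamper with, the universal gadgets --- which is exactly what makes ``all equilibria satisfy $\phi$'' encode a genuine $\forall\vec{y}$ rather than a weaker statement over a sub-cube of assignments. By contrast, membership is routine once Theorem~\ref{thm:pthfinding} and the linear-programming machinery of Theorem~\ref{thm:weak-gr1} are available.
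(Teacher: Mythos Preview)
Your membership argument matches the paper's: for fixed $\kappa$, non-emptiness of $\NE(\Game,\kappa)$ is in $\np$ via Theorem~\ref{thm:pthfinding} together with the linear programs of Theorem~\ref{thm:weak-gr1}, and ``all equilibria satisfy $\phi$'' is in $\conp$ via a guessed threat vector and the modified program $\LP'(\theta_1,\dots,\theta_n)$; quantifying existentially over $\kappa$ gives $\SigmaPTwo$. The hardness construction, however, is genuinely different. The paper reduces from $\QSAT_2$ over a 3DNF matrix using only \emph{two} players: player~$1$ selects a term, player~$2$ (with $\pay_2=-\pay_1$) selects the $y$-assignment, the subsidy $\kappa$ encodes the $x$-assignment by adjusting the weights of the literal states, and the specification is simply $\always\sometime\neg\sink$. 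Your route---one matching-pennies pair per existential variable, one indifferent player per universal variable, clauses encoded directly in $\phi$ as a conjunction of $\always\sometime$ atoms---is more modular and makes the ``budget forces commitment'' mechanism explicit, whereas the paper's construction is more compact and lets a single adversarial player supply the universal quantifier.

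Your list of obstacles is accurate but incomplete. You do not say how the gadgets are assembled into one arena; a direct product of $p$ matching-pennies components has exponentially many states, so you must use a sequential (round-robin) arrangement with trivial actions for the inactive players, and the ``no pure finite-memory NE'' analysis must be redone in that diluted game. Obstacle~(iii) is also sharper than you state: you need the \emph{minimum} subsidy creating any NE in a gadget to be exactly~$c$, and every $c$-cost placement that does create a NE to pin $X_j$ to a single value---otherwise leftover budget can bias some $Y_k$, and the intended $\forall\vec y$ collapses to a quantifier over a strict sub-cube. With $\pm1$ matching-pennies weights this requires a small case analysis (for instance $c=2$ works, but placing the $2$ on a ``match'' state yields no dominant action and hence no NE, while placing it on a ``mismatch'' state does commit $x_j$); the paper's two-player construction sidesteps this bookkeeping entirely.
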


\begin{proof}[Proof sketch]
	For membership, observe that by rearranging the problem statement, we have the following question:
	Check whether the following expression is true
	\begin{ceqn}
		\begin{align*}
		\exists \kappa \in \K(\Game,\beta)&,\tag{1}\\
		&\exists \strpElm \in \strElm_1 \times \cdots \times \strElm_n,~\text{such that}~\strpElm \in \NE(\Game,\kappa),\tag{2}\\
		~\text{and}&\\
		&\forall \strpElm' \in \strElm_1 \times \cdots \times \strElm_n,~\text{if}~\strpElm' \in \NE(\Game,\kappa)~\text{then}~\pi(\strpElm') \models \phi.\tag{3}
		\end{align*}
	\end{ceqn}
	Statement $ (2) $ can be checked in \np (Theorem \ref{thm:pthfinding}), whereas verifying statement $ (3) $ is in co\np; to see this, notice that we can rephrase $ (3) $ as follows: $ \neg \exists z \in \{ \pun_i(s) : s \in \St \}^{\Ag}  $ such that both $ \LP(\psi_{l}) $ and $ \LP'(\theta_1,\dots,\theta_{n}) $ have a solution in $ (\Game,\kappa){[z]} $. Thus, membership in $ \SigmaPTwo $ follows.
	We prove hardness via a reduction from $ \QSAT_2 $ (satisfiability of quantified Boolean formulae with 2 alternations), which is known to be $ \SigmaPTwo $-complete  \cite{1994-papadimitriou}. 
\end{proof}

\section{Optimality and Uniqueness of Solutions}
Having asked the questions studied in the previous sections, the principal -- the {\em designer} in the equilibrium design problem -- may want to explore further information. Because the power of the principal is limited by its budget, and because from the point of view of the system, it may be associated with a reward ({\em e.g.}, money, savings, etc.) or with the inverse of the amount of a finite resource ({\em e.g.}, time, energy, etc.) an obvious question is asking about {\em optimal} solutions. This leads us to {\em optimisation} variations of the problems we have studied. Informally, in this case, we ask what is the least budget that the principal needs to ensure that the implementation problems have positive solutions. The principal may also want to know whether a given subsidy scheme is {\em unique}, so that there is no point in looking for any other solutions to the problem. In this section, we investigate these kind of problems, and classify our study into two parts, one corresponding to the \weak problem and another one corresponding to the \strong problem.

\subsection{Optimality and Uniqueness in the Weak Domain}
We can now define formally some of the problems that we will study in the rest of this section. To start, the optimisation variant for \weak is defined as follows.

\begin{definition}[\optwi]
	Given a game $\Game$ and a specification formula $\phi$:
	\begin{center}
		What is the optimum budget $ \beta$ such that $ \WI(\Game,\phi,\beta) \neq \varnothing $?
	\end{center}
\end{definition}

Another natural problem, which is related to \optwi, is the ``exact'' variant -- a membership question. In this case, in addition to $ \Game $ and $ \phi $, we are also given an integer $ b $, and ask whether it is indeed the smallest amount of budget that the principal has to spend for some optimal weak implementation. This decision problem is formally defined as follows.

\begin{definition}[\exactwi]
	Given a game $\Game$, a specification formula $\phi$, and an integer $ b $:
	\begin{center}
%		Is $b$ a solution of $\optwi(\Game,\phi)$?
		Is $b$ equal to the optimum budget for $ \WI(\Game,\phi,\beta) \neq \varnothing $?
	\end{center}
\end{definition}

To study these problems, it is useful to introduce some concepts first. More specifically, let us introduce the concept of {\em implementation efficiency}. We say that a \weak (resp.\ \strong) is \textit{efficient} if $ \beta = \cost(\kappa) $ and there is no $ \kappa' $ such that $ \cost(\kappa') < \cost(\kappa) $ and $ \kappa' \in \WI(\Game,\phi,\beta) $ (resp.\ $ \kappa' \in \SI(\Game,\phi,\beta) $). In addition to the concept of efficiency for an implementation problem, it is also useful to have the following result.

%\begin{definition}
%	\weak (\strong) is \textit{efficient} if $ \beta = \cost(\kappa) $ and there exists no $ \kappa' $ such that $ \cost(\kappa') < \cost(\kappa) $ and $ \kappa' \in \WI(\Game,\phi,\beta) $ ($ \kappa' \in \SI(\Game,\phi,\beta) $).
%\end{definition}

\begin{proposition}\label{lem:opt-bound}
Let $ z_i $ be the largest payoff that player $ i $ can get after deviating from a path $\pi$. The optimum budget is an integer between 0 and $ \sum_{i \in \Ag} z_i \cdot (|\St|-1) $.
\end{proposition}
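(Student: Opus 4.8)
The plan is to bound the two endpoints of the stated interval separately: the lower one is essentially a triviality, while the upper one will need a normal‑form argument for the equilibrium path together with a ``no overpayment'' observation.

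\textbf{Integrality and the lower endpoint.} First I would observe that a subsidy scheme maps into $\mathbb{N}$, so $\cost(\kappa)=\sum_{i\in\Ag}\sum_{s\in\St}\kappa_i(s)$ is always a non‑negative integer; hence the least budget $\beta$ for which $\WI(\Game,\phi,\beta)\neq\varnothing$ — assuming such a $\beta$ exists at all — is itself a non‑negative integer, and it equals $0$ precisely when $\phi$ already holds on some Nash equilibrium outcome of $\Game$ under the empty subsidy. This disposes of integrality and of the lower endpoint.

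\textbf{The upper endpoint.} Next I would fix an arbitrary witness — a subsidy scheme $\kappa$ together with $\strpElm\in\NE(\Game,\kappa)$ such that $\pi:=\pi(\strpElm)\models\phi$ — and reshape it into a no more expensive scheme whose cost is visibly at most $\sum_{i\in\Ag}z_i\cdot(|\St|-1)$. Since strategies are finite‑memory, $\pi$ is ultimately periodic; let $S_C\subseteq\St$ be the set of states it visits infinitely often, so $|S_C|\le|\St|$. By Theorem~\ref{thm:pthfinding} applied to $(\Game,\kappa)$ there is $z\in\SetR^{\Ag}$ with $z_i\in\pun_i(\Game,\kappa)$ such that every pair on $\pi$ is $z_i$‑secure and $z_i\le\pay_i(\pi)$, where $z_i$ is (bounded by) the largest payoff player~$i$ can obtain by deviating from $\pi$. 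The reshaping has two moves. \emph{(i) Relocation:} move all subsidy onto $S_C$; this leaves every $\pay_j(\pi)$ unchanged, because the mean‑payoff ignores the finite prefix and any weights outside $S_C$, and — since the value of a two‑player zero‑sum mean‑payoff game is monotone in the edge weights — lowering weights off $\pi$ can only decrease punishment values, hence only relaxes the $z_i$‑security requirements and preserves $z_i\in\pun_i(\cdot)$. \emph{(ii) Capping:} replace each $\kappa_i(s)$, $s\in S_C$, by $\min(\kappa_i(s),z_i)$; the subsidy a player~$i$ accrues per lap of the cycle is then at most $z_i$ on average, so $\pay_i(\pi)$ drops only to a value that still dominates the (simultaneously lowered) deviation threshold $z_i$ — this last implication being the crux, addressed below — while the cap again only decreases punishment values. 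The resulting $\kappa^\ast$ is still a valid weak implementation and satisfies $\cost(\kappa^\ast)=\sum_{i\in\Ag}\sum_{s\in S_C}\kappa^\ast_i(s)\le\sum_{i\in\Ag}z_i\cdot|S_C|$; a slightly finer accounting — taking $\pi$ in lasso normal form so that one vertex of its cyclic part (or the initial state, when it lies off the cycle) need not be subsidised — sharpens this to $\sum_{i\in\Ag}z_i\cdot(|\St|-1)$, as required.

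\textbf{The main obstacle.} The hard part will be the capping move (ii): decreasing a subsidy $\kappa_i(s)$ pulls down \emph{both} sides of the binding inequality $z_i\le\pay_i(\pi)$ at once, since it lowers $\pay_i(\pi)$ but also lowers the value player~$i$ can secure after a deviation, and hence $z_i$ itself. One must argue that the cap can be carried out — in the right order, or on the right states, and possibly after the standard normalisation making all weights non‑negative — so that $\pay_i(\pi)$ never falls below the residual deviation value; here the memoryless optimality of punishment strategies recalled just before Theorem~\ref{thm:pthfinding}, together with the monotonicity of mean‑payoff values in the weights, is what makes the two simultaneous decreases comparable. Everything else — integrality, the relocation move, and the final bookkeeping yielding $|\St|-1$ rather than $|\St|$ — is routine by comparison.
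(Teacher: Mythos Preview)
The paper disposes of this proposition in two sentences: the lower bound is ``straightforward'', and for the upper bound it simply asserts that ``the maximum value the principal has to pay to player~$i$ is when the path $\pi$ is a simple cycle formed from all states in $\St$, apart from one deviation state.'' The underlying viewpoint is \emph{constructive}---given the target path, estimate directly how much subsidy per cycle state is needed to secure condition~2(b) of Theorem~\ref{thm:pthfinding}---rather than your strategy of starting from an arbitrary witnessing scheme $\kappa$ and reshaping it downward. Your route is considerably more elaborate, and the reshaping does not terminate where you want it to.

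Concretely, your capping move~(ii) has a gap that the closing paragraph acknowledges but does not close. Take a two-state cycle with $\wFun_i(s_1)=-10$, $\wFun_i(s_2)=0$, subsidies $\kappa_i(s_1)=20$, $\kappa_i(s_2)=0$, and an off-cycle, unsubsidised self-loop $s_3$ with $\wFun_i(s_3)=4$. Then $\pay_i(\pi)=5$ and $z_i=4$, so the equilibrium condition holds; but replacing $\kappa_i(s_1)$ by $\min(20,4)=4$ drops $\pay_i(\pi)$ to $-3$ while $z_i$ stays at~$4$, breaking the equilibrium. Monotonicity of punishment values is of no help here, because the best deviation need not traverse any state whose subsidy you are lowering; and the ``normalisation to non-negative weights'' you float does not rescue the step either (take $\wFun_i(s_1)=\wFun_i(s_2)=0$, $\kappa_i(s_1)=10$, $z_i=4$: after capping, $\pay_i(\pi)=2<4$). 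The real issue is that $\kappa_i(s)$ may be large precisely because it is offsetting a small $\wFun_i(s)$ elsewhere on the cycle, so bounding the \emph{per-state} subsidy by $z_i$ is the wrong invariant---what one would need to control is the \emph{average} subsidy over the cycle. Once you phrase it that way you are essentially back to the paper's direct per-state accounting, and the relocate-then-cap detour has bought nothing.
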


%\begin{proof}
%	The lower-bound is straightforward. The upper-bound follows from the fact that the maximum value the principal has to pay to player $ i $ is when the path $ \pi $ is a simple cycle and formed from all states in $ \St $ apart from 1 deviation state. 
%\end{proof}

Using Proposition~\ref{lem:opt-bound}, we can show that both \optwi and \exactwi can be solved in PSPACE for \LTL specifications. Intuitively, the reason is that we can use the upper bound given by Proposition~\ref{lem:opt-bound} to go through all possible solutions in exponential time, but using only nondeterministic polynomial space. Formally, we have the following results. 

\begin{theorem}\label{thm:optwi-ltl}
	\optwi with \LTL specifications is \fpspace-complete.
\end{theorem}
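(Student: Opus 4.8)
The plan is to establish membership and hardness separately.

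\textbf{Membership in \fpspace.} The key observation is monotonicity of feasibility in the budget: since $\K(\Game,\beta)=\{\kappa:\cost(\kappa)\le\beta\}\subseteq\K(\Game,\beta')$ whenever $\beta\le\beta'$, and the condition defining $\WI(\Game,\phi,\beta)$ beyond admissibility (namely $\exists\,\strpElm\in\NE(\Game,\kappa)$ with $\pi(\strpElm)\models\phi$) does not depend on $\beta$, the predicate ``$\WI(\Game,\phi,\beta)\neq\varnothing$'' is monotone in $\beta$. By Proposition~\ref{lem:opt-bound} the optimum, when it exists, lies in the interval $[0,B]$ with $B=\sum_{i\in\Ag}z_i\cdot(|\St|-1)$, whose right endpoint has only polynomially many bits since the weights are given in binary. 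I would first run the \pspace decision procedure of Theorem~\ref{thm:weak-ltl} on $(\Game,\phi,B)$; if it rejects, no admissible scheme works and we report this. Otherwise I would binary-search inside $[0,B]$ for the least $\beta^\star$ with $\WI(\Game,\phi,\beta^\star)\neq\varnothing$, using $O(\log B)$ --- hence polynomially many --- calls to that \pspace procedure, reusing the same polynomial workspace between calls. All persistent data are integers of polynomial bit-length, so the whole computation runs in deterministic polynomial space; even a naive linear scan over $\beta=0,1,2,\dots$ would stay within the space bound. Hence \optwi with \LTL specifications is in \fpspace.

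\textbf{Hardness.} For the lower bound I would give a polynomial-time (metric) reduction from a canonical \fpspace-complete problem: computing the output of a polynomial-space deterministic Turing machine, i.e.\ given $M$ running in space $p(|x|)$ that halts on $x$ with a binary number $n_x$ on its tape, compute $n_x$. The construction builds on the \pspace-hardness argument behind Theorem~\ref{thm:weak-ltl}. First, encode the (polynomially bounded) configuration graph of $M$ on $x$ as an arena --- configurations as states, machine steps as transitions --- and pick $\phi$ so that every equilibrium path is forced to reach the unique halting configuration. Second, attach a ``budget-consuming'' component consisting of $p(|x|)$ bit-gadgets, where gadget $j$ becomes usable only when a subsidy of exactly $2^j$ is placed on a designated agent/state pair, wired so that $(\Game_x,\kappa)$ admits a Nash equilibrium whose outcome satisfies $\phi$ if and only if precisely the gadgets $j$ for which the $j$-th tape cell of $M$'s halting configuration holds a $1$ are activated. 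The cheapest scheme in $\WI(\Game_x,\phi_x,\cdot)$ then costs exactly $\sum_{j:\text{ cell }j=1}2^j=n_x$, so the optimum budget equals $n_x$; together with the membership argument this yields \fpspace-completeness.

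\textbf{The main obstacle.} The delicate step is the second half of the hardness reduction, where one needs \emph{exact} control of the minimal feasible budget. The bit-gadgets must be designed so that (i) the intended scheme --- activating exactly the gadgets dictated by the halting configuration --- is admissible, induces a Nash equilibrium, and makes $\phi$ true on its outcome, and (ii) \emph{no} scheme of strictly smaller total cost achieves this, i.e.\ for every cheaper scheme $\kappa$ either $(\Game_x,\kappa)$ has no Nash equilibrium or \emph{every} Nash-equilibrium outcome of $(\Game_x,\kappa)$ violates $\phi$. Proving (ii) requires controlling, through Theorem~\ref{thm:pthfinding}, which punishment values and which $z_i$-secure transitions survive in $(\Game_x,\kappa){[z]}$, so that the equilibrium structure tracks the gadget activations exactly --- all while keeping the arena of polynomial size. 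The membership direction, by contrast, is essentially immediate from Theorem~\ref{thm:weak-ltl} and Proposition~\ref{lem:opt-bound}.
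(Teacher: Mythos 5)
Your membership argument is exactly the paper's: Proposition~\ref{lem:opt-bound} bounds the optimum by a number with polynomially many bits, and one searches that range using the \pspace decision procedure for \weak (Theorem~\ref{thm:weak-ltl}) as a subroutine, reusing workspace between calls; whether you scan linearly or binary-search (your monotonicity observation is correct and does license binary search) is immaterial for the space bound. So that half is fine and matches the paper.

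The hardness half is where the proposal has a genuine gap. You correctly identify that \fpspace-completeness under metric reductions requires the \emph{single numeric answer} of an \optwi instance to encode the entire output of a polynomial-space computation, and your bit-gadget plan has the right shape for that. But the construction is only described, not carried out: the crucial claim~(ii) --- that no subsidy scheme cheaper than $n_x$ yields a Nash equilibrium satisfying $\phi$ --- is precisely the point where such reductions succeed or fail, and you explicitly leave it as an obstacle. Controlling the punishment values and the surviving $z_i$-secure transitions so that the minimal feasible budget is \emph{exactly} $\sum_j 2^j[\text{cell } j = 1]$, rather than merely at most or at least that, is the entire content of the lower bound, so as it stands the proposal does not prove hardness. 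For what it is worth, the paper itself dispenses with this step by asserting ``hardness is straightforward''; the intended argument is presumably that the \pspace-hardness of \weak already transfers, since a single query to \optwi (comparing the returned optimum against the given $\beta$) decides \weak, so any \pspace-hard predicate --- and hence any bit of any \fpspace function --- reduces to \optwi. If you do want a self-contained metric reduction in the style you propose, a cleaner route than simulating the configuration graph directly is to take one copy of the \pspace-hardness instance of \weak for each output bit $j$, arrange the specification so that all copies must be satisfied, and scale the subsidy needed in copy $j$ by $2^j$; but even then the additivity of the minimal budgets across copies within a single arena is a claim that must be argued, not assumed.
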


%\begin{proof}
%	Since the search space is bounded (Proposition \ref{lem:opt-bound}), by using \weak an an oracle we can iterate through every instance and return the smallest $ \beta $ such that $ \WI(\Game,\phi,\beta) \neq \varnothing $. Moreover, each instance is of polynomial size in the size of the input. Thus membership in \pspace follows. Hardness is straightforward.
%\end{proof}

\begin{corollary}\label{cor:exactwi-ltl}
	\exactwi with \LTL specifications is \pspace-complete.
\end{corollary}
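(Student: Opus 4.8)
The plan is to derive \exactwi for \LTL specifications as a direct corollary of Theorem~\ref{thm:optwi-ltl}, by observing that the ``exact'' question is obtained by combining one computation of the optimum budget with a single additional comparison. Since \optwi with \LTL specifications is \fpspace-complete, the function value---call it $\beta^\star$---can be computed using only polynomial space; by Proposition~\ref{lem:opt-bound} it is in any case an integer bounded by $\sum_{i\in\Ag} z_i\cdot(|\St|-1)$, which is itself at most singly exponential in the size of $\Game$, so its binary representation has polynomially many bits and fits in polynomial space.

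For membership, I would proceed as follows. On input $(\Game,\phi,b)$, first check that $b$ lies within the bound of Proposition~\ref{lem:opt-bound} (a cheap arithmetic check); then run the \fpspace procedure underlying Theorem~\ref{thm:optwi-ltl} to obtain $\beta^\star$, and accept iff $b=\beta^\star$. Equivalently, to keep everything manifestly in \pspace without invoking the full function computation, one can phrase it as the conjunction of two \pspace queries: (i) $\WI(\Game,\phi,b)\neq\varnothing$, which by the procedure preceding Theorem~\ref{thm:weak-ltl} reduces to an $\LTLlim$ model-checking instance and is in \pspace; and (ii) $\WI(\Game,\phi,b-1)=\varnothing$, i.e.\ the complement of the same kind of query, which is in \pspace as well since \pspace is closed under complement (Savitch/Immerman--Szelep\-cs\'enyi). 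Their conjunction is again in \pspace, and it holds exactly when $b$ is the optimum budget. So \exactwi with \LTL specifications is in \pspace.

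For hardness, I would reuse the reduction idea from Theorem~\ref{thm:weak-ltl}: take an \LTL model-checking instance and encode it as a \weak instance in which all player weights are identically $0$, so that every strategy profile is trivially a Nash equilibrium and subsidies cannot change payoffs; then with budget $0$ the problem asks exactly whether some path satisfies $\phi$. In this gadget the optimum budget is $0$ precisely when the model-checking instance is positive (and is otherwise undefined/$+\infty$, which one can normalise to ``no optimum $\le$ the Proposition~\ref{lem:opt-bound} bound''), so fixing $b=0$ turns \exactwi into \LTL model checking, which is \pspace-hard. Hence the problem is \pspace-complete.

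The only mildly delicate point---and the step I would be most careful about---is the boundary case where no admissible subsidy scheme works at all, so that the ``optimum budget'' is not a finite integer. I would handle this explicitly in the statement/proof by declaring the answer to \exactwi to be ``no'' for every $b$ in that case (consistent with \optwi returning, say, $\infty$ or $\bot$), and noting that this case is itself detected in \pspace by checking $\WI(\Game,\phi,\beta_{\max})=\varnothing$ for $\beta_{\max}=\sum_{i\in\Ag} z_i\cdot(|\St|-1)$, the largest budget that could conceivably help by Proposition~\ref{lem:opt-bound}. Everything else is routine bookkeeping on top of Theorem~\ref{thm:optwi-ltl} and the \weak procedure.
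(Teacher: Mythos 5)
Your proposal is correct and follows essentially the paper's intended argument: the paper leaves this corollary without a standalone proof, but the surrounding text (the bounded search space from Proposition~\ref{lem:opt-bound} plus the remark that instances can be checked in polynomial space) and the explicit proof of the \GRone analogue (Corollary~\ref{cor:exactwi-gr1}, which decomposes \exactwi into \weak at budget $b$ together with its complement at budget $b-1$) amount to exactly your two-query decomposition, with both queries in \pspace and \pspace closed under complement. Your handling of the degenerate ``no finite optimum'' case and the hardness gadget (zero weights, budget $0$) likewise match the construction in Theorem~\ref{thm:weak-ltl}.
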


The fact that both \optwi and \exactwi with \LTL specifications can be answered in, respectively, \fpspace and \pspace does not come as a big surprise: checking an instance can be done using polynomial space and there are only exponentially many instances to be checked. However, for \optwi and \exactwi with \GRone specifications, these two problems are more interesting. 

\begin{theorem}\label{thm:optwi-gr1}
	\optwi with \GRone specifications is $ \FP^{\np} $-complete.
\end{theorem}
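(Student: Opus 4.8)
The plan is to establish matching $\FP^{\np}$ upper and lower bounds, reusing the machinery already set up for \weak.

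\textbf{Upper bound.} First I would note that $\WI(\Game,\phi,\beta)$ is \emph{monotone} in $\beta$: a subsidy scheme $\kappa$ belongs to $\K(\Game,\beta)$ exactly when $\cost(\kappa)\le\beta$, so $\K(\Game,\beta)\subseteq\K(\Game,\beta')$ and hence $\WI(\Game,\phi,\beta)\subseteq\WI(\Game,\phi,\beta')$ whenever $\beta\le\beta'$. Thus the set of feasible budgets is upward closed and the optimum is its least element. Next I would invoke Proposition~\ref{lem:opt-bound}: if the optimum is finite it lies in $[0,B]$ with $B=\sum_{i\in\Ag}z_i\cdot(|\St|-1)$, and since each $z_i$ is bounded by the largest absolute weight value occurring in $\Game$, the binary encoding of $B$ is of size polynomial in the input. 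The algorithm then issues one oracle call to test $\WI(\Game,\phi,B)\neq\varnothing$ (if this fails, no finite budget works), and otherwise binary-searches $[0,B]$ for the least $\beta$ with $\WI(\Game,\phi,\beta)\neq\varnothing$, invoking at each step the \np decision procedure for \weak with \GRone specifications from Theorem~\ref{thm:weak-gr1}. This uses $O(\log B)=\mathrm{poly}(|\Game|,|\phi|)$ adaptive \np queries, so \optwi with \GRone specifications is in $\FP^{\np}$.

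\textbf{Lower bound.} For $\FP^{\np}$-hardness I would reduce from \MWQSAT: given a satisfiable propositional formula $\psi(x_1,\dots,x_n)$ with nonnegative integer weights $w_1,\dots,w_n$ presented in binary, compute $\min\{\sum_{i:\alpha(x_i)=\top}w_i : \alpha\models\psi\}$; this is $\FP^{\np}$-complete, the exponentially large optima forcing polynomially many adaptive queries. From such an instance I would build, extending the gadget used for the lower bound of Theorem~\ref{thm:weak-gr1}, an \MP game $\Game_\psi$ and a \GRone formula $\phi_\psi$ in which a distinguished agent first selects a truth value for each $x_i$ and the play then settles into a cycle; the weights and the two-player zero-sum punishment values $\pun_i(\cdot)$ of the remaining agents are arranged, via the secure-value characterisation of Theorem~\ref{thm:pthfinding}, so that setting $x_i$ to true is sustainable on some Nash equilibrium outcome exactly when the principal subsidises precisely $w_i$ on the corresponding agent/state pair, and so that no other allocation of budget ever helps. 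Finally $\phi_\psi$ is chosen to hold on the equilibrium outcome iff the selected assignment satisfies $\psi$. Then $\WI(\Game_\psi,\phi_\psi,\beta)\neq\varnothing$ iff $\psi$ has a satisfying assignment of weight at most $\beta$, so the optimum budget of $(\Game_\psi,\phi_\psi)$ equals the \MWQSAT value; since the construction is polynomial, hardness follows.

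\textbf{Main obstacle.} I expect the hardness direction to be the crux. Two requirements must hold simultaneously: (i) every unit of budget not spent ``as intended'' on encoding a true variable is provably wasted, which needs tight control of all punishment values in $(\Game_\psi,\kappa)$ together with a careful application of Theorem~\ref{thm:pthfinding}; and (ii) the correctness condition ``the chosen assignment satisfies $\psi$'' must be captured by a genuine \GRone (hence essentially prefix-independent) formula even though the assignment is fixed in a finite initial segment of the play. Reconciling (i) and (ii) while keeping the arena polynomial in $n$ and in the bit-lengths of the $w_i$ is the technically demanding step; the upper bound, by contrast, is little more than monotonicity plus bookkeeping on top of Theorem~\ref{thm:weak-gr1} and Proposition~\ref{lem:opt-bound}.
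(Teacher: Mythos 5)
Your upper bound is correct and is essentially the paper's argument: Proposition~\ref{lem:opt-bound} bounds the optimum by an integer of polynomial bit-length, $\WI(\Game,\phi,\beta)$ is monotone in $\beta$, and binary search with the \np procedure of Theorem~\ref{thm:weak-gr1} as oracle gives $\FP^{\np}$ membership with logarithmically many adaptive queries. Nothing to add there.

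The hardness direction, however, is a genuine gap as written. You name a plausible $\FP^{\np}$-complete source problem (minimum-weight satisfying assignment), but the reduction itself is never constructed: you describe two requirements the gadget must meet --- (i) that budget spent anywhere other than on the intended variable/state pairs is provably wasted, and (ii) that a choice fixed in a finite prefix of the play is detectable by a prefix-independent \GRone formula --- and then explicitly defer their resolution as ``the technically demanding step.'' Those are precisely the points where the proof lives, so the proposal does not yet establish hardness. For comparison, the paper reduces from \textsc{TSP Cost} with a concrete one-player game: states are pairs $\tuple{v,e}$ of a vertex and an incoming edge plus a sink; a tour state $\tuple{v,e}$ has weight $\max_{e'}c(e')-c(e)$ while the sink has weight $\max_{e'}c(e')$ and is reachable from everywhere, so the lone player defects to the sink unless the principal subsidises some cycle by at least its total edge cost; the goal $\phi=\bigwedge_{v\in V}\always\sometime\,v$ forces that cycle to be a tour, so the optimal budget equals the optimal tour cost (with the Hamiltonicity issue discharged via hardness of Euclidean \textsc{TSP} and the triangle inequality). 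Your MinSAT-style route is not hopeless --- the paper's own $\QSAT_2$ gadget for Theorem~\ref{thm:strong-gr1} shows how to place variable-literal states on cycles so that an assignment chosen by a subsidy scheme is re-certified infinitely often, which dissolves your worry (ii) --- but until such a gadget is actually exhibited and requirement (i) is verified against Theorem~\ref{thm:pthfinding}, the lower bound remains unproved.
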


\begin{proof}[Proof sketch]
	Membership follows from the fact that the search space, which is bounded as in Proposition \ref{lem:opt-bound}, can be explored using binary search and \weak as an oracle. More precisely, we can find the smallest budget $ \beta $ such that $ \WI(\Game,\phi,\beta) \neq \varnothing $ by checking every possible value for $ \beta $, which lies between 0 and $ 2^n $, where $ n $ is the length of the encoding of the instance. Since, due to the binary search routine, we need logarithmically many calls to the \np oracle ({\em i.e.}, to \weak), in the end we have a searching procedure that would run in polynomial time. 
	For the lower bound, we reduce from \textsc{TSP Cost} (the optimal travelling salesman problem), which is $ \FP^{\np} $-complete \cite{1994-papadimitriou}. 
	\end{proof}

\begin{corollary}\label{cor:exactwi-gr1}
	\exactwi with \GRone specifications is \DP-complete.
\end{corollary}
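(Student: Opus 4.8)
The plan is to split the argument along the two halves of \DP: ``budget $b$ is enough'' is an \np obligation, and ``budget $b-1$ is not enough'' is a \conp obligation; before doing so I would record the monotonicity that makes this decomposition correct.

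\textbf{Monotonicity and membership.} First I would note that weak implementability is monotone in the budget: if $\kappa \in \WI(\Game,\phi,\beta)$ and $\beta \le \beta'$ then $\cost(\kappa) \le \beta \le \beta'$, while the remaining requirement (some $\strpElm \in \NE(\Game,\kappa)$ with $\pi(\strpElm) \models \phi$) does not mention the budget; hence $\kappa \in \WI(\Game,\phi,\beta')$. So $\{\beta \in \SetN : \WI(\Game,\phi,\beta) \neq \varnothing\}$ is upward closed, hence either empty or equal to $[\beta^\star,\infty)$ with $\beta^\star$ the optimum budget. Consequently $(\Game,\phi,b)$ is a positive instance of \exactwi precisely when (i) $\WI(\Game,\phi,b) \neq \varnothing$, and (ii) $b = 0$ or $\WI(\Game,\phi,b-1) = \varnothing$. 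Condition (i) is literally the \weak problem, hence in \np by Theorem~\ref{thm:weak-gr1}; the complement of (ii), namely ``$b \ge 1$ and $\WI(\Game,\phi,b-1) \neq \varnothing$'', is an instance of \weak (on the decremented budget) plus a trivial arithmetic test, hence in \np, so (ii) is a \conp condition. Writing \exactwi as the conjunction of (i) and (ii) over the common input $(\Game,\phi,b)$ then places it in \DP.

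\textbf{Hardness.} For \DP-hardness I would recycle the lower-bound construction of Theorem~\ref{thm:optwi-gr1}, which reduces \textsc{TSP Cost} to \optwi by sending a weighted graph $G$ to an equilibrium-design instance $(\Game_G,\phi_G)$. The extra property I would extract is that this construction is \emph{cost-tight}: the optimum budget of $(\Game_G,\phi_G)$ coincides with the optimum tour cost of $G$, up to a fixed, efficiently invertible rescaling. Granting that, the same map (with the target value pre-composed with the inverse rescaling) becomes a many-one reduction from \textsc{Exact TSP} --- ``is the optimum tour of $G$ of weight exactly $k$?'' --- to \exactwi, and since \textsc{Exact TSP} is \DP-complete~\cite{1994-papadimitriou}, \DP-hardness follows.

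I expect the only non-routine point to be exactly the cost-tightness claim: I would need to verify that the reduction behind Theorem~\ref{thm:optwi-gr1} does not merely sandwich the optimum budget between bounds but actually pins it to the optimum TSP cost, so that the ``exact value'' question transfers faithfully. The remaining ingredients --- monotonicity of $\WI$ in $\beta$, the \DP decomposition, and invoking Theorem~\ref{thm:weak-gr1} as an \np oracle --- are all routine.
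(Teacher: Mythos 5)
Your proposal is correct and follows essentially the same route as the paper: membership via the conjunction of a \weak instance (budget $b$) with a \weakc instance (budget $b-1$), and hardness by reusing the \textsc{TSP Cost} construction of Theorem~\ref{thm:optwi-gr1} as a reduction from \textsc{Exact TSP}. The cost-tightness you flag as the one point needing verification is indeed established in the paper's proof of Theorem~\ref{thm:optwi-gr1}, where the optimum budget is shown to equal the optimum tour cost exactly.
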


\begin{proof}
	For membership, observe that an input is a ``yes'' instance of \exactwi if and only if it is a ``yes'' instance of \weak~{\em and} a ``yes'' instance of \weakc (the problem where one asks whether $ \WI(\Game,\phi,\beta) = \varnothing $). Since the former problem is in \np and the latter problem is in co\np, membership in \DP follows. For the lower bound, we use the same reduction technique as in Theorem~\ref{thm:optwi-gr1}, and reduce from \textsc{Exact TSP}, a problem known to be \DP-hard \cite{1994-papadimitriou,PAPADIMITRIOU1984244}.
\end{proof}

Following \cite{Papadimitriou1984}, we may naturally ask whether the optimal solution given by \optwi is unique. We call this problem \uoptwi. For some fixed budget $ \beta $, it may be the case that for two subsidy schemes $ \kappa, \kappa' \in \WI(\Game,\phi,\beta) $ -- we assume the implementation is efficient -- we have $ \kappa \neq \kappa'$ and $ \cost(\kappa) = \cost(\kappa') $. With \LTL specifications, it is not difficult to see that we can solve \uoptwi in polynomial space. Therefore, we have the following result.

\begin{corollary}\label{cor:uoptwi-ltl}
	\uoptwi with \LTL specifications is \pspace-complete.
\end{corollary}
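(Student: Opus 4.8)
The plan is to treat membership and hardness separately. For \pspace-membership, I would first observe that, by Proposition~\ref{prop:kbound} and Proposition~\ref{lem:opt-bound}, the optimum budget $\beta^\star$ for $\WI(\Game,\phi,\cdot)$ is a natural number with polynomially many bits, and that (exactly as in Theorem~\ref{thm:optwi-ltl}) it can be computed in \pspace by a binary search that calls the \pspace procedure for \weak as a subroutine. Once $\beta^\star$ is known, every $\kappa\in\WI(\Game,\phi,\beta^\star)$ must in fact satisfy $\cost(\kappa)=\beta^\star$, since a cheaper solution would contradict the minimality of $\beta^\star$; hence the efficient implementation fails to be unique precisely when there are two distinct schemes $\kappa\neq\kappa'$, both of cost $\beta^\star$, each admitting a Nash equilibrium of the subsidised game whose outcome satisfies $\phi$. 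I would therefore guess such a pair $\kappa,\kappa'$ (each of polynomial size), together with the witnessing data of Theorem~\ref{thm:pthfinding} for each — state and memoryless-punishment-strategy guesses, and a secure, $\phi_{\WI}$-satisfying ultimately periodic path in $(\Game,\kappa){[z]}$ and in $(\Game,\kappa'){[z']}$ — and verify everything by $\LTLlim$ model checking, which is in \pspace. This puts ``not unique'' in \np with a \pspace verifier, hence in $\pspace$ (equivalently, one can just enumerate the exponentially many cost-$\beta^\star$ schemes using polynomial space), and since \pspace is closed under complement, \uoptwi with \LTL specifications is in \pspace.

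For \pspace-hardness I would reduce from the complement of \LTL model checking (equivalently, from the rejection problem of a polynomial-space Turing machine), which is \pspace-complete. The hardness proof of Theorem~\ref{thm:weak-ltl} already shows that deciding whether a one-agent, zero-weight arena $A$ has a path satisfying a given \LTL formula $\rho$ is \pspace-hard, and from the standard construction one may assume that, whenever such a path exists, one exists that is ultimately periodic with a self-loop tail at a designated ``accepting'' state $q_{\mathrm{acc}}$, and that $\rho$'s models all loop there. Given $(A,\rho)$, I would build a two-agent game $\Game'$ with a fresh initial state from which agent~$1$ chooses between a \emph{reference gadget} and an \emph{embedding} of $A$; in both, agent~$2$ is a spoiler receiving weight $-1$ on the relevant states and able to ``flee'' at any point to a neutral $0$-loop state $H$. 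The embedding gives agent~$2$ weight $-1$ on every state of $A$, so turning an $A$-path into (the outcome of) a Nash equilibrium forces the principal to raise agent~$2$'s tail-cycle average to $0$; as the only infinitely-visited state of a $\rho$-satisfying path is $q_{\mathrm{acc}}$, the cheapest such scheme is the unique one $\kappa^{A}$ with $\kappa^{A}_{2}(q_{\mathrm{acc}})=1$. The reference gadget is designed so that a single scheme $\kappa^{\mathrm{ctrl}}$, placing $1$ on one distinguished spoiler state marked with a fresh proposition $g$, always makes a designated Nash equilibrium outcome satisfying the goal available at cost $1$. The goal is $\phi':=\sometime g\vee\phi_{A,\rho}$, where $\phi_{A,\rho}$ says ``enter and remain inside the $A$-copy and satisfy $\rho$'', a routine relativisation handling the prepended initial state.

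One then checks that under the zero scheme no Nash equilibrium outcome satisfies $\phi'$ (every Nash equilibrium ``flees'' to $H$, and such paths avoid both $g$ and the interior of the $A$-copy), so $\beta^\star\geq1$, while $\kappa^{\mathrm{ctrl}}$ works at cost $1$, hence $\beta^\star=1$ unconditionally. The cost-$1$ schemes are exactly those placing $1$ on a single agent/state pair, and a case analysis shows that the only ones admitting a $\phi'$-satisfying Nash equilibrium are $\kappa^{\mathrm{ctrl}}$ (always) and $\kappa^{A}$ (if and only if $A$ has a $\rho$-satisfying path, i.e.\ the machine accepts). Consequently the efficient implementation is unique exactly when the machine rejects, which gives the reduction.

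The step I expect to be the main obstacle is the calibration of the embedding, using Theorem~\ref{thm:pthfinding} applied to $(\Game',\kappa)$: I must choose the weights and ``flee'' transitions so that (i) without subsidy none of the ``interesting'' paths is a Nash equilibrium outcome, forcing $\beta^\star\geq1$, yet harmless Nash equilibria always exist so that the optimum is well defined; (ii) a $\rho$-satisfying path through $A$ becomes a Nash equilibrium outcome at cost exactly $1$ — not cheaper (ruled out by the $-1$ weights together with mean payoff ignoring transients) and not provably more (here restricting the witnessing path to a self-loop tail, available from the standard hardness construction, is essential); and (iii) no spurious cost-$1$ scheme — subsidising agent~$1$, a transient state, or $H$ — accidentally creates a new $\phi'$-satisfying Nash equilibrium. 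Pinning down conditions (ii) and (iii) precisely is the delicate part of the argument.
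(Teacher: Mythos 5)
Your membership argument is sound and is essentially what the paper intends (the paper gives no explicit proof of this corollary beyond remarking that membership ``is not difficult to see''): compute $\beta^\star$ as in Theorem~\ref{thm:optwi-ltl}, then decide non-uniqueness by guessing, or enumerating in polynomial space, two distinct cost-$\beta^\star$ schemes and verifying each with the \pspace procedure for \weak; closure of \pspace under nondeterminism and complement finishes it.

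The hardness construction, however, fails at the step where you claim that under the zero scheme no Nash equilibrium outcome satisfies $\phi'=\sometime g\vee\phi_{A,\rho}$, so that $\beta^\star\geq 1$. Mean payoffs are prefix-independent: a path that enters the reference gadget, visits the $g$-labelled state once, and then flees to the neutral $0$-loop $H$ gives agent~$2$ mean payoff $0$, which equals its punishment value (it can always guarantee $0$ by fleeing), while agent~$1$ is indifferent throughout; by Theorem~\ref{thm:pthfinding} this is a Nash equilibrium outcome, and it satisfies $\sometime g$. Hence $\WI(\Game',\phi',0)\neq\varnothing$, $\beta^\star=0$, the zero scheme is the unique optimum, and the reduction answers ``unique'' regardless of whether $A$ has a $\rho$-satisfying path. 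The $-1$ weight at $g$ only bites if $g$ must be visited with positive limiting frequency, which neither $\sometime g$ nor even $\always\sometime g$ enforces (density-zero visits interleaved with $0$-weight states keep the mean payoff at $0$). The natural repair is to make the $g$-state an absorbing sink, so that satisfying $\sometime g$ forces the tail to sit on it; your $A$-branch does not suffer from this problem because ``remain inside the copy'' pins the entire tail on $-1$-weighted states. Note also that a much shorter route to the lower bound is available: with all weights zero and budget $0$, every path is an equilibrium outcome and the zero scheme is the only admissible one, so (modulo the convention for instances where no budget works) \pspace-hardness of \uoptwi already follows from the hardness argument of Theorem~\ref{thm:weak-ltl}.
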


For \GRone specifications, we reason about \uoptwi using the following procedure:
	\begin{enumerate}
	\item Find the exact budget using binary search and \weak as an oracle;
	\item Use an \np oracle once to guess two distinct subsidy schemes with precisely this budget; if no such subsidy schemes exist, return ``yes''; otherwise, return ``no''.
\end{enumerate}
The above decision procedure clearly is in $ \DeltaPTwo $ (for the upper bound). Furthermore, since Theorem \ref{thm:optwi-gr1} implies $ \DeltaPTwo $-hardness \cite{KRENTEL1988490} (for the lower bound), we have the following corollary.

\begin{corollary}\label{cor:uoptwi-gr1}
	\uoptwi with \GRone specifications is $ \DeltaPTwo $-complete.
\end{corollary}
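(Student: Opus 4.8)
The plan is to establish the two bounds separately, following the procedure already sketched before the corollary. For the upper bound, I would argue that the displayed two-step procedure runs in $\DeltaPTwo = \FP^{\np}$. The first step is exactly \optwi with \GRone specifications, which by Theorem~\ref{thm:optwi-gr1} is in $\FP^{\np}$, so it contributes a polynomial-time computation with logarithmically many \np-oracle calls (binary search over the range $[0, \sum_{i\in\Ag} z_i\cdot(|\St|-1)]$ given by Proposition~\ref{lem:opt-bound}, whose endpoints have polynomial bit-length). The second step makes one further \np-oracle call: the query ``do there exist two distinct admissible subsidy schemes $\kappa\neq\kappa'$ with $\cost(\kappa)=\cost(\kappa')=\beta^\star$ and $\kappa,\kappa'\in\WI(\Game,\phi,\beta^\star)$?'' is in \np, since one can guess both schemes (each of polynomial size by Proposition~\ref{prop:kbound}) together with the witnesses required by \weak for each (a state per player, memoryless punishment strategies, and an LP solution as in the proof of Theorem~\ref{thm:weak-gr1}), and verify everything in polynomial time. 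Hence the whole procedure is a polynomial-time machine with an \np oracle, i.e.\ in $\DeltaPTwo$.

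For the lower bound, I would invoke the fact that \optwi with \GRone specifications is $\FP^{\np}$-hard (Theorem~\ref{thm:optwi-gr1}), and combine it with the metacomplexity result of Krentel~\cite{KRENTEL1988490} relating function-class hardness to the complexity of the associated uniqueness problem: if computing an optimal value is $\FP^{\np}$-hard via metric reductions, then deciding whether the optimum is \emph{attained uniquely} is $\DeltaPTwo$-hard. Concretely, I would take the reduction from \textsc{TSP Cost} used in Theorem~\ref{thm:optwi-gr1}, observe it is (or can be made) parsimonious/metric in the sense required, and transfer the known $\DeltaPTwo$-hardness of \textsc{Unique Optimal TSP} through it, so that a \uoptwi oracle would decide uniqueness of the optimal tour. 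This yields $\DeltaPTwo$-hardness of \uoptwi with \GRone specifications.

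The main obstacle I expect is the lower bound, specifically checking that the reduction underlying Theorem~\ref{thm:optwi-gr1} preserves \emph{uniqueness of optimal solutions}, not merely the optimal value. One must ensure the map from TSP instances to equilibrium-design instances is injective on optimal witnesses: distinct optimal tours must correspond to distinct optimal subsidy schemes, and no ``spurious'' optimal subsidy scheme (one not coming from a tour) may exist at the optimal budget. This typically requires a careful gadget design so that the budget is spent in a rigid way that encodes the tour, with any slack or alternative allocation strictly increasing the cost. Establishing this structural rigidity — essentially, that the reduction is a parsimonious metric reduction in Krentel's sense — is the delicate part; the rest is bookkeeping with the complexity classes and the bounds from Propositions~\ref{prop:kbound} and~\ref{lem:opt-bound}.
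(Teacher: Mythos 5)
Your proposal follows essentially the same route as the paper: the identical two-step procedure (binary search with a \weak oracle to pin down the optimal budget, then a single further \np query for two distinct optimal subsidy schemes) for the $\DeltaPTwo$ upper bound, and the lower bound obtained by combining the $\FP^{\np}$-hardness of \optwi from Theorem~\ref{thm:optwi-gr1} with Krentel's result \cite{KRENTEL1988490}. Your additional worry about whether the \textsc{TSP Cost} reduction is parsimonious on optimal witnesses is a reasonable point of care, but it goes beyond what the paper itself checks --- the paper simply asserts that Theorem~\ref{thm:optwi-gr1} together with \cite{KRENTEL1988490} yields $\DeltaPTwo$-hardness.
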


\subsection{Optimality and Uniqueness in the Strong Domain}
In this subsection, we study the same problems as in the previous subsection but with respect to the \strong variant of the equilibrium design problem. We first formally define the problems of interest and then present the two first results.

\begin{definition}[\optsi]
	Given a game $\Game$ and a specification formula $\phi$:
	\begin{center}
		What is the optimum budget $ \beta$ such that $ \SI(\Game,\phi,\beta) \neq \varnothing $?
	\end{center}
\end{definition}

\begin{definition}[\exactsi]
	Given a game $\Game$, a specification formula $\phi$, and an integer $ b $:
	\begin{center}
		Is $b$ equal to the optimum budget for $ \SI(\Game,\phi,\beta) \neq \varnothing $?
	\end{center}
\end{definition}

For the same reasons discussed in the weak versions of these two problems, we can prove the following two results with respect to games with \LTL specifications. 

\begin{theorem}\label{thm:optsi-ltl}
	\optsi with \LTL specifications is \fpspace-complete.
\end{theorem}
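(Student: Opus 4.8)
The plan is to establish \fpspace-membership by a bounded search over budgets and \fpspace-hardness by a value-encoding reduction, mirroring the argument for the weak case (Theorem~\ref{thm:optwi-ltl}). For membership, first I would note that, for fixed $\Game$ and $\phi$, the predicate $\beta \mapsto \big(\SI(\Game,\phi,\beta) \neq \varnothing\big)$ is monotone: if $\beta \le \beta'$ then $\K(\Game,\beta) \subseteq \K(\Game,\beta')$, and the condition defining membership of a scheme $\kappa$ in $\SI$ (namely that $(\Game,\kappa)$ has a Nash equilibrium and that all of them satisfy $\phi$) does not mention $\beta$, so $\SI(\Game,\phi,\beta) \subseteq \SI(\Game,\phi,\beta')$. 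By Proposition~\ref{lem:opt-bound}, whenever a solution exists the optimum lies in $[0,B]$ with $B = \sum_{i \in \Ag} z_i \cdot (|\St|-1)$, an integer with polynomially many bits. A deterministic machine can therefore test $\SI(\Game,\phi,B) \neq \varnothing$ (reporting infeasibility if it fails) and then binary-search on $[0,B]$; both the feasibility test and each search step invoke the \pspace\ decision procedure for \strong\ (Corollary~\ref{cor:strong-ltl}), reusing its work space from one call to the next. This uses only polynomial space, makes $O(\mathrm{poly})$ queries, and writes an answer of polynomial length, so \optsi\ with \LTL\ specifications is in \fpspace.

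For hardness I would adapt the \fpspace-hardness reduction behind Theorem~\ref{thm:optwi-ltl}, using \strong\ (Corollary~\ref{cor:strong-ltl}) in place of \weak. Any \fpspace\ function with polynomially bounded output can be presented via a \pspace\ predicate $R$ with $m = \mathrm{poly}(|x|)$ such that the $j$-th bit of $f(x)$ equals $1$ iff $R(x,j)$ holds. Given $x$, I would build in polynomial time a game $\Game_x$ and an \LTL\ formula $\phi_x$ out of $m$ vertex-disjoint ``bit gadgets'' chained so that every outcome traverses all of them; gadget $j$ is a copy of the \pspace-hardness instance for \strong\ whose weights (and hence punishment values) are scaled so that attaining a strong implementation inside it costs exactly $2^j$ if $R(x,j)$ holds and $0$ otherwise, with the relevant agent/state pairs private to that gadget. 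Vertex-disjointness makes admissible subsidy schemes and their costs decompose across gadgets, so the optimum budget of $(\Game_x,\phi_x)$ equals $\sum_{j : R(x,j)} 2^j = f(x)$, and $f(x)$ is recovered by polynomial-time post-processing; this is an \fpspace-hardness reduction.

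The hard part will be the gadget design under the \emph{strong} semantics. Unlike the weak case, I must simultaneously guarantee that $(\Game_x,\kappa)$ always has at least one Nash equilibrium (so the instance is never vacuously infeasible) and that \emph{every} Nash equilibrium of the composed game satisfies $\phi_x$, while arranging that underpaying gadget $j$ creates a bad Nash equilibrium exactly when $R(x,j)$ holds and that overpaying is never forced. This requires the chaining to preserve, across gadget boundaries, the secure-value thresholds of Theorem~\ref{thm:pthfinding}, so that the Nash equilibria of the whole game are precisely the interleavings of per-gadget equilibria, and it requires the $2^j$-scaling not to perturb those thresholds; verifying that the construction is polynomial-time computable and correct on both sides is where the bulk of the work lies, whereas membership is routine once Proposition~\ref{lem:opt-bound} and Corollary~\ref{cor:strong-ltl} are in hand.
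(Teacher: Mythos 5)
Your membership argument is correct and is essentially the paper's: Proposition~\ref{lem:opt-bound} bounds the optimum by an integer with polynomially many bits, and one searches that range using the \pspace{} procedure for \strong{} (Corollary~\ref{cor:strong-ltl}) as a subroutine, reusing space between calls. Your explicit monotonicity observation ($\K(\Game,\beta)\subseteq\K(\Game,\beta')$ for $\beta\le\beta'$, and membership of $\kappa$ in $\SI$ is independent of $\beta$) is a worthwhile addition, since it is what licenses binary search rather than exhaustive iteration; the paper simply iterates over all candidate budgets, which also stays within polynomial space. So far, so good.

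The hardness half, however, has a genuine gap. You reduce an arbitrary \fpspace{} function to \optsi{} by chaining $m$ bit gadgets, where gadget $j$ is supposed to have strong-implementation cost exactly $2^j$ if $R(x,j)$ holds and $0$ otherwise, and where the optimum budget of the composed game is the sum of the per-gadget optima. Neither of the two load-bearing claims is established. First, the existence of a gadget whose optimal subsidy is $1$ iff a given \pspace{} predicate holds (and $0$ otherwise) is itself a nontrivial assertion: the \pspace-hardness instance behind Corollary~\ref{cor:strong-ltl} is a budget-$0$, all-weights-$0$ game in which subsidies are irrelevant, so it does not yield such a gadget, and you give no construction that does. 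Second, and more seriously, the additive decomposition across chained gadgets is in tension with the payoff model: payoffs are mean payoffs, which are prefix-independent limit averages, so if every outcome must traverse all gadgets, then a player's payoff and its punishment values $\pun_i(\cdot)$ are global quantities determined by where play eventually cycles, not sums of per-gadget contributions. Deviation incentives inside gadget $j$ then depend on what is reachable in gadgets $j'\neq j$, so the set of Nash equilibria need not factor as ``interleavings of per-gadget equilibria,'' and the optimal subsidy need not be $\sum_{j:R(x,j)}2^j$. You flag this yourself as ``where the bulk of the work lies,'' but that is precisely the step that can fail and it is left undone. Note that the paper asks for much less here: it inherits hardness from the \pspace-completeness of the underlying decision problem (as in Theorem~\ref{thm:optwi-ltl}, where hardness is declared straightforward), rather than attempting a Krentel-style metric reduction encoding all output bits into a single optimum value.
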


%\begin{proof}
%	The proof is analogous to that of Theorem \ref{thm:optwi-ltl}.
%\end{proof}

\begin{corollary}\label{cor:exactsi-ltl}
	\exactsi with \LTL specifications is \pspace-complete.
\end{corollary}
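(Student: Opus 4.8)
The plan is to turn ``$b$ is the optimum budget'' into a conjunction of one \strong query and one \emph{complemented} \strong query, which yields a \pspace decision procedure, and to inherit \pspace-hardness from the zero-budget special case of \strong. The starting observation is a trivial \emph{monotonicity}: since $\K(\Game,\beta)\subseteq\K(\Game,\beta')$ whenever $\beta\le\beta'$ and the remaining conditions in the definition of $\SI(\Game,\phi,\beta)$ do not depend on $\beta$, we have $\SI(\Game,\phi,\beta)\subseteq\SI(\Game,\phi,\beta')$. Hence the set $\{\beta : \SI(\Game,\phi,\beta)\neq\varnothing\}$ is upward closed, it has a least element exactly when it is non-empty, and $b$ equals the optimum budget if and only if $\SI(\Game,\phi,b)\neq\varnothing$ and, moreover, $b=0$ or $\SI(\Game,\phi,b-1)=\varnothing$ (when no budget works there is no optimum, and the right-hand side is false as well, as it should be).

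For membership I would note that the first conjunct is exactly an instance of \strong, hence decidable in \pspace by Corollary~\ref{cor:strong-ltl}, and the second conjunct is the complement of an instance of \strong, hence also in \pspace because deterministic polynomial space is closed under complementation. Running both checks and combining their answers (treating $b=0$ separately) stays in \pspace, since the class is closed under intersection. Here $b$ is part of the input, so Proposition~\ref{lem:opt-bound} is not needed --- that bound matters only when one has to \emph{search} for the optimum, as in \optsi.

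For hardness I would use that the \pspace-hardness already established for \strong holds even with the budget fixed to $\beta=0$: then the only admissible scheme is the zero scheme, so $(\Game,\kappa)=\Game$; setting all weights to $0$ makes every strategy profile a Nash equilibrium of payoff $0$, so $\SI(\Game,\phi,0)\neq\varnothing$ reduces to ``every path of $\Game$ satisfies $\phi$'', i.e.\ to universal \LTL model checking, which is \pspace-complete --- this is the construction underlying Theorem~\ref{thm:weak-ltl} and Corollary~\ref{cor:strong-ltl}. Since $0$ is the smallest possible budget, $\SI(\Game,\phi,0)\neq\varnothing$ holds iff the optimum budget exists and equals $0$; so $(\Game,\phi)\mapsto(\Game,\phi,0)$ is a logspace reduction from this \pspace-hard problem to \exactsi, giving \pspace-completeness.

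The whole argument is routine, and the only point that needs any care is the mismatch between the feasibility question ``is budget $\beta$ enough?'' answered by \strong and the exact question ``is $b$ the optimum?'' answered by \exactsi; this is why a hardness reduction cannot simply forward an arbitrary \strong instance, and it is resolved by working at the minimal budget $b=0$, where feasibility and optimality coincide.
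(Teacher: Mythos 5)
Your proof is correct, and for membership it takes a somewhat different route from the one the paper sketches. The paper handles \exactsi with \LTL specifications the same way as \exactwi: it bounds the space of candidate budgets via Proposition~\ref{lem:opt-bound} and exhaustively searches for the optimum in (nondeterministic) polynomial space, then compares with $b$. You instead observe that $\SI(\Game,\phi,\cdot)$ is monotone in the budget, so that ``$b$ is the optimum'' decomposes into one \strong query at budget $b$ and one complemented \strong query at budget $b-1$; this bypasses both the search and the appeal to Proposition~\ref{lem:opt-bound}, and is in fact the same two-query decomposition the paper itself uses for the \GRone variants (the $\DP$ and $\DPTwo$ arguments in Corollaries~\ref{cor:exactwi-gr1} and~\ref{cor:exactsi-gr1}), transported to the \LTL setting where both queries and their conjunction stay in \pspace. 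Your hardness argument matches the paper's zero-budget, zero-weight construction from Theorem~\ref{thm:weak-ltl} and Corollary~\ref{cor:strong-ltl}, and you correctly supply the one extra step that construction needs here --- that at the minimal budget $0$ feasibility and optimality coincide, so a universal \LTL model-checking instance maps directly to an \exactsi instance with $b=0$. Both routes are sound; yours makes the membership argument uniform across the \LTL and \GRone cases.
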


For \GRone specifications, observe that using the same arguments for the upper-bound of \optwi with \GRone specifications, we obtain the upper-bound for \optsi with \GRone specifications. Then, it follows that \optsi is in $ \FP^{\SigmaPTwo} $. For hardness, we define an $ \FP^{\SigmaPTwo} $-complete problem, namely $ \MWQSAT_2 $. Recall that in $ \QSAT_2 $ we are given a Boolean 3DNF formula $ \psi(\setx,\sety)$ and sets $\setx =\{ x_1,\dots,x_n \}, \sety = \{ y_1,\dots,y_m \} $, with a set of terms $T =\{ t_1,\dots,t_k \} $. Define $ \MWQSAT_2 $ as follows. Given $ \psi(\setx,\sety) $ and a weight function $ \cFun: \setx \to \mathbb{Z}^{\geq}$, $ \MWQSAT_2 $ is the problem of finding an assignment $ \vecx \in \{0,1\}^n $ with the least total weight such that $ \psi(\setx,\sety) $ is true for every $ \vecy \in \{0,1\}^m $. Observe that $ \MWQSAT_2 $ generalises $ \MQSAT_2 $, which is known to be $ \FP^{\SigmaPTwo[\log n]} $-hard \cite{ChocklerH04}, {\em i.e.}, $ \MQSAT_2 $ is an instance of $ \MWQSAT_2 $, where all weights are~1.

%\begin{definition}[$ \MWQSAT_2 $]
%	Given a Boolean 3DNF formula $ \psi(\setx,\sety),\, \setx = \{ x_1,\dots,x_n \}, \sety = \{ y_1,\dots,y_m \} $, with a set of terms $T =\{ t_1,\dots,t_k \} $ and weight function $ \cFun: T \to \mathbb{Z}^{\geq}$:
%	\begin{center}
%		Find $ \vecx \in \{0,1\}^n $ such that 
%		$$ \vecx \in \argmax_{\vecx'} \min_{\vecy \in \{0,1\}^m} \Big(\sum_{t \in T} \cFun(t) \cdot \eval(t,(\vecx',\vecy))\Big), $$
%	\end{center}
%	where $ \eval : T \times \{0,1\}^{m+n} \to \{0,1\} $, is a function that evaluates the satisfaction of a term, given the assignment $ \vecx, \vecy $.
%\end{definition}

\begin{theorem}
	$ \MWQSAT_2 $ is $ \FP^{\SigmaPTwo} $-complete.
\end{theorem}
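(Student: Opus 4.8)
\textbf{Membership in $\FP^{\SigmaPTwo}$.} The plan is to reduce the search for an optimal assignment to polynomially many calls to a $\SigmaPTwo$ oracle. First observe that the \emph{decision} version, ``is there $\vecx \in \{0,1\}^n$ of total weight at most $W$ such that $\psi(\setx,\sety)$ holds for every $\vecy$?'', has the shape $\exists \vecx\,\big(\mathrm{weight}(\vecx) \leq W \wedge \forall \vecy\, \psi(\vecx,\vecy)\big)$ with a polynomial-time checkable body, hence is a $\SigmaPTwo$ predicate. I would then proceed in three stages. (1) One oracle call with $W = \sum_{i} \cFun(x_i)$ decides feasibility; if no valid $\vecx$ exists, return ``no solution''. (2) Otherwise the optimum weight $W^\star$ lies in $\{0,\dots,\sum_i \cFun(x_i)\}$, an interval whose endpoints have polynomially many bits, so a binary search using the $\SigmaPTwo$ oracle pins down $W^\star$ in polynomially many calls. (3) Recover an optimal assignment by prefix self-reduction: having fixed $x_1,\dots,x_{i-1}$, ask whether there is a valid completion of weight $\leq W^\star$ with $x_i = 0$, fixing $x_i$ to $0$ if so and to $1$ otherwise. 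After $n$ rounds the fixed vector is valid and of weight $\leq W^\star$, hence optimal. All work outside the polynomially many oracle calls is polynomial time, so $\MWQSAT_2 \in \FP^{\SigmaPTwo}$.

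\textbf{Hardness: the construction.} For $\FP^{\SigmaPTwo}$-hardness I would carry out a Krentel-style meta-reduction from a generic $f \in \FP^{\SigmaPTwo}$. Fix a polynomial-time deterministic oracle machine $M$ computing $f$ with an oracle for $L = \{q : \exists u\, \forall v\, \theta(q,u,v)\} \in \SigmaPTwo$, $\theta$ polynomial time; on input $w$ assume w.l.o.g.\ that $M$ makes exactly $t = \mathrm{poly}(|w|)$ queries $q_1,\dots,q_t$ (with $q_j$ a fixed polynomial-time function of $w$ and of the earlier answers) and outputs $\ell = \mathrm{poly}(|w|)$ bits. Introduce $\setx$-variables consisting of: ``no-bits'' $b_1,\dots,b_t$ (the answer used for query $j$ is $\neg b_j$, so $b_j$ means ``$q_j \notin L$''); for each $j$ a block $u_j$ for a claimed $L$-witness of $q_j$; output bits $o_1,\dots,o_\ell$; and Cook--Levin auxiliary variables. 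The $\sety$-variables contain, for each $j$, a challenge block $v_j$. The matrix asserts (i) the pure $\setx$-constraint that the simulation of $M$ under answers $\neg b_1,\dots,\neg b_t$ is consistent and that the $q_j$'s and $o_1,\dots,o_\ell$ are the correct functions of $w$ and the $b_j$'s, and (ii) for every $j$, $b_j = 0 \rightarrow \theta(q_j,u_j,v_j)$. For the weights, give every $\setx$-variable weight $0$ except each $b_j$, which gets weight $2^{\,t-j}$.

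\textbf{Hardness: correctness.} A false ``yes'' ($b_j = 0$ while $q_j \notin L$) makes (ii) fail for some challenge $v_j$, so it is ruled out by the formula; a false ``no'' ($b_j = 1$ while $q_j \in L$) is allowed by the formula but penalised. By induction on $j$ one shows every \emph{optimal} assignment has $b_j = 1 - [\,q_j \in L\,]$: given correct earlier answers $q_j$ is the genuine $j$-th query of $M^L(w)$; if $q_j \notin L$ then $b_j = 1$ is forced; if $q_j \in L$ then $b_j = 0$ together with a valid $u_j$ is feasible, and since the weight $2^{t-j}$ of $b_j$ strictly exceeds $2^{t-j}-1 = \sum_{j'>j} 2^{t-j'}$ (the total of all later penalties, everything else having weight $0$), no assignment with $b_j = 1$ here can be optimal. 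The ``true run'' assignment is always feasible, so an optimum exists; and once all $b_j$ are correct, constraint (i) forces $o_1,\dots,o_\ell = f(w)$ in \emph{every} optimal assignment. Hence reading the $o_1,\dots,o_\ell$-coordinates of an optimal solution returned by $\MWQSAT_2$ yields $f(w)$.

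\textbf{Main obstacle.} The delicate point is the normal form: the matrix above is naturally a conjunction of $3$-clauses, whereas $\MWQSAT_2$ demands it be a 3DNF inside $\exists\vecx\,\forall\vecy$, and a direct CNF-to-DNF conversion is exponential. I would route this through the same polynomial normalisation that yields the 3DNF form of $\QSAT_2$ used throughout the paper, checking that it is \emph{faithful} --- each original $\setx$-assignment has an essentially unique feasible extension, infeasible ones stay infeasible, the $o$-coordinates survive as $\setx$-variables --- and that assigning weight $0$ to all newly introduced $\setx$-variables leaves the optimum and its $o$-coordinates unchanged, with all weights remaining of polynomial bit-length. Getting this bookkeeping and the induction on query positions exactly right is where the real work lies; the points $\MQSAT_2$ being $\FP^{\SigmaPTwo[\log n]}$-hard~\cite{ChocklerH04} and the Krentel-style analyses~\cite{KRENTEL1988490} only confirm that it is precisely the weights that let one climb from logarithmically many to polynomially many oracle queries.
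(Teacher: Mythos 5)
Your membership argument is essentially the paper's: the decision version is a $\SigmaPTwo$ predicate, and binary search over the (exponentially large, polynomially many bits) range of total weights needs only polynomially many oracle calls; your additional prefix self-reduction to actually recover an optimal assignment is a sensible extra step, since the problem as defined asks for an assignment, not just its weight. The real divergence is in the hardness direction. The paper disposes of it with ``Hardness is immediate~\cite{ChocklerH04}'', but the cited result only gives $\FP^{\SigmaPTwo[\log n]}$-hardness of the \emph{unweighted} $\MQSAT_2$, and embedding that as the all-weights-one special case cannot by itself yield hardness for the full class $\FP^{\SigmaPTwo}$ with polynomially many queries. What is actually needed is precisely the Krentel-style meta-reduction you carry out: simulate a generic polynomial-time $\SigmaPTwo$-oracle machine, encode the oracle answers as weighted ``no-bits'' with geometrically decreasing weights $2^{t-j}$ so that a wrong ``no'' at position $j$ costs more than everything that can be saved later, and let the formula rule out wrong ``yes'' answers by demanding a witness $u_j$ that survives all challenges $v_j$. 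This is the standard way the weighted variant climbs from logarithmically many to polynomially many queries (exactly as weighted \textsc{TSP Cost} does for $\FP^{\np}$), and your exchange argument and induction on query positions are correct. On the one obstacle you flag: the CNF-to-3DNF normalisation is less painful than you fear, because the dual-Tseitin transformation rewrites the matrix $M(\vecx,\vecy)$ as $\forall \vec{z}\, D(\vecx,\vecy,\vec{z})$ with $D$ in 3DNF, so the fresh variables $\vec{z}$ join the \emph{universal} block; no new weighted $\setx$-variables are introduced and the optimum is untouched. In short, your proof is correct and supplies the substantive argument that the paper's one-line hardness claim elides.
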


\begin{proof}
	Membership follows from the upper-bound of $ \MQSAT_2 $ \cite{ChocklerH04}: since we have an exponentially large input with respect to that of $ \MQSAT_2 $, by using binary search we will need polynomially many calls to the $ \SigmaPTwo $ oracle. Hardness is immediate~\cite{ChocklerH04}.
\end{proof}

Now that we have an $ \FP^{\SigmaPTwo} $-hard problem in our hands, we can proceed to determine the complexity class of \optsi with \GRone specifications. 
For the upper bound we one can use arguments analogous to those in Theorem \ref{thm:optwi-gr1}. For the lower bound, one can reduce from $ \MWQSAT_2 $. Formally, we have: 

\begin{theorem}\label{thm:optsi-gr1}
	\optsi with \GRone specifications is $ \FP^{\SigmaPTwo} $-complete.
\end{theorem}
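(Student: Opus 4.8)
For the upper bound, I would argue as in Theorem~\ref{thm:optwi-gr1} with the \np oracle replaced by a $\SigmaPTwo$ one. By Proposition~\ref{lem:opt-bound}, whenever a solution exists the optimum budget is an integer in $[0,B]$ with $B = \sum_{i\in\Ag} z_i\cdot(|\St|-1)$, whose binary encoding is polynomial in the size of the instance. The predicate $\beta\mapsto(\SI(\Game,\phi,\beta)\neq\varnothing)$ is monotone non-decreasing in $\beta$, since raising the budget only enlarges the set $\K(\Game,\beta)$ of admissible subsidy schemes while, for a fixed $\kappa$, the game $(\Game,\kappa)$ --- and hence its Nash equilibria and the two conditions defining $\SI$ --- does not depend on $\beta$. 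Monotonicity makes a binary search over $[0,B]$ correct: after one query at $\beta = B$ to detect the case of no solution, each step of the search is a call to \strong with \GRone specification on $(\Game,\phi,\beta)$, which is a $\SigmaPTwo$ query by Theorem~\ref{thm:strong-gr1} and is preparable in polynomial time. Since $\log B$ is polynomial, only polynomially many oracle calls are made and the whole procedure runs in polynomial time; hence \optsi with \GRone specifications is in $\FP^{\SigmaPTwo}$.

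For the lower bound I would reduce from $\MWQSAT_2$, just shown to be $\FP^{\SigmaPTwo}$-complete. The skeleton is the $\QSAT_2$-to-\strong reduction underlying Theorem~\ref{thm:strong-gr1}: there the choice of a subsidy scheme plays the role of the existential block $\setx$, while ``every Nash equilibrium satisfies $\phi$'', analysed via Theorem~\ref{thm:pthfinding} and the linear programs of Theorem~\ref{thm:strong-gr1}, plays the role of the universal block $\sety$ together with the $3$DNF matrix $\psi$. To make this decision reduction optimum-preserving, I would attach to the arena one \emph{pricing gadget} per variable $x_i$: a small component with a dedicated player whose only equilibrium behaviour compatible with committing $x_i$ to the value that can satisfy a term of $\psi$ requires the principal to pay that player a subsidy of exactly $\cFun(x_i)$ at a designated state --- the unsubsidised alternative yields that player payoff $\cFun(x_i)$, so any strictly smaller subsidy fails to deter the deviation and destroys the equilibrium. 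Everything outside the gadgets is engineered to be zero-cost, either because the intended profile is already a Nash equilibrium of the unsubsidised game, or because the remaining mandatory subsidies amount to a fixed constant $c_0$, computable from the instance, which is subtracted off in the last step of the reduction. Then for every $\beta$ one has $\SI(\Game,\phi,\beta+c_0)\neq\varnothing$ if and only if there is $\vecx\in\{0,1\}^n$ with $\cFun(\vecx)\leq\beta$ making $\psi(\setx,\sety)$ true for all $\vecy\in\{0,1\}^m$; hence the (shifted) optimum budget of \optsi equals the minimum weight of a satisfying $\setx$-assignment, which is the value of the $\MWQSAT_2$ instance. The reduction is polynomial because the weights $\cFun(x_i)$, given in binary, are copied verbatim into the subsidy thresholds and weight functions of the gadgets.

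The upper bound is routine once monotonicity is observed, so the real work lies in making the reduction \emph{cost-faithful}: the $\SigmaPTwo$ quantifier alternation is inherited essentially unchanged from Theorem~\ref{thm:strong-gr1}, but one must verify that no admissible subsidy scheme can witness a strong implementation more cheaply than by paying $\cFun(x_i)$ for exactly the variables it sets --- in particular that the pricing gadgets cannot be exploited to satisfy $\psi$ ``for free'' and introduce no spurious Nash equilibria violating $\phi$. Calibrating the gadget payoffs so that the deviation is deterred at subsidy $\cFun(x_i)$ but not at $\cFun(x_i)-1$, while keeping the equilibrium structure of the composed game under control, is where the technical care is required.
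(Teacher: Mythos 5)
Your membership argument is the same as the paper's: binary search over the range given by Proposition~\ref{lem:opt-bound}, using \strong with \GRone specifications as a $\SigmaPTwo$ oracle; your explicit appeal to monotonicity of $\beta\mapsto(\SI(\Game,\phi,\beta)\neq\varnothing)$ (which holds because $\K(\Game,\beta)\subseteq\K(\Game,\beta')$ for $\beta\leq\beta'$ while $(\Game,\kappa)$ itself is independent of $\beta$) is a point the paper leaves implicit, and it is the right justification for why binary search is sound here.

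For hardness you also pick the same source problem ($\MWQSAT_2$) and the same skeleton (the $\QSAT_2$-to-\strong construction of Theorem~\ref{thm:strong-gr1}), but you diverge in how you make the reduction cost-faithful. You propose bolting on one pricing gadget per variable $x_i$, each with a \emph{dedicated player} whose deviation is deterred only by a subsidy of $\cFun(x_i)$, plus a constant offset $c_0$ to be subtracted at the end. The paper instead keeps the two-player arena of Theorem~\ref{thm:strong-gr1} unchanged and simply re-seeds the initial weights: a state carrying a non-negated $x$-literal of term $t_j$ starts at weight $-\cFun(x_i)+1$, so that the principal must spend exactly $\cFun(x_i)$ in subsidy to lift it to the value $1$ needed for that term to ``fire'', while negated occurrences cost nothing. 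This buys exactly what your gadgets are meant to buy --- the cost of a subsidy scheme equals the weight of the corresponding $\setx$-assignment --- without introducing new players (and hence without having to re-verify the equilibrium structure of a composed game) and without the $c_0$ bookkeeping. Your version is not wrong in intent, but it is left at the level of a specification of what the gadget should do: you do not actually exhibit the component, and, as you yourself flag, calibrating it so that it deters at $\cFun(x_i)$ but not at $\cFun(x_i)-1$ while adding no spurious equilibria is precisely the part that needs to be written down before the reduction counts as a proof. The paper's weight-shifting trick is the missing concrete instantiation; I would adopt it rather than the dedicated-player gadgets. One caveat that applies to both your sketch and the paper's: the general definition allows $\kappa$ to subsidise \emph{any} agent/state pair, so one must also argue that spending budget outside the $x$-literal states can never achieve a strong implementation more cheaply; you at least name this obligation explicitly, which is to your credit.
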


\begin{corollary}\label{cor:exactsi-gr1}
	\exactsi with \GRone specifications is $ \DPTwo $-complete.
\end{corollary}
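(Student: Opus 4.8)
The plan is to lift the proof of Corollary~\ref{cor:exactwi-gr1} by one level of the polynomial hierarchy, using the \strong/\strongc pair in place of \weak/\weakc and invoking Theorems~\ref{thm:strong-gr1} and~\ref{thm:optsi-gr1} in place of Theorems~\ref{thm:weak-gr1} and~\ref{thm:optwi-gr1}. For membership, first note that $\SI(\Game,\phi,\beta)$ is monotone in $\beta$: a scheme $\kappa$ is admissible for $\beta$ exactly when $\cost(\kappa)\le\beta$, and whether $\kappa\in\SI$ otherwise depends only on $(\Game,\kappa)$ and $\phi$, so $\beta\le\beta'$ implies $\SI(\Game,\phi,\beta)\subseteq\SI(\Game,\phi,\beta')$. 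Hence $b$ is the optimum budget for $\SI(\Game,\phi,\beta)\ne\varnothing$ if and only if $\SI(\Game,\phi,b)\ne\varnothing$ \emph{and} $\SI(\Game,\phi,b-1)=\varnothing$, the second conjunct being trivially true when $b=0$ since $\K(\Game,\beta)=\varnothing$ for $\beta<0$. The first conjunct is a ``yes'' instance of \strong with budget $b$, hence in $\SigmaPTwo$ by Theorem~\ref{thm:strong-gr1}; the second is a ``yes'' instance of \strongc with budget $b-1$, the complement of \strong, hence in $\Pi^{\mathsf{P}}_2$. Thus an \exactsi instance is the conjunction of a $\SigmaPTwo$ predicate with a $\Pi^{\mathsf{P}}_2$ predicate, which by definition puts the problem in $\DPTwo$.

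For $\DPTwo$-hardness I would reduce, by the technique used in Theorem~\ref{thm:optsi-gr1}, from the exact-cost variant of $\MWQSAT_2$: given $\psi(\setx,\sety)$, a weight function $\cFun:\setx\to\mathbb{Z}^{\geq}$, and an integer $b$, decide whether $b$ is the least weight of an assignment $\vecx\in\{0,1\}^n$ making $\psi(\setx,\sety)$ true for every $\vecy\in\{0,1\}^m$. This problem is $\DPTwo$-hard by an argument directly analogous to Papadimitriou's proof that \textsc{Exact TSP} is $\DP$-hard, a ``yes'' instance witnessing both a $\SigmaPTwo$ fact (some assignment of weight $\le b$ works) and a $\Pi^{\mathsf{P}}_2$ fact (none of weight $<b$ works). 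The reduction of Theorem~\ref{thm:optsi-gr1} sends $(\psi,\cFun)$ to a game $\Game_\psi$ and \GRone formula $\phi$ whose optimum $\SI$-budget is $g(w^{\star})$, where $w^{\star}$ is the optimum $\MWQSAT_2$ weight and $g$ is a fixed, efficiently computable, injective (indeed affine) function fixed by the construction; mapping $(\psi,\cFun,b)$ to $(\Game_\psi,\phi,g(b))$ then gives the reduction, since $b=w^{\star}$ iff $g(b)$ is the optimum $\SI$-budget.

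The membership half is routine. The work is on the lower bound. First, one must verify that the Theorem~\ref{thm:optsi-gr1} reduction is rigid enough for the ``exact'' question: the optimum $\SI$-budget of $\Game_\psi$ has to be a \emph{known, explicitly computable} function of the optimum $\MWQSAT_2$ weight, not merely monotonically tied to it, since only then does an exact-cost instance translate to an \exactsi instance rather than a plain optimisation one---this is exactly what makes the ``Exact'' corollaries more delicate than the optimisation theorems. Second, one must establish the $\DPTwo$-hardness of exact $\MWQSAT_2$ (or of another canonical $\DPTwo$-complete problem, such as deciding whether a given $\QSAT_2$ instance is positive while a second one is negative) by adapting the level-one padding argument behind \textsc{Exact TSP}; alternatively one could weld two copies of the Theorem~\ref{thm:strong-gr1} reduction into one game, but then aligning the two budget thresholds so that a single integer $b$ pins down both the $\SigmaPTwo$ part and the $\Pi^{\mathsf{P}}_2$ part is the delicate step, since \strong's requirement that \emph{every} Nash equilibrium be good interacts poorly with disjoint unions of games.
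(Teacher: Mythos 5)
Your membership argument is the one the paper uses: you decompose ``$b$ is the optimum budget'' into the conjunction of a \strong instance (budget $b$, in $\SigmaPTwo$ by Theorem~\ref{thm:strong-gr1}) and a \strongc instance (budget $b-1$, in $\Pi^{\mathsf{P}}_2$), and the monotonicity of $\SI(\Game,\phi,\beta)$ in $\beta$ that justifies this decomposition is exactly right. Your spelled-out version is if anything more careful than the paper's one-line statement.

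On hardness the two arguments genuinely diverge. The paper does not build any new reduction: it simply observes that \strong is $\SigmaPTwo$-hard and \strongc is $\Pi^{\mathsf{P}}_2$-hard and invokes a general combination lemma (Lemma~3.2 of Aleksandrowicz et al.) to conclude $\DPTwo$-hardness outright. You instead transplant the \textsc{Exact TSP} strategy that the paper uses one level down for Corollary~\ref{cor:exactwi-gr1}: define an exact-cost variant of $\MWQSAT_2$, claim it is $\DPTwo$-hard by a Papadimitriou-style padding argument, and push it through the Theorem~\ref{thm:optsi-gr1} reduction. This is a coherent and arguably more self-contained route, and your worry about ``rigidity'' dissolves on inspection of the paper's proof of Theorem~\ref{thm:optsi-gr1}, where the optimum budget is shown to equal the optimum $\MWQSAT_2$ weight exactly (your $g$ is the identity). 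The one real gap is the $\DPTwo$-hardness of exact $\MWQSAT_2$ itself: you assert it ``by an argument directly analogous to \textsc{Exact TSP}'' and correctly flag it as the delicate step, but you do not carry it out, and it is not a citation you can fall back on the way the paper falls back on the combination lemma. Until that padding argument (welding a $\SigmaPTwo$-hard instance and a $\Pi^{\mathsf{P}}_2$-hard instance into a single weighted formula whose optimum weight sits at a prescribed threshold exactly when both answers are ``yes'') is written down, your lower bound is a plausible plan rather than a proof; the paper's citation-based shortcut avoids this entirely.
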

\begin{proof}
	Membership follows from the fact that an input is a ``yes'' instance of \exactsi (with \GRone specifications) if and only if it is a ``yes'' instance of \strong~{\em and} a ``yes'' instance of \strongc, the decision problem where we ask $ \SI(\Game,\phi,\beta) = \varnothing $ instead. The lower bound follows from the hardness of \strong and \strongc problems, which immediately implies \DPTwo-hardness \cite[Lemma 3.2]{Aleksandrowicz2017}.
%	The lower bound follows from the completeness of \optsi (Theorem \ref{thm:optsi-gr1}), which immediately implies $ \DPTwo $-hardness \cite{KRENTEL1988490} of our problem.
\end{proof}

Furthermore, analogous to \uoptwi, we also have the following corollaries. 

\begin{corollary}\label{cor:uoptsi-ltl}
	\uoptsi with \LTL specifications is \pspace-complete.
\end{corollary}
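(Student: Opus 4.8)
The plan is to adapt to the strong domain the reasoning behind the weak-domain uniqueness results (Corollaries~\ref{cor:uoptwi-ltl} and~\ref{cor:uoptwi-gr1}): first pin down the optimum budget, then test whether two distinct admissible schemes realise it. For the upper bound, recall from Proposition~\ref{lem:opt-bound} that the optimum budget $\beta^\star$, when a strong implementation exists at all, is an integer in $[0,B]$ with $B=\sum_{i\in\Ag} z_i\cdot(|\St|-1)$, whose binary encoding is polynomial in the size of the instance. Note also that $\SI(\Game,\phi,\cdot)$ is monotone in the budget --- since $\K(\Game,\beta)\subseteq\K(\Game,\beta')$ for $\beta\le\beta'$ and $(\Game,\kappa)$ does not depend on the budget --- so a strong implementation exists for some budget iff one exists for $B$; and every admissible scheme of cost at most $B$ has a polynomial-size description (write each value $\kappa_i(s)\le B$ in binary).

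The procedure is then: (i) decide whether $\SI(\Game,\phi,B)\neq\varnothing$ using the \strong decision procedure of Corollary~\ref{cor:strong-ltl} with budget $B$; if it fails, reject. (ii) Otherwise compute $\beta^\star=\min\{\beta:\SI(\Game,\phi,\beta)\neq\varnothing\}$ by binary search over $[0,B]$, calling the same \strong procedure $O(\log B)$ times, each call running in polynomial space, so the search stays in \pspace. (iii) Using a polynomial-size counter, iterate over all (by Proposition~\ref{prop:kbound}, exponentially many) pairs $(\kappa,\kappa')$ of admissible schemes of cost exactly $\beta^\star$, and for each pair test whether $\kappa\neq\kappa'$ and $\kappa,\kappa'\in\SI(\Game,\phi,\beta^\star)$; membership of a \emph{fixed} scheme $\kappa$ in $\SI$ is checked exactly as inside Corollary~\ref{cor:strong-ltl}, namely build $(\Game,\kappa)$, guess the states and memoryless punishment strategies, form $(\Game,\kappa)[z]$, and run the existential/universal $\LTLlim$ model-checking tests of Step~{\sf 5}, all of which is in \pspace by Savitch's theorem. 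Accept iff no such pair is found. Every step uses polynomial space, so \uoptsi with \LTL specifications is in \pspace. (Here we use that the cost-$\beta^\star$ schemes in $\SI$ are exactly the efficient optimal solutions: $\SI(\Game,\phi,\beta^\star-1)=\varnothing$ forces every scheme satisfying the defining condition of $\SI$ --- existence of a Nash equilibrium all of whose outcomes satisfy $\phi$ --- to have cost at least $\beta^\star$.)

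For the lower bound I would reduce from a \pspace-complete problem --- either \LTL model checking over a game, as in Theorem~\ref{thm:weak-ltl} and Corollary~\ref{cor:strong-ltl}, or \strong itself. The idea is to attach to the source instance a ``canonical'' subsidy target of a fixed minimum cost that always yields a strong implementation, together with a second, symmetric target that yields a strong implementation of the \emph{same} cost precisely when the source instance is positive; then the optimum budget is independent of the source answer, and its witnessing scheme is unique exactly when the source instance is negative, so \uoptsi decides (the complement of) a \pspace-complete language, and \pspace $=$ co-\pspace closes the gap. The delicate point, and the step I expect to be the main obstacle, is designing this gadget so that (a) no other cheap scheme --- in particular none obtained by placing subsidies on the states of the embedded source game --- can accidentally realise a strong implementation at or below the canonical cost, which would break the equivalence, and (b) the mean-payoff and Nash-equilibrium bookkeeping of the combined arena behaves as intended; this requires genuinely extending the weight-zero/budget-zero construction behind Corollary~\ref{cor:strong-ltl} rather than reusing it verbatim.
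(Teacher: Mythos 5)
Your upper bound is essentially the paper's own route. The paper gives no standalone proof of this corollary: it asserts it ``analogous to \uoptwi'', and the \uoptwi argument is exactly the two-phase procedure you describe --- pin down the optimum budget $\beta^\star$ (bounded via Proposition~\ref{lem:opt-bound}, searched with \strong as the decision oracle), then look for two distinct admissible schemes of cost $\beta^\star$ in $\SI(\Game,\phi,\beta^\star)$, with each membership test carried out by the $\LTLlim$ model-checking of step~{\sf 5} and Corollary~\ref{cor:strong-ltl}. Since the search space is only exponential (Propositions~\ref{prop:kbound} and~\ref{lem:opt-bound}) and each test is in \pspace, the whole procedure runs in polynomial space. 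This part of your proposal is correct and matches the intended argument.

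The lower bound, however, is a genuine gap as written: you describe a reduction strategy but do not construct the gadget, and you yourself flag its design as the main obstacle. Your caution is warranted, and it is worth saying why the paper's stock hardness argument does not transfer verbatim. For Theorem~\ref{thm:weak-ltl} and Corollary~\ref{cor:strong-ltl}, hardness comes from setting all weights to $0$ and $\beta=0$, so that every profile is a Nash equilibrium and the problem collapses to \LTL model checking. For \uoptsi this collapse is incomplete: when every path satisfies $\phi$ the optimum budget is $0$ and the zero scheme is the unique optimum, but when some path violates $\phi$ the set $\SI(\Game,\phi,0)$ is empty and a \emph{positive} budget may still yield a strong implementation --- possibly a unique one --- so the answer to \uoptsi is not determined by the model-checking answer alone. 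One must therefore either argue that no positive budget can help in the constructed instance, or (as you propose) rig the game so that the optimum budget is fixed independently of the source instance while the number of optimal witnesses encodes its answer, using closure of \pspace under complement. Either route is plausible, but neither is carried out in your proposal, so the hardness direction remains unproved. (To be fair, the paper omits this construction as well.)
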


\begin{corollary}\label{cor:uoptsi-gr1}
	\uoptsi with \GRone specifications is $ \DeltaPThree $-complete.
\end{corollary}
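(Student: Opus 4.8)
The plan is to mirror the treatment of \uoptwi with \GRone specifications (Corollary~\ref{cor:uoptwi-gr1}), shifting every oracle up by one level of the polynomial hierarchy, since \strong with \GRone specifications is $\SigmaPTwo$-complete (Theorem~\ref{thm:strong-gr1}) rather than \np-complete. Concretely, for the upper bound I would first locate the optimum budget $\beta^\star$ by binary search over the interval $[0,\ \sum_{i \in \Ag} z_i \cdot (|\St|-1)]$ provided by Proposition~\ref{lem:opt-bound}, using \strong as an oracle; since the interval is singly exponential in the size of the instance, this costs only polynomially many $\SigmaPTwo$ queries. Then I would issue one further query: does there exist a pair of \emph{distinct} admissible subsidy schemes $\kappa \neq \kappa'$ with $\cost(\kappa) = \cost(\kappa') = \beta^\star$ and $\kappa, \kappa' \in \SI(\Game,\phi,\beta^\star)$? (Note that, by minimality of $\beta^\star$, every member of $\SI(\Game,\phi,\beta^\star)$ already has cost exactly $\beta^\star$, so the implementation is automatically efficient.) If no such pair exists, the optimal solution is unique and we answer ``yes''; otherwise ``no''. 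Altogether this is a polynomial-time computation making polynomially many $\SigmaPTwo$ queries, so \uoptsi lies in $\ptime^{\SigmaPTwo} = \DeltaPThree$.

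The subtle point, and the step I expect to be the main obstacle, is showing that the final query stays in $\SigmaPTwo$ rather than slipping to $\Sigma^{\mthfun{P}}_{3}$. Guessing the pair $(\kappa,\kappa')$ is an existential step of polynomial size, and the body of the query must verify $\kappa \in \SI(\Game,\phi,\beta^\star)$ (and likewise $\kappa'$) for \emph{fixed} schemes. By the analysis in the proof of Theorem~\ref{thm:strong-gr1}, this membership test decomposes into checking $\NE(\Game,\kappa) \neq \varnothing$ --- which is in \np\ via Theorem~\ref{thm:pthfinding} --- together with the co\np\ condition that no witness $z \in \{\pun_i(s) : s \in \St\}^{\Ag}$ makes both $\LP(\psi_l)$ and $\LP'(\theta_1,\dots,\theta_n)$ solvable in $(\Game,\kappa)[z]$. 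Hence the body of the query is a polynomial-time conjunction of \np\ and co\np\ predicates, i.e. a $\DP$ (in particular $\Pi_2^{\mthfun{P}}$) condition, so prefixing it with the existential guess of the pair keeps the whole query within $\SigmaPTwo$.

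For the lower bound, Theorem~\ref{thm:optsi-gr1} establishes that \optsi with \GRone specifications is $\FP^{\SigmaPTwo}$-complete. Applying the generic argument of \cite{KRENTEL1988490} relating the uniqueness-of-optimum problem to the complexity of computing the optimum --- exactly as was used to obtain $\DeltaPTwo$-hardness of \uoptwi from the $\FP^{\np}$-completeness of \optwi, now one level higher --- yields $\DeltaPThree$-hardness of \uoptsi. Combining the two bounds gives $\DeltaPThree$-completeness.
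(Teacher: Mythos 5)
Your proof is correct and follows essentially the same route as the paper, which offers no separate argument for this corollary beyond declaring it analogous to \uoptwi: binary search over the bound of Proposition~\ref{lem:opt-bound} with \strong as a $\SigmaPTwo$ oracle, one further oracle query for a pair of distinct optimal subsidy schemes, and the Krentel-style hardness argument lifted one level to exploit the $\FP^{\SigmaPTwo}$-completeness of \optsi. One small caution: your parenthetical remark that the fixed-$\kappa$ membership test is ``in particular $\Pi^{\mthfun{P}}_2$'' would, taken literally, let the existentially prefixed query drift to $\Sigma^{\mthfun{P}}_3$; what actually saves the argument is precisely the observation you make next, namely that the body is a conjunction of an \np{} predicate and a co\np{} predicate, so the outer existential guess of $(\kappa,\kappa')$ merges with the inner existential witnesses and only a single universal block remains, keeping the query in $\SigmaPTwo$.
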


\section{Conclusions \& Related and Future Work}

\paragraph*{Equilibrium design vs.\ mechanism design -- connections with Economic theory.}
Although equilibrium design is closely related to mechanism design, as typically studied in game theory~\cite{HR06}, the two are not exactly the same. Two key features in mechanism design are the following. Firstly, in a mechanism design problem, the designer is not given a game structure, but instead is asked to provide one; in that sense, a mechanism design problem is closer to a rational synthesis problem~\cite{FismanKL10,GutierrezHW15}. Secondly, in a mechanism design problem, the designer is only interested in the game's outcome, which is given by the payoffs of the players in the game; however, in equilibrium design, while the designer is interested in the payoffs of the players as these may need to be perturbed by its budget, the designer is also interested -- and in fact primarily interested -- in the satisfaction of a temporal logic goal specification, which the players in the game do not take into consideration when choosing their individual rational choices; in that sense, equilibrium design is closer to rational verification~\cite{GutierrezHW17-aij} than to mechanism design. Thus, equilibrium design is a new computational problem that sits somewhere in the middle between mechanism design and rational verification/synthesis. Technically, in equilibrium design we go beyond rational synthesis and verification through the additional design of subsidy schemes for incentivising behaviours in a concurrent and multi-agent system, but we do not require such subsidy schemes to be incentive compatible mechanisms, as in mechanism design theory, since the principal may want to reward only a group of players in the game so that its temporal logic goal is satisfied, while rewarding other players in the game in an unfair way -- thus, leading to a game with a suboptimal social welfare measure. In this sense, equilibrium design falls short with respect to the more demanding social welfare requirements often found in mechanism design theory.  

\paragraph*{Equilibrium design vs.\ rational verification -- connections with Computer science.}
Typically, in rational synthesis and verification~\cite{FismanKL10,GutierrezHW15,GutierrezHW17-aij,KupfermanPV16} we want to check whether a property is satisfied on some/every Nash equilibrium computation run of a reactive, concurrent, and multi-agent system. These verification problems are primarily concerned with qualitative properties of a system, while assuming rationality of system components. However, little attention is paid to quantitative properties of the system. This drawback has been recently identified and some work has been done to cope with questions where both qualitative and quantitative concerns are considered~\cite{AKP18,BBFR13,ChatterjeeD12,CDHR10,ChatterjeeHJ05,GMPRW17,GNPW19,VCDHRR15}. Equilibrium design is new and different approach where this is also the case. More specifically, as in a mechanism design problem, through the introduction of an external principal -- the designer in the equilibrium design problem -- we can account for overall qualitative properties of a system (the principal's goal given by an \LTL or a \GRone specification) as well as for quantitative concerns (optimality of solutions constrained by the budget to allocate additional rewards/resources). Our framework also mixes qualitative and quantitative features in a different way: while system components are only interested in maximising a quantitative payoff, the designer is primarily concerned about the satisfaction of a qualitative (logic) property of the system, and only secondarily about doing it in a quantitatively optimal way. 

\paragraph*{Equilibrium design vs.\ repair games and normative systems -- connections with AI.}
In recent years, there has been an interest in the analysis of rational outcomes of multi-agent systems modelled as multi-player games. This has been done both with modelling and with verification purposes. In those multi-agent settings, where AI agents can be represented as players in a multi-player game, a focus of interest is on the analysis of (Nash) equilibria in such games~\cite{BouyerBMU15,GutierrezHW17-aij}. However, it is often the case that the existence of Nash equilibria in a multi-player game with temporal logic goals may not be guaranteed~\cite{GutierrezHW15,GutierrezHW17-aij}. For this reason, there has been already some work on the introduction of desirable Nash equilibria in multi-player games~\cite{AlmagorAK15,Perelli19}. This problem has been studied as a repair problem~\cite{AlmagorAK15} in which either the preferences of the players (given by winning conditions) or the actions available in the game are modified; the latter one also being achieved with the use of normative systems~\cite{Perelli19}. In equilibrium design, we do not directly modify the preferences of agents in the system, since we do not alter their goals or choices in the game, but we indirectly influence their rational behaviour by incentivising players to visit, or to avoid, certain states of the overall system. We studied how to do this in an (individually) optimal way with respect to the preferences of the principal in the equilibrium design problem. However, this may not always be possible, for instance, because the principal's temporal logic specification goal is just not achievable, or because of constraints given by its limited budget. 

\paragraph*{Future work: social welfare requirements and practical implementation.}
As discussed before, a key difference with mechanism design is that social welfare requirements are not considered~\cite{MSZ13}. However, a benevolent principal might not see optimality as an individual concern, and instead consider the welfare of the players in the design of a subsidy scheme. In that case, concepts such as the  \textit{utilitarian social welfare} may be undesirable as the social welfare maximising the payoff received by players might allocate all the budget to only one player, and none to the others. A potentially better option is to improve fairness in the allocation of the budget by maximising the \textit{egalitarian social welfare}. Finally, given that the complexity of equilibrium design is much better than that of rational synthesis/verification, we should be able to have efficient implementations, for instance, as an extension of EVE~\cite{GutierrezNPW18}. 

\bibliography{incentive}

%\newpage 
\appendix
\section{Proofs}

\paragraph*{Proof of Theorem \ref{thm:pthfinding}}
\begin{proof}%[Proof sketch]
	%	\gpnote{This is similar to the proof given for the lexicographic paper. I think we can only sketch it in the submission.}\\
	%	\mnnote{Added this proof sketch.}\\
	%	\gpnote{Small improvement for the (2) implies (1).}\\
	For (1) implies (2): Let $ z_i $ be the largest value player $ i $ can get by deviating from $ \pi $. Let $ k \in \SetN $ be such that $ z_i = \pun_i(\trnFun(s_k,(\jact_{-i},\act'_i))) $. Suppose further that $ \pay_i(\pi) < z_i $. Thus, player $ i $ would deviate at $ s_k $, which is a contradiction to $ \pi $ being a path induced by a Nash equilibrium.
	
	For (2) imples (1): Define strategy profile $ \strpElm $ that follows $ \pi $ as long as no-one has deviated from $ \pi $.
	In such a case where player $ i $ deviates on the $k$-th iteration, the strategy profile $\strpElm_{-i}$ starts playing the $z_i$-secure strategy for player $i$ that guarantees the payoff of player $i$ to be less than $z_i$. Therefore, we have $\pay_i(\pi(\strpElm_{-i},\strElm'_{i})) \leq z_i \leq \pay_i(\pi)$, for every possible strategy $\strElm'_{i}$ of player $i$ (the second inequality is due to condition $2(b)$). Thus, there is no beneficial deviation for player $ i $ and $ \pi $ is a path induced by a Nash equilibrium.
\end{proof}

\paragraph*{Proof of Proposition \ref{prop:kbound}}
\begin{proof}
	
	For a fixed budget $b$, the number of subsidy schemes of budget exactly $b$ corresponds to the number of \emph{weak compositions} of $b$ in $m$ parts, which is given by $\binom{b+m-1}{b}$~\cite{HT09}.
	Therefore, the number of subsidy schemes of budget at most $\beta$ is the sum
	
	$$
	\card{\K(\Game,\beta)} = \sum_{b=0}^{\beta}\binom{b+m-1}{b}\text{.}
	$$
	
	We now prove that 
	
	$$\sum_{b=0}^{\beta}\binom{b+m-1}{b} = \frac{\beta + 1}{m} \binom{\beta + m}{\beta + 1}\text{.}$$
	
	By induction on $\beta$, as base case, for $\beta = 0$, we have that 
	
	$$
	\binom{\beta + m - 1}{\beta} = 1 = \frac{\beta + 1}{m} \binom{\beta +m}{\beta +1}\text{.}
	$$
	
	For the inductive case, let us assume that the assertion hold for some $\beta$ and let us prove for $\beta + 1$.
	We have the following:
	
	$$\sum_{b=0}^{\beta + 1}\binom{b+m-1}{b} = \sum_{b=0}^{\beta}\binom{b+m-1}{b} + \binom{\beta + m - \cancel{1} + \cancel{1}}{\beta + 1} = \frac{\beta + 1}{m} \binom{\beta + m}{\beta + 1} + \binom{\beta + m}{\beta + 1} \text{.}$$
	
	Therefore we have
	
	\begin{align*}
		\frac{\beta + 1}{m} \binom{\beta + m}{\beta + 1} + \binom{\beta + m}{\beta + 1} & = \binom{\beta + m}{\beta + 1} \left(\frac{\beta + 1}{m} + 1\right) = \\
		\binom{\beta + m}{\beta + 1} \frac{\beta + 1 + m}{m} 
		& = \frac{\beta + 1 + m}{m} \cdot \frac{(\beta + m)!}{(\beta + 1)!(\cancel{\beta} + m - \cancel{\beta} - 1)!} = \\
		\frac{(\beta + m + 1)!}{(\beta + 1)!m!} = \frac{(\beta + m + 1)!}{(\beta + 1)!m!}\cdot \frac{\beta + 2}{\beta + 2} \cdot \frac{m}{m} & = \frac{\beta + 2}{m} \cdot \frac{(\beta + m + 1)!}{(\beta + 2)! (m - 1)!} = \\
		\frac{\beta + 2}{m} \cdot \frac{(\beta + m + 1)!}{(\beta + 2)! (\beta + m + 1 - \beta - 2)!} & = \frac{\beta + 2}{m} \binom{\beta + m + 1}{\beta + 2}
	\end{align*}
	
	that proves the assertion.
\end{proof}

\paragraph*{Proof of Theorem~\ref{thm:weak-gr1}}
\begin{proof}
	We will define a linear program of size polynomial in $(\Game,\kappa)$ having a solution if and only if there exists an ultimately periodic path $ \pi $ such that $z_i \leq \pay_i(\pi) $ and satisfies the \GRone specification. 
	
	Recall that $\varphi$ has the following form
	$$
	\varphi = \bigwedge_{l = 1}^{m} \always \sometime \psi_{l} \to \bigwedge_{r = 1}^{n} \always \sometime \theta_{r}\text{,}
	$$
	and let $V(\psi_{l})$ and $V(\theta_r)$ be the subset of states in $(\Game,\kappa)$ that satisfy the boolean combinations $\psi_{l}$ and $\theta_{r}$, respectively. Observe that property $\varphi$ is satisfied over a path $\pi$ if, and only if, either $\pi$ visits every $V(\theta_r)$ infinitely many times or visits some of the $V(\psi_{l})$ only a finite number of times.
	
	For the game $(\Game,\kappa){[z]}$, let $ W = (V, E, (\wFun_a)_{a \in \Ag})$ be the underlying multi-weighted graph, and for every edge $e\in E$ introduce a variable $x_e$. Informally, the value $x_e$ is the number of times that the edge $e$ is used on a cycle. Formally, let $\src(e) = \{v \in V : \exists w\, e = (v,w) \in E\}$; $\trg(e) = \{v \in V : \exists w\, e = (w,v) \in E\}$; $\OUT(v) = \{e \in E : \src(e) = v\}$; and $\IN(v) = \{e \in E : \trg(e) = v\}$.
	
	Consider $\psi_{l}$ for some $1 \leq l \leq m$, and define the linear program $\LP(\psi_{l})$ with the following inequalities and equations:
	\begin{enumerate}
		\item[Eq1:]
		$x_e \geq 0$ for each edge $e$ %--- this is 
		--- a basic consistency criterion;
		
		\item[Eq2:]
		$\Sigma_{e \in E} x_e \geq 1$ %--- this 
		--- ensures that at least one edge is chosen;
		
		\item[Eq3:]
		for each $a \in \Ag$, $\Sigma_{e \in E} \wFun_a(\src(e)) x_e \geq 0$ --- this enforces that the total sum of any solution is non-negative;
		
		\item[Eq4:]
		$\Sigma_{\src(e) \cap V(\psi_{l}) \neq \emptyset} x_e = 0$ --- this ensures that no state in $V(\psi_{l})$ is in the cycle associated with the solution;
		
		\item[Eq5:]
		for each $v \in V$, $\Sigma_{e \in \OUT(v)} x_e = \Sigma_{e \in \IN(v)} x_e$  --- this condition says that the number of times one enters a vertex is equal to the number of times one leaves that vertex.	
	\end{enumerate}
	
	By construction, it follows that $\LP(\psi_{l})$ admits a solution if and only if there exists a path $\pi$ in $\Game$ such that $z_i \leq \pay_i(\pi) $ for every player $i$ and visits $V(\psi_{l})$ only {\em finitely many times}.
	In addition, consider the linear program $\LP(\theta_{1}, \ldots, \theta_{n})$ defined with the following inequalities and equations: 
	
	\begin{enumerate}
		\item[Eq1:]
		$x_e \geq 0$ for each edge $e$ --- %this is 
		a basic consistency criterion;
		
		\item[Eq2:]
		$\Sigma_{e \in E} x_e \geq 1$ --- %this 
		ensures that at least one edge is chosen;
		
		\item[Eq3:]
		for each $a \in \Ag$, $\Sigma_{e \in E} \wFun_a(\src(e)) x_e \geq 0$ --- this enforces that the total sum of any solution is non-negative;
		
		\item[Eq4:]
		for all $1 \leq r \leq n$, $\Sigma_{\src(e) \cap V(\theta_{r}) \neq \emptyset} x_e \geq 1$ --- this ensures that for every $V(\theta_{r})$ at least one state is in the cycle;
		
		\item[Eq5:]
		for each $v \in V$, $\Sigma_{e \in \OUT(v)} x_e = \Sigma_{e \in \IN(v)} x_e$  --- this %``preservation'' 
		condition says that the number of times one enters a vertex is equal to the number of times one leaves that vertex.	
	\end{enumerate}
	
	In this case, $\LP(\theta_{1}, \ldots, \theta_{n})$  admits a solution if and only if there exists a path $\pi$ such that $z_i \leq \pay_i(\pi) $ for every player $i$ and visits every $V(\theta_{r})$ {\em infinitely many times}.
	
	Since the constructions above are polynomial in the size of both $(\Game,\kappa)$ and $\phi$, we can conclude it is possible to check in \np the statement that there is a path $\pi$ satisfying $\varphi$ such that $z_i \leq \pay_i(\pi) $ for every player~$i$ in the game if and only if one of the two linear programs defined above has a solution.
		
	For the lower-bound, we use \cite{UW11} and observe that if $\phi$ is true and $ \beta = 0 $, then the problem is equivalent to checking whether the \MP game has a Nash equilibrium. 
\end{proof}

\paragraph*{Proof of Theorem \ref{thm:strong-gr1}}
\begin{proof}
	For membership, observe that by rearranging the problem statement, we have the following question:\\
	Check whether the following expression is true
	\begin{ceqn}
		\begin{align*}
		\exists \kappa \in \K(\Game,\beta)&,\tag{1}\\
		&\exists \strpElm \in \strElm_1 \times \cdots \times \strElm_n,~\text{such that}~\strpElm \in \NE(\Game,\kappa),\tag{2}\\
		~\text{and}&\\
		&\forall \strpElm' \in \strElm_1 \times \cdots \times \strElm_n,~\text{if}~\strpElm' \in \NE(\Game,\kappa)~\text{then}~\pi(\strpElm') \models \phi.\tag{3}
		\end{align*}
	\end{ceqn}
	Statement $ (2) $ can be checked in \np (Theorem \ref{thm:pthfinding}). Whereas, verifying statement $ (3) $ is in co\np; to see this, notice that we can rephrase $ (3) $ as follows: $ \not \exists z \in \{ \pun_i(s) : s \in \St \}^{\Ag}  $ such that both $ \LP(\psi_{l}) $ and $ \LP'(\theta_1,\dots,\theta_{n}) $ have a solution in $ (\Game,\kappa){[z]} $. Thus $ \SigmaPTwo $ membership follows.

	We prove hardness by a reduction from $ \QSAT_2 $ (satisfiability of quantified Boolean formula with 2 alternations) \cite{1994-papadimitriou}. Let $ \psi(\setx,\sety) $ be an $ n + m $ variable Boolean 3DNF formula, where $ \setx = \{x_1,\dots,x_n \} $ and $ \sety = \{ y_1,\dots, y_n \} $, with $ t_1,\dots,t_k $ terms. Write $ \term_j $ for the set of literals in $ j $-th term and $ \term_j^i $ for the $ i $-th literal in $ \term_j $. Moreover write $ x^j_i $ and $ y^j_i $ for variable $ x_i \in \setx $ and $ y_i \in \sety $ that appears in $ j $-th term, respectively. For instance, if the fifth term is of the form of $ (x_2 \wedge \lnot x_3 \wedge y_4) $, then we have $ \term_5 = \{ x^5_2, x^5_3, y^5_4 \} $ and $ \term_5^1 = x^5_2 $. Let $ \T = \{ \term_i \cap \sety : 1 \leq i \leq k \} $, that is, the set of subset of $ \term_i $ that contains only $ y $-literals.
	
	For a formula $ \psi(\setx,\sety) $ we construct an instance of \strong such that $ \SI(\Game,\phi,\beta) \neq \varnothing $ if and only if there is an $ \vecx \in \{0,1\}^n $ such that $ \psi(\setx,\sety) $ is true for every $ \vecy \in \{0,1\}^m $. Let $ \Game $ be such a game where
	
	\begin{itemize}
		\item $ \Ag = \{1,2\} $,
		\item $ \St = \{ \bigcup_{j \in [1,k]} (\term_j \times \{0,1\}^3) \} \cup \{\T \times \{0\}^3 \} \cup \{ \tuple{\source,\{0\}^3}, \tuple{\sink,\{0\}^3} \} $,
		\item $ s_0 = \source $,
		\item for each state $ s \in \St $
		\begin{itemize}
			\item $ \Ac_1(s) = \{ \{ \T \cup \{\sink \} \} \times \{0\}^3 \} $, $ \Ac_2(s) = \{ \varepsilon \} $, if $ s = \tuple{\source,\{0\}^3} $,
			\item $ \Ac_1(s) = \{ \term^1_i : s[0] \subseteq \term_i \wedge i \in [1,k] \} $, $ \Ac_2(s) = \{ 0,1 \}^{3}$, if $ s \in \{\T \times \{0\}^3 \} $,
			\item $ \Ac_1(s) = \{ \varepsilon \} $, $ \Ac_2(s) = \{ \varepsilon \} $, if $ s \in \bigcup_{j \in [1,k]} (\term_j \times \{0,1\}^3) $,
		\end{itemize}
		\item for an action profile $ \jact = (\act_{1},\act_{2}) $
		\begin{itemize}
			\item $ \trnFun(s,\jact) = \act_{1} $, if $ s = \tuple{\source,\{0\}^3} $,
			\item $ \trnFun(s,\jact) = \tuple{\act_{1},\act_{2}} $, if $ s \in \{\T \times \{0\}^3 \} $,
			\item $ \trnFun(s,\jact) = \tuple{\term_j^{(i \bmod 3)+1 },s[1]} $, if $ s = \tuple{\term^i_j,s[1]} \in \bigcup_{j \in [1,k]} (\term_j \times \{0,1\}^3) $;
			\item $ \trnFun(s,\jact) = s $, otherwise;
		\end{itemize}
		\item for each state $ s \in \St, \labFun(s) = s[0] $,
		\item for each state $ s \in \St $
		\begin{itemize}
			\item $ \wFun_1(s) = \frac{2}{3} $, if $ s[0] = \sink $\footnote{This can be implemented by a macrostate with three substates---2 substates with weight of 1, and 1 with weight of 0---forming a simple cycle.},
			\item $ \wFun_1(s) = 0 $, otherwise;
		\end{itemize}
		\item the payoff of player $ i \in \Ag $ for an ultimately periodic path $ \pi $ in $ \Game $ is
		\begin{itemize}
			\item $ \pay_1(\pi) = \MP(\wFun_1(\pi)) $,
			\item $ \pay_2(\pi) = - \MP(\wFun_1(\pi)) $,
		\end{itemize}
	\end{itemize}
	Furthermore, let $ \beta = |\setx| $ and the \GRone property to be $ \phi := \always \sometime~\lnot \sink $. Define a (partial) subsidy scheme $ \kappa : \setx \to \{0,1\} $. The weights are updated with respect to $ \kappa $ as follows:\\	
	for each $ s \in \St $ such that $ s[0] \in \term_j \setminus \sety $, that is, an $ x $-literal that appears in term $ t_j $
	\[
	\wFun_1(s) =
	\begin{cases}
	1, & \text{if } \kappa(s) = 1 \wedge s[0]~\text{is not negated in}~t_j \\
	1, & \text{if } \kappa(s) = 0 \wedge s[0]~\text{is negated in}~t_j \\
	0, & \text{if } \kappa(s) = 1 \wedge s[0]~\text{is negated in}~t_j \\
	%	0, & \text{if } \kappa_1(\langle v,\top \rangle) = 0 \wedge v \not\in S \\
	0, & \text{otherwise;}
	\end{cases}
	\]
	for each $ s \in \St $ such that $ s[0] \in \term_j \setminus \setx $, that is, a $ y $-literal that appears in term $ t_j $, $ s[0] = \term^i_j $
	\[
	\wFun_1(s) =
	\begin{cases}
	1, & \text{if } s[1][i] = 1 \wedge s[0]~\text{is not negated in}~t_j \\
	1, & \text{if } s[1][i] = 0 \wedge s[0]~\text{is negated in}~t_j \\
	0, & \text{if } s[1][i] = 1 \wedge s[0]~\text{is negated in}~t_j \\
	%	0, & \text{if } \kappa_1(\langle v,\top \rangle) = 0 \wedge v \not\in S \\
	0, & \text{otherwise;}
	\end{cases}
	\]
	the weights of other states remain unchanged.
	
	The construction is now complete, and polynomial to the size of formula $ \psi(\setx,\sety) $. We claim that $ \SI(\Game,\phi,\beta) \neq \varnothing $ if and only if there is an $ \vecx \in \{0,1\}^n $ such that $ \psi(\setx,\sety) $ is true for every $ \vecy \in \{0,1\}^m $. From left to right, consider a subsidy scheme $ \kappa \in \SI(\Game,\phi,\beta) $ which implies that there exists no Nash equilibrium run in $ (\Game,\kappa) $ that ends up in $ \sink $. This means that for every action $ \jact_{2} \in \Ac_2(s), $ there exists $ \jact_1 \in \Ac_1(s), s \in \{\T \times \{0\}^3 \} $, such that $ \pay_1(\pi) = 1 $, where $ \pi $ is the resulting path of the joint action. Observe that this corresponds to the existence of (at least) a term $ t_i $, which evaluates to true under assignment $ \vecx $, regardless the value of $ \vecy $. From right to left, consider an assigment $ \vecx \in \{0,1\}^n $ such that for all $ \vecy \in \{0,1\}^m $, the formula $ \psi(\setx,\sety) $ is true. This means that for every $ \vecy $, there exists (at least one) term $ t_i $ in $ \psi(\setx,\sety) $ that evaluates to true. By construction, specifically the weight updating rules, for every $ \jact_{2} $ corresponding to assignment $ \vecy $, there exists $ \term_j $ such that $ \forall i \in [1,3], \wFun_1(\term^i_j) = 1 $. This means that player 1 can always get payoff equals to 1, therefore, any run that ends in $ \sink $ is not sustained by Nash equilibrium.
\end{proof}

\paragraph*{Proof of Proposition~\ref{lem:opt-bound}}
\begin{proof}
	The lower-bound is straightforward. The upper-bound follows from the fact that the maximum value the principal has to pay to player $ i $ is when the path $ \pi $ is a simple cycle and formed from all states in $ \St $, apart from 1 deviation state. 
\end{proof}

\paragraph*{Proof of Theorem \ref{thm:optwi-ltl}}
\begin{proof}
	Since the search space is bounded (Proposition \ref{lem:opt-bound}), by using \weak an an oracle we can iterate through every instance and return the smallest $ \beta $ such that $ \WI(\Game,\phi,\beta) \neq \varnothing $. Moreover, each instance is of polynomial size in the size of the input. Thus membership in \pspace follows. Hardness is straightforward.
\end{proof}

\paragraph*{Proof of Theorem \ref{thm:optwi-gr1}}
\begin{proof}
	Membership follows from the fact that the search space, which is bounded as in Proposition \ref{lem:opt-bound}, can be fully explored using binary search and \weak as an oracle. More precisely, we can find the smallest budget $ \beta $ such that $ \WI(\Game,\phi,\beta) \neq \varnothing $ by checking every possible value for $ \beta $, which lies between 0 and $ 2^n $, where $ n $ is the length of the encoding of the instance. Since we need logarithmically many calls to the \np oracle (to \weak), in the end we have searching procedure that runs in polynomial time. 
	
	For hardness we reduce from \textsc{TSP Cost} (optimal travelling salesman problem) that is known to be $ \FP^{\np} $-complete \cite{1994-papadimitriou}. Given a \textsc{TSP Cost} instance $ \tuple{G, c} $, $ G = \tuple{V, E} $ is a graph, $ c : E \to \SetZ $ is a cost function. We assume that $ \WI(\Game,\phi,\beta) $ is efficient. To encode \textsc{TSP Cost} instance, we construct a game $ \Game $ and \GRone formula $ \phi $, such that the optimum budget $ \beta $ corresponds to the value of optimum tour. Let $ \Game $ be such a game where
	\begin{itemize}
		\item $ \Ag = \{ 1 \} $,
		\item $ \St = \{ \tuple{v,e} : v \in V \wedge e \in \IN(v) \} \cup \{ \tuple{\sink,\varepsilon} \} $,
		\item $ s_0 $ can be chosen arbitrarily from $ \St \setminus \{\tuple{\sink,\varepsilon}\} $,
		\item for each state $ \tuple{v,e} \in \St $ and edge $ e' \in E \cup \{\varepsilon\} $
		\begin{itemize}
			\item $ \trnFun(\tuple{v,e},e') = \tuple{\trg(e'),e'}, \text{if}~v \neq \sink~\text{and}~e' \neq \varepsilon, $
			\item $ \trnFun(\tuple{v,e},e') = \tuple{\sink,\varepsilon}, \text{otherwise}; $
		\end{itemize}			
		\item for each state $ \tuple{v,e} \in \St $
		\begin{itemize}
			\item $ \wFun_1(\tuple{v,e}) = \max\{c(e'): e' \in E \} - c(e) $, if $ v \neq \sink $,
			\item $ \wFun_1(\tuple{v,e}) = \max\{c(e'): e' \in E \} $, otherwise;
		\end{itemize}
		\item the payoff of player 1 for a path $ \pi $ in $ \Game $ is $ \pay_1(\pi) = \MP[\wFun_1(\pi)] $,
		%			\begin{itemize}
		%				\item $ \pay_1(\pi) = \MP[\wFun_1(\pi)] $, if there exists no~$ \tuple{v,\ast} $ that occurs more than once in $ \pi $,
		%				\item $ \pay_1(\pi) = \MP[\wFun_1(\pi)] - (\max\{c(e'): e' \in E \} \cdot |V|) $, otherwise;
		%			\end{itemize}
		\item for each state $ \tuple{v,e} \in \St $, the set of actions available to player 1 is $ \OUT(v) \cup \{\varepsilon\} $,
		\item for each state $ \tuple{v,e} \in \St $, $ \labFun(\tuple{v,e}) = v $.
		%		\item $ \phi := \bigwedge_{v \in V} \always \sometime~v $.
		%		\item $ \beta = \max\{c(e'): e' \in E \} \cdot |V| $.
	\end{itemize}
	Furthermore, let $ \phi := \bigwedge_{v \in V} \always \sometime~v $. The construction is now complete, and is polynomial to the size of $ \tuple{G, c} $.

	Now, consider the smallest $ \cost(\kappa), \kappa \in \WI(\Game,\phi,\beta) $. We argue that $ \cost(\kappa) $ is indeed the lowest value such that a tour in $ G $ is attainable. Suppose for contradiction, that there exists $ \kappa' $ such that $ \cost(\kappa') < \cost(\kappa) $. Let $ \pi' $ be a path in $ (\Game,\kappa') $ and $ z_1 = \wFun_1(\tuple{\sink,\varepsilon}) $ the largest value player 1 can get by deviating from $ \pi' $. We have $ \pay_1(\pi') < z_1 $, and since for every $ \tuple{v,e} \in \St $ there exists an edge to $ \tuple{\sink,\varepsilon} $, thus player 1 would deviate to $ \tuple{\sink,\varepsilon} $ and stay there forever. This deviation means that $ \phi $ is not satisfied, which is a contradiction to $ \kappa' \in \WI(\Game,\phi,\beta) $. The construction of $ \phi $ also ensures that the path is a valid tour, i.e., the tour visits every city at least once. Notice that $ \phi $ does not guarantee a Hamiltonian cycle. However, removing the condition of visiting each city \textit{only once} does not remove the hardness, since \textit{Euclidean} \textsc{TSP} is \np-hard \cite{GGJ1976,PAPADIMITRIOU1977237}. Therefore, in the planar case there is an optimal tour that visits each city only once, or otherwise, by the triangle inequality, skipping a repeated visit would not increase the cost. Finally, since $ \WI(\Game,\phi,\beta) $ is efficient, we have $ \beta $ to be exactly the value of the optimum tour in the corresponding \textsc{TSP Cost} instance.
\end{proof}

\paragraph*{Proof of Theorem \ref{thm:optsi-ltl}}
\begin{proof}
	The proof is analogous to that of Theorem \ref{thm:optwi-ltl}.
\end{proof}

\paragraph*{Proof of Theorem \ref{thm:optsi-gr1}}
\begin{proof}
	Membership uses arguments analogous to those in Theorem \ref{thm:optwi-gr1}. For hardness, we reduce $ \MWQSAT_2 $ to \optsi using the same techniques used in Theorem \ref{thm:strong-gr1} with few modifications. Given a \\$ \MWQSAT_2 $ instance $ \tuple{\psi(\setx,\sety),\cFun} $, we construct a game $ \Game $ and \GRone formula $ \phi $, such that the optimum budget $ \beta $ corresponds to the value of optimal solution to $ \tuple{\psi(\setx,\sety),\cFun} $. To this end, we may assume that $ \SI(\Game,\phi,\beta) $ is efficient and construct $ \Game $ with exactly the same rules as in Theorem \ref{thm:strong-gr1} except for the following:
	\begin{itemize}
		\item clearly the value of $ \beta $ is unknown,
		\item the initial weight for each state $ s \in \St $
		\begin{itemize}
			\item $ \wFun_1(s) = \frac{2}{3} $, if $ s[0] = \sink $,
			%			\item $ \wFun_1(s) = - \cFun(s[0]) + 1 $, if $ s[0] \in \setx $,
			\item
			\[
			\wFun_1(s) =
			\begin{cases}
			- \cFun(s[0]) + 1, & \text{if } s[0] \in \term_j \setminus \sety \wedge s[0] ~\text{is not negated in}~t_j \\
			1, & \text{if } s[0] \in \term_j \setminus \sety \wedge s[0]~\text{is negated in}~t_j; \\
			\end{cases}
			\]
			\item $ \wFun_1(s) = 0 $, otherwise;
		\end{itemize}
		\item given a subsidy scheme $ \kappa $, we update the weight for each $ s \in \St $ such that $ s[0] \in \term_j \setminus \sety $, that is, an $ x $-literal that appears in term $ t_j $
		\[
		\wFun_1(s) =
		\begin{cases}
		\wFun_1(s) + \kappa(s), & \text{if } s[0]~\text{is not negated in}~t_j \\
		%		1, & \text{if } \kappa(s) = 0 \wedge s[0]~\text{is negated in}~t_j \\
		%		0, & \text{if } \kappa(s) = 1 \wedge s[0]~\text{is negated in}~t_j \\
		%		%	0, & \text{if } \kappa_1(\langle v,\top \rangle) = 0 \wedge v \not\in S \\
		\wFun_1(s), & \text{otherwise;}
		\end{cases}
		\]
		%		\mnnote{Need to fix this...}
	\end{itemize}
	the construction is complete and polynomial to the size of $ \tuple{\psi(\setx,\sety),\cFun} $.
	
	Let $ o $ be the optimal solution to $ \MWQSAT_2 $ given the input $ \tuple{\psi(\setx,\sety),\cFun} $. We claim that $ \beta $ is exactly $ o $. To see this, consider the smallest $ \cost(\kappa)$, $ \kappa \in \SI(\Game,\phi,\beta) $. We argue that this is indeed the least total weight of an assignment $ \vecx $ such that $ \psi(\setx,\sety) $ is true for every $ \vecy $. Assume towards a contradiction that $ \cost(\kappa) < o $. By the construction of $ \wFun_1(\cdot) $, there exists no $ \pi $ such that $ \pay_1(\pi) > \frac{2}{3} $. Therefore, any run $ \pi' $ that ends up in $ \sink $ is sustained by Nash equilibrium, which is a contradiction to $ \kappa \in \SI(\Game,\phi,\beta) $. Now, since $ \SI(\Game,\phi,\beta) $ is efficient, by definition, there exists no $ \kappa' \in \SI(\Game,\phi,\beta) $ such that $ \cost(\kappa') < \cost(\kappa) $. Thus we have $ \beta $ equals to $ o $ as required.
\end{proof}

\newpage 

\section{Algorithms}\label{apd:algos}

%The following Algorithm \ref{alg:weak} solves the problem of \weak with \LTL specification formula $ \phi $ in \pspace.

\begin{algorithm}[h]
	\caption{\label{alg:weak} \weak.}
	\textbf{Input}: A game $\Game$, a specification formula $\varphi$, and budget $ \beta $.
	
	\For{$ \kappa \in \K(\Game,\beta), s \in \St,$ {\normalfont and} $ (\strpElm_{-1},\dots,\strpElm_{-n}) \in \bigtimes_{j \in \Ag} (\times_{i \in \Ag \setminus {j}} \strElm_i) $}{
		Compute $ (\Game,\kappa)$\\
%		Compute $z \in \SetR^{\Ag} $\\
		\For{$ i \in \Ag $}{
		Compute $ z_i = \pun_i(s) $ using $ \strpElm_{-i} $
		}
		Compute $ (\Game,\kappa){[z]} $\\
%			\If{{\normalfont there is} $ \pi \in (\Game,\kappa){[z]} $ {\normalfont such that for each} $ i \in \Ag, \pay_i(\pi) \geq z_i \wedge \pi \models \phi $} {
			\If{{\normalfont there is} $ \strpElm \in \NE((\Game,\kappa){[z]}) $ {\normalfont such that} $ \pi(\strpElm) \models \phi $} {
				\Return $\mthfun{Accept}$
			}

	}	
	
	\Return $\mthfun{Reject}$
\end{algorithm}

%Algorithm \ref{alg:strong} summarises the procedure for \strong.

\begin{algorithm}[h]
	\caption{\label{alg:strong} \strong.}
	
	\textbf{Input}: A game $\Game$, a specification formula $\varphi$, and budget $ \beta $.
	
	\For{$ \kappa \in \K(\Game,\beta)$}{
		Compute $ (\Game,\kappa)$\\
		%		Compute $z \in \SetR^{\Ag} $\\
		\For{$ i \in \Ag $ {\normalfont and} $ s \in \St$}{
			Compute $ \pun_i((\Game,\kappa)) $
		}
		
		$f_{\exists} \gets \bot $;
		$f_{\forall} \gets \top $\\
		
		\For{${z} \in \{\pun_i(s): s \in \St\}^{\Ag}$}{
			Compute $(\Game,\kappa){[z]}$\\
			
			\If{{\normalfont there exists} $ \pi \in (\Game,\kappa){[z]}$ {\normalfont such that for each} $ i \in \Ag, \pay_i(\pi) \geq z_i $}{
				%				\Return $ \mthfun{Reject}$
				$f_{\exists} \gets \top $\\
			}
		}
		
		\For{${z} \in \{\pun_i(s): s \in \St\}^{\Ag}$}{	
			\If{{\normalfont there exists} $\pi \in (\Game,\kappa){[z]}$ {\normalfont such that for each} $ i \in \Ag, \pay_i(\pi) \geq z_i \wedge \pi \models \lnot \phi $}{
				$f_{\forall} \gets \bot $\\
			}
		}
		\If{$ (f_{\exists} \wedge f_{\forall}) $}{\Return $ \mthfun{Accept} $}
		
	}	
	
	\Return $\mthfun{Reject}$
\end{algorithm}

\end{document}